\DeclareMathOperator{\E}{\mathbb{E}}
\DeclareMathOperator{\V}{\mathbb{V}}
\DeclareMathOperator*{\argmax}{argmax} 
\DeclareMathOperator{\Length}{length}
\DeclareMathOperator{\N}{\mathrm{N}}
\DeclareMathOperator{\MN}{\mathrm{MN}}
\newtheorem{assumption}{Assumption}
\newcommand{\tr}{\mathrm{tr}}
\newcommand{\etr}{\mathrm{etr}}
\newcommand{\noop}[1]{}
\begin{document}

\title{Generalized probabilistic principal component analysis of correlated data}

\author{\name Mengyang Gu \email mengyang@pstat.ucsb.edu \\
       \addr Department of  Statistics and Applied Probability\\
       University of California, Santa Barbara\\
       5511 South Hall\\
       Santa Barbara, CA 93106-3110
       \AND
       \name Weining Shen \email weinings@uci.edu \\
       \addr Department of Statistics \\
      University of California, Irvine \\
      2206 Bren Hall \\
      Irvine, CA 92697-1250}

\editor{}

\maketitle

\begin{abstract}

Principal component analysis (PCA) is a well-established tool in machine learning and data processing. The principal axes in PCA were shown to be equivalent to the maximum marginal likelihood estimator of the factor loading matrix in a latent factor model for the observed data, assuming that the latent factors are independently distributed as standard normal distributions. However, the independence assumption may be unrealistic for many scenarios such as modeling multiple time series, spatial processes, and functional data, where the outcomes are correlated.  In this paper, we introduce the generalized probabilistic principal component analysis (GPPCA) to study the latent factor model for multiple correlated outcomes, where each factor is modeled by a Gaussian process. Our method generalizes the previous probabilistic formulation of PCA (PPCA)  by
providing the closed-form maximum marginal likelihood estimator of the factor loadings and other parameters.  Based on the explicit expression of the precision matrix in the marginal likelihood that we derived, the number of the computational operations is linear to the number of output variables. Furthermore, we also provide the closed-form expression of the marginal likelihood when   other covariates are included in the mean structure. We highlight the advantage of GPPCA in terms of the practical relevance, estimation accuracy and computational convenience.  Numerical studies of simulated and real data confirm the excellent finite-sample performance of the proposed approach. 

%



\end{abstract}

\begin{keywords}
Gaussian process, maximum marginal likelihood estimator, kernel method, principal component analysis, Stiefel manifold
\end{keywords}

\section{Introduction}

	\label{sec:Intro}
	Principal component analysis (PCA) is one of the oldest and most widely known approaches for dimension reduction. It has been used in many applications, including exploratory data analysis, regression, time series analysis, image processing, and functional data analysis. The most common solution of the PCA is to find a linear projection that transforms the set of original correlated variables onto a projected space of new uncorrelated variables by maximizing the variation of the projected space \citep{jolliffe2011principal}. This solution, despite its wide use in practice, lacks a probabilistic description of the data. 
	
	
	
	A probabilistic formulation of the PCA was first introduced by \cite{tipping1999probabilistic}, where the authors considered a Gaussian latent factor model, and then obtained the PCA (principal axes) as the solution of a maximum marginal likelihood problem, where the latent factors were marginalized out. This approach, known as the probabilistic principal component analysis (PPCA), assumes that the latent factors are independently distributed following a standard normal distribution. However, the independence assumption of the factors is usually too restrictive for many applications, where the variables of interest are correlated between different inputs, e.g. times series, images, and spatially correlated data. The latent factor model was extended to incorporate the dependent structure in  previous studies. For example, the linear model of coregionalization (LMC) was studied in modeling multivariate outputs of spatially correlated data  \citep{gelfand2004nonstationary,gelfand2010handbook}, where each factor is modeled by a Gaussian process (GP) to account for the spatial correlation in the data. When the factor loading matrix is shared, the LMC becomes a semiparameteric latent factor model, introduced in machine learning literature \citep{seeger2005semiparametric,alvarez2011kernels}, and was widely applied in emulating computationally expensive computer models with multivariate outputs \citep{higdon2008computer,fricker2013multivariate}, where each factor is modeled by a GP over a set of inputs, such as the physical parameters  of the partial differential equations. However, the PCA solution is no longer the maximum marginal likelihood estimator of the factor loading matrix when the factors at two inputs are correlated.
	
	
	
	
	
	 In this work, we propose a new approach called generalized probabilistic principal component analysis (GPPCA), as an extension of the PPCA  for the correlated output data. We assume each column of the factor loading matrix is orthonormal for the identifiability purpose. Based on this assumption, we  
	obtain a closed-form solution for the maximum marginal likelihood estimation of the factor loading matrix when the covariance function of the factor processes is shared. This result is an extension of the PPCA for the correlated factors, and the connection between these two approaches is studied. When the covariance functions of the factor processes are different, the maximum marginal likelihood estimation of the factor loading matrix is equivalent to an optimization problem with orthogonal constraints, sometimes referred as the Stiefel manifold. A fast numerical search algorithm on the Stiefel manifold is introduced by \cite{wen2013feasible} for the optimization problem.

    

	There are several approaches for estimating the factor loading matrix for the latent factor model and semiparameteric latent factor model in the Frequentist and Bayesian literature. One of the most popular approaches for estimating the factor loading matrix is  PCA (see e.g., \cite{bai2002determining,bai2003inferential,higdon2008computer}). Under the orthonormality assumption for the factor loading vectors, the PCA can be obtained from the maximum likelihood estimator of the factor loading matrix. However, the correlation structure of each factor is not incorporated for the estimation.  In \cite{lam2011estimation} and \cite{lam2012factor}, the authors considered estimating the factor loading matrix based on the sample covariance of the output data at the first several time lags when modeling high-dimensional time series. We will numerically compare our approach to the aforementioned Frequentist approaches. 

   Bayesian approaches have also been widely studied for factor loading matrix estimation.  \cite{west2003bayes7} points out the connection between PCA and a class of  generalized singular g-priors, and introduces a spike-and-slab prior that induces the sparse factors in the latent factor model assuming the factors are independently distributed. 
   When modeling spatially correlated data, priors are also discussed for the spatially varying factor loading matrices \citep{gelfand2004nonstationary} in LMC.
	The closed-form marginal likelihood obtained in this work is more computationally feasible than the previous results, as the inverse of the covariance matrix is shown to have an explicit form. 
	
	 
	 
	Our proposed method is also connected to the popular kernel approach, which has been used for nonlinear component analysis \citep{scholkopf1998nonlinear} by mapping the output data to a high-dimensional feature space through a kernel function. This method, known as the kernel PCA, is widely applied in various problems, such as the image analysis \citep{mika1999kernel}  and novelty detection \citep{hoffmann2007kernel}. However, the main focus of our method is to apply the kernel function for capturing the correlation of the outputs at different inputs (e.g. the time point, the location of image pixels or the physical parameters in the PDEs).

	
	We highlight a few contributions of this paper.  First of all, we derive the closed-form maximum marginal likelihood estimator (MMLE) of the factor loading matrix, when the factors are modeled by GPs. Note our expression of the marginal likelihood (after integrating out the factor processes) is more computationally feasible than the previous result, because the inverse of the covariance matrix is shown to have an explicit form, which makes the computational complexity linear to the number of output variables. Based on this closed-form marginal likelihood, we are able to obtain the MMLE of the other parameters, such as the variance of the noise and kernel parameters, and the predictive distribution of the outcomes. Our second contribution is that we provide a fully probabilistic analysis of the mean and other regression parameters, when some covariates are included in the mean structure of the factor model.  The empirical mean of data was often subtracted before applying PPCA and LMC \citep{tipping1999probabilistic,higdon2008computer}, which does not quantify the uncertainty when the output is linearly dependent on some covariates.  Here we manage to marginalize out the regression parameters in the mean structure explicitly without increasing the computational complexity. Our real data application examples demonstrate the improvements in out-of-sample prediction when the mean structure is incorporated in the data analysis. Lastly, the proposed  estimator of the factor loading matrix in GPPCA are closely connected to the PCA and PPCA, and we will discuss how the correlation in the factors affects the estimators of the factor loading matrix and predictive distributions. Both the simulated and real examples show the improved accuracy in estimation and prediction, when the output data are correlated.

	The rest of the paper is organized as follows. The main results of the closed-form marginal likelihood and the maximum marginal likelihood estimator of the factor loading matrix are introduced in Section \ref{subsec:GPPCA}. In Section \ref{subsec:par_est}, we provide the maximum marginal likelihood estimator for the noise parameter and kernel parameters, after marginalizing out the factor processes. Section \ref{subsec:mean_structure} discusses the estimators of the factor loading matrix and other parameters when some covariates are included in the model. The comparison between our approach and other approaches in estimating the factor loading matrix is studied in Section \ref{sec:comparison}, with a focus on the connection between GPPCA and PPCA. Simulation results are provided in Section \ref{sec:numerical_results}, for both the correctly specified and mis-specified models with unknown noise and covariance parameters. Two real data examples are shown in  Section \ref{sec:real_eg} and we conclude this work with discussion on several potential extensions in Section \ref{sec:conclusion}. 
	 
	
	\section{Main results}
		\label{sec:main_results}
   We state our main results in this section. In section \ref{subsec:GPPCA}, we derive a computationally feasible expression of the marginal distribution for the latent factor model after marginalizing out the factor processes, based on which we show the maximum marginal likelihood estimator of the factor loading matrix. In Section \ref{subsec:par_est}, we discuss the parameter estimation and predictive distribution. 
   We extend our method to study the factor model by allowing the intercept and additional covariates in the mean structure in Section \ref{subsec:mean_structure}.
   
   \subsection{Generalized probabilistic principal component analysis}
   \label{subsec:GPPCA}
      To begin with, let $\mathbf y(\mathbf x)= (y_1(\mathbf x),...,y_k(\mathbf x))^T$ be a $k$-dimensional real-valued output vector at a $p$-dimensional input vector $\mathbf x$.  Let $\mathbf Y=[\mathbf y(\mathbf x_1),..., \mathbf y(\mathbf x_n)]$ be a $k\times n$ matrix of the observations  at inputs $\{\mathbf x_1,...,\mathbf x_n\}$. In this subsection and the next subsection, we assume that each row of the $\mathbf Y$ is centered at zero. 
      


	Consider the following latent factor model 
	\begin{equation}
	\mathbf y(\mathbf x)= \mathbf A \mathbf z(\mathbf x) + \bm \epsilon,
	\label{equ:model_1}
	\end{equation}
	where   $\bm \epsilon \sim N(0, \sigma^2_0 \mathbf I_k)$ is a vector of  independent Gaussian noises, with $\mathbf I_k$ being the $k\times k$ identity matrix. The $k\times d$ factor loading matrix $\mathbf A=[\mathbf a_1,...,\mathbf a_d]$ relates the $k$-dimensional output to  $d$-dimensional factor processes $\mathbf z(\mathbf x)=(z_1(\mathbf x),...,z_d(\mathbf x))^T$, where $d \leq k$.

 In many applications, each output is correlated. For example, model (\ref{equ:model_1}) is widely used in analyzing multiple time series, where $y_l(x)$'s are correlated across different time points for every $l=1,...,k$. Model  (\ref{equ:model_1}) was also used for multivariate spatially correlated outputs, often referred as the linear model of coregionalization (LMC) \citep{gelfand2010handbook}.  In these studies,  each factor is modeled  by a zero-mean Gaussian process (GP), meaning that for any set of inputs $\{\mathbf x_1,...,\mathbf x_n \}$, $\mathbf Z_l=(z_l(\mathbf x_1),...,z_l(\mathbf x_n))$ follows a multivariate normal distribution  
	\begin{equation}
	\mathbf Z^T_l \sim \MN(\mathbf 0, \bm \Sigma_l),
	\end{equation} 
	where the $(i,\, j)$ entry of $\bm \Sigma_l$ is parameterized by a covariance function $\sigma^2_l K_l(\mathbf x_i, \mathbf x_j)$ for $l=1,...,d$ and $1\leq i,j\leq n$. We defer the discussion of the kernel in the Section~\ref{subsec:par_est}.

Note that the model (\ref{equ:model_1}) is unchanged if one replaces the pair $(\mathbf A, \mathbf z(\mathbf x))$ by $(\mathbf A \mathbf E, \mathbf E^{-1} \mathbf z(\mathbf x))$  for any invertible matrix $\mathbf E$. As pointed out in \cite{lam2011estimation}, only the $d$-dimensional linear subspace of $\mathbf A$, denoted as $\mathcal M(\mathbf A)$, can be uniquely identified, since $\mathcal M(\mathbf A)=\mathcal M(\mathbf A \mathbf E)$ for any invertible matrix $\mathbf E$. Due to this reason, we assume the columns of $\mathbf A$ in model (\ref{equ:model_1}) are orthonormal for identifiablity purpose \citep{lam2011estimation,lam2012factor}.
	 \begin{assumption}
	 \begin{equation}
	 \mathbf A^T \mathbf A=\mathbf I_d.
	 \label{equ:A}
	 \end{equation}
	 \label{assumption:A}
	 \end{assumption}
	Note the Assumption \ref{assumption:A} can be relaxed by assuming $ \mathbf A^T \mathbf A=c\mathbf I_d$ where $c$ is a positive constant which can potentially depend on $k$, e.g. $c=k$. As each factor process has the variance $\sigma^2_l$, typically estimated from the data, we thus derive the results based on Assumption \ref{assumption:A} herein. 
	
	Denote the vectorization of the output $\mathbf Y_v=\mathrm{vec}(\mathbf Y)$ and the $d\times n $ latent factor matrix $\mathbf Z=(\mathbf z(\mathbf x_1),...,\mathbf z(\mathbf x_n))$ at inputs $\{\mathbf x_1,...,\mathbf x_n\}$. We first give the marginal distribution of $\mathbf Y_v$ (after marginalizing out $\mathbf Z$) with an explicit inverse of the covariance matrix in  Lemma \ref{lemma:Y_v_inv_cov}.

\begin{lemma}
Under Assumption \ref{assumption:A}, the marginal distribution of $\mathbf Y_v$ in model (\ref{equ:model_1}) is the multivariate normal distribution as follows, 
\begin{align}
\mathbf Y_v \mid \mathbf A, \sigma^2_0, \bm \Sigma_{1},..., \bm \Sigma_{d} & \sim \MN\left(\mathbf 0, \,  \sum^d_{l=1} \bm \Sigma_l \otimes (\mathbf a_l \mathbf a^T_l)+ \sigma^2_0 \mathbf I_{nk} \right) 
\label{equ:Y_v_cov}  \\
 & \sim  \MN\left(\mathbf 0, \,   \sigma^2_0 \left(\mathbf I_{nk}- \sum^d_{l=1} (\sigma^2_0 \bm \Sigma_l^{-1}+\mathbf I_n )^{-1} \otimes (\mathbf a_l \mathbf a^T_l) \right)^{-1} \right).
\label{equ:Y_v_inv_cov}
\end{align} 
\label{lemma:Y_v_inv_cov}
\end{lemma}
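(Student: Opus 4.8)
The plan is to first rewrite model~(\ref{equ:model_1}) in stacked and vectorized form, from which the normality of $\mathbf Y_v$ and the covariance~(\ref{equ:Y_v_cov}) can be read off essentially by inspection, and then to obtain~(\ref{equ:Y_v_inv_cov}) by verifying that the two displayed covariance matrices are mutual inverses --- a computation that collapses precisely because of the orthonormality in Assumption~\ref{assumption:A}.

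First I would stack~(\ref{equ:model_1}) over the design points to get $\mathbf Y=\mathbf A\mathbf Z+\mathbf E=\sum_{l=1}^d\mathbf a_l\mathbf Z_l+\mathbf E$, where $\mathbf E$ is the $k\times n$ matrix of independent $\N(0,\sigma_0^2)$ noises and $\mathbf Z_l$ is the $l$-th row of $\mathbf Z$. Vectorizing column-wise and using $\mathrm{vec}(\mathbf a_l\mathbf Z_l)=\mathbf Z_l^T\otimes\mathbf a_l$ gives $\mathbf Y_v=\sum_{l=1}^d(\mathbf Z_l^T\otimes\mathbf a_l)+\mathrm{vec}(\mathbf E)$. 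Since the $d$ factor processes are independent zero-mean Gaussian processes with $\mathbf Z_l^T\sim\MN(\mathbf 0,\bm\Sigma_l)$ and are independent of $\mathbf E$, the vector $\mathbf Y_v$ is an affine image of a jointly Gaussian vector, hence $\MN(\mathbf 0,\cdot)$; computing $\mathrm{Cov}(\mathbf Z_l^T\otimes\mathbf a_l)=\bm\Sigma_l\otimes(\mathbf a_l\mathbf a_l^T)$ and noting that the cross-covariances between distinct $l$ and between the factor and noise terms all vanish then yields~(\ref{equ:Y_v_cov}).

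For~(\ref{equ:Y_v_inv_cov}) I would set $\mathbf C=\sum_{l=1}^d\bm\Sigma_l\otimes(\mathbf a_l\mathbf a_l^T)+\sigma_0^2\mathbf I_{nk}$ and $\mathbf M=\mathbf I_{nk}-\sum_{l=1}^d(\sigma_0^2\bm\Sigma_l^{-1}+\mathbf I_n)^{-1}\otimes(\mathbf a_l\mathbf a_l^T)$ --- each $\bm\Sigma_l$ being positive definite, all the inverses exist --- and show $\mathbf C\mathbf M=\sigma_0^2\mathbf I_{nk}$, which simultaneously proves $\mathbf M$ nonsingular and $\mathbf C=\sigma_0^2\mathbf M^{-1}$. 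Expanding the product with the Kronecker rule $(\mathbf B_1\otimes\mathbf a_l\mathbf a_l^T)(\mathbf B_2\otimes\mathbf a_m\mathbf a_m^T)=(\mathbf B_1\mathbf B_2)\otimes(\mathbf a_l\mathbf a_l^T\mathbf a_m\mathbf a_m^T)$ and invoking $\mathbf a_l^T\mathbf a_m=\delta_{lm}$ from Assumption~\ref{assumption:A}, so that $\{\mathbf a_l\mathbf a_l^T\}$ are orthogonal idempotents, all cross terms with $l\neq m$ disappear and the identity reduces, term by term in $\mathbf a_l\mathbf a_l^T$, to the $n\times n$ equalities
\[
\bm\Sigma_l-(\bm\Sigma_l+\sigma_0^2\mathbf I_n)(\sigma_0^2\bm\Sigma_l^{-1}+\mathbf I_n)^{-1}=\mathbf 0,\qquad l=1,\dots,d,
\]
which hold because $(\sigma_0^2\bm\Sigma_l^{-1}+\mathbf I_n)^{-1}=(\bm\Sigma_l+\sigma_0^2\mathbf I_n)^{-1}\bm\Sigma_l$.

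I do not expect a genuine obstacle: the content is bookkeeping with Kronecker products and the column-stacking convention, and the one load-bearing observation is that Assumption~\ref{assumption:A} turns $\{\mathbf a_l\mathbf a_l^T\}$ into mutually orthogonal rank-one projections, which decouples the $d$ summands and reduces the $nk\times nk$ inversion to $d$ independent $n\times n$ identities. If a more mechanical route is preferred, one can instead write $\mathbf C=\sigma_0^2\mathbf I_{nk}+\mathbf U\,\mathrm{diag}(\bm\Sigma_1,\dots,\bm\Sigma_d)\,\mathbf U^T$ with $\mathbf U$ the block matrix whose $l$-th block is $\mathbf I_n\otimes\mathbf a_l$, note that Assumption~\ref{assumption:A} gives $\mathbf U^T\mathbf U=\mathbf I_{nd}$, and apply the Woodbury identity; the same $n\times n$ simplification reappears. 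I would present the direct verification as the cleaner of the two.
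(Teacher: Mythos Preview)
Your proposal is correct. The derivation of~(\ref{equ:Y_v_cov}) matches the paper's: both vectorize the stacked model and compute the covariance as $\sum_l\bm\Sigma_l\otimes(\mathbf a_l\mathbf a_l^T)$ via Kronecker bilinearity and independence of the factor processes.

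For~(\ref{equ:Y_v_inv_cov}) there is a minor difference of emphasis. The paper takes the route you list as your alternative: it writes the covariance as $\mathbf A_v\bm\Sigma_v\mathbf A_v^T+\sigma_0^2\mathbf I_{nk}$ with $\mathbf A_v=[\mathbf I_n\otimes\mathbf a_1,\dots,\mathbf I_n\otimes\mathbf a_d]$, applies the Woodbury identity, and uses $\mathbf A_v^T\mathbf A_v=\mathbf I_{nd}$ (from Assumption~\ref{assumption:A}) to collapse the middle inverse to a block-diagonal of $(\sigma_0^2\bm\Sigma_l^{-1}+\mathbf I_n)^{-1}$. Your primary route---directly multiplying $\mathbf C\mathbf M$ and using that $\{\mathbf a_l\mathbf a_l^T\}$ are orthogonal idempotents to reduce to $d$ separate $n\times n$ identities---is slightly more elementary, since it avoids invoking Woodbury and merely verifies the stated inverse. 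The Woodbury approach is more constructive (it produces the inverse rather than confirming a guess), but since the target expression is already given in the lemma, your verification is equally complete and arguably cleaner to present.
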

The form in (\ref{equ:Y_v_cov}) appeared in the previous literature (e.g. \cite{gelfand2004nonstationary}) and its derivation is given in Appendix B. However, directly computing the marginal likelihood by expression (\ref{equ:Y_v_cov}) may be expensive, as the covariance matrix is $nk\times nk$.  Our expression (\ref{equ:Y_v_inv_cov}) of the marginal likelihood is  computationally more feasible than the expression (\ref{equ:Y_v_cov}), as the inverse of the covariance matrix of $\mathbf Y_v$ is derived explicitly in (\ref{equ:Y_v_inv_cov}). Based on the marginal likelihood in (\ref{equ:Y_v_inv_cov}), we derive  the maximum marginal estimation of $\mathbf A$ where the covariance matrix for each latent factor is assumed to be the same as in Theorem \ref{thm:est_A_shared_cov} below. 

 
 	
	\begin{theorem}
	For model (\ref{equ:model_1}), assume $\bm \Sigma_1=...=\bm \Sigma_d=\bm \Sigma$. Under Assumption \ref{assumption:A}, after marginalizing out $\mathbf Z$, the  likelihood function is maximized when  
	\begin{equation}
	 \hat {\mathbf A}=\mathbf U \mathbf R,
	\label{equ:A_est_shared_cov}
	\end{equation}
	 where $\mathbf U$ is a $k \times d$ matrix of the first $d$ principal eigenvectors of $ 	\mathbf G={\mathbf Y (\sigma^2_0 \bm \Sigma^{-1}+  \mathbf I_n )^{-1}  \mathbf Y^T}$, 
	and $\mathbf R$ is an arbitrary $d \times d$ orthogonal rotation matrix. 
	\label{thm:est_A_shared_cov}
	\end{theorem}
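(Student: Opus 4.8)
The plan is to evaluate the log marginal likelihood using the explicit inverse from Lemma~\ref{lemma:Y_v_inv_cov}, show that its dependence on $\mathbf A$ collapses to a single trace functional, and then optimize that functional over the Stiefel manifold. With $\bm\Sigma_1=\cdots=\bm\Sigma_d=\bm\Sigma$ we have $\sum_{l=1}^d \bm\Sigma_l\otimes(\mathbf a_l\mathbf a_l^T)=\bm\Sigma\otimes(\mathbf A\mathbf A^T)$, so the marginal covariance of $\mathbf Y_v$ is $\bm\Omega:=\bm\Sigma\otimes(\mathbf A\mathbf A^T)+\sigma_0^2\mathbf I_{nk}$ with $\bm\Omega^{-1}=\sigma_0^{-2}\big(\mathbf I_{nk}-(\sigma_0^2\bm\Sigma^{-1}+\mathbf I_n)^{-1}\otimes(\mathbf A\mathbf A^T)\big)$, and the log-likelihood, up to terms not involving $\mathbf A$, is $-\tfrac12\log\det\bm\Omega-\tfrac12\mathbf Y_v^T\bm\Omega^{-1}\mathbf Y_v$.

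The first step is to check that $\log\det\bm\Omega$ is constant in $\mathbf A$. Because $\mathbf A^T\mathbf A=\mathbf I_d$ by Assumption~\ref{assumption:A}, I would extend $\mathbf A$ to an orthogonal $k\times k$ matrix $\mathbf Q=[\mathbf A,\ \mathbf A_\perp]$ and conjugate $\bm\Omega$ by the orthogonal matrix $\mathbf I_n\otimes\mathbf Q$; since $\mathbf Q^T\mathbf A\mathbf A^T\mathbf Q=\mathrm{diag}(\mathbf I_d,\mathbf 0)$, the transformed matrix is block diagonal and $\det\bm\Omega=\det(\bm\Sigma+\sigma_0^2\mathbf I_n)^d(\sigma_0^2)^{n(k-d)}$, which does not depend on $\mathbf A$ (Sylvester's determinant identity gives the same conclusion). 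Consequently, maximizing the likelihood over $\mathbf A$ is equivalent to minimizing $\mathbf Y_v^T\bm\Omega^{-1}\mathbf Y_v$, and since $\mathbf Y_v^T\mathbf Y_v$ is also $\mathbf A$-free, to maximizing $\mathbf Y_v^T\big[(\sigma_0^2\bm\Sigma^{-1}+\mathbf I_n)^{-1}\otimes(\mathbf A\mathbf A^T)\big]\mathbf Y_v$.

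Next I would invoke the identity $\mathrm{vec}(\mathbf Y)^T(\mathbf B\otimes\mathbf C)\mathrm{vec}(\mathbf Y)=\tr(\mathbf Y^T\mathbf C\mathbf Y\mathbf B^T)$ with $\mathbf B=(\sigma_0^2\bm\Sigma^{-1}+\mathbf I_n)^{-1}$ and $\mathbf C=\mathbf A\mathbf A^T$, together with the cyclic property of the trace, to rewrite the objective as $\tr(\mathbf A^T\mathbf G\mathbf A)$ with $\mathbf G=\mathbf Y(\sigma_0^2\bm\Sigma^{-1}+\mathbf I_n)^{-1}\mathbf Y^T$. Since $(\sigma_0^2\bm\Sigma^{-1}+\mathbf I_n)^{-1}$ is symmetric positive definite, $\mathbf G$ is symmetric positive semidefinite, so by the Ky Fan maximum principle (equivalently, Rayleigh--Ritz applied to subspaces), $\max_{\mathbf A^T\mathbf A=\mathbf I_d}\tr(\mathbf A^T\mathbf G\mathbf A)=\sum_{i=1}^d\lambda_i(\mathbf G)$, attained precisely when the columns of $\mathbf A$ span the leading $d$-dimensional eigenspace of $\mathbf G$; that is, $\hat{\mathbf A}=\mathbf U\mathbf R$ for an arbitrary $d\times d$ orthogonal $\mathbf R$, which is~(\ref{equ:A_est_shared_cov}).

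I expect the only genuinely delicate point to be the determinant bookkeeping: one must track the Kronecker ordering induced by $\mathbf Y_v=\mathrm{vec}(\mathbf Y)$ and verify the conjugation (or Sylvester) argument carefully, since the vec--trace reduction and Ky Fan's theorem are standard. It is also worth remarking explicitly that, consistent with the identifiability discussion preceding Assumption~\ref{assumption:A}, the maximizer is pinned down only up to the rotation $\mathbf R$, and $\mathcal M(\hat{\mathbf A})$ is unique precisely when $\lambda_d(\mathbf G)>\lambda_{d+1}(\mathbf G)$.
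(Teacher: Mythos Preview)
Your proposal is correct and follows essentially the same route as the paper: reduce the marginal likelihood via Lemma~\ref{lemma:Y_v_inv_cov} and the vec--trace identity to the problem of maximizing $\tr(\mathbf A^T\mathbf G\mathbf A)$ over the Stiefel manifold, then invoke a variational principle (the paper cites Courant--Fischer--Weyl, you cite Ky Fan, which amount to the same thing here). The one place you are more careful than the paper is the determinant: the paper simply writes $L(\mathbf A\mid\cdot)\propto\exp(\cdots)$ and never comments on why $\log\det\bm\Omega$ is $\mathbf A$-free, whereas your orthogonal-completion argument makes this explicit; that addition is a genuine improvement in rigor, and your closing remark on the eigengap condition for uniqueness of $\mathcal M(\hat{\mathbf A})$ is also a useful clarification the paper omits.
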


By Theorem~\ref{thm:est_A_shared_cov}, the solution $\mathbf {\hat A}$ is not unique because of the arbitrary rotation matrix. However, the linear subspace of the column space of the estimated factor loading matrix, denoted by $\mathcal M(\mathbf {\hat A})$, is uniquely determined by (\ref{equ:A_est_shared_cov}).

In general, the covariance function of each factor can be different. We are able to express the maximum marginal likelihood estimator as the solution to an optimization problem with the orthogonal constraints, stated in Theorem \ref{thm:est_A_diff_cov}. 

	\begin{theorem}
     Under Assumption \ref{assumption:A}, after marginalizing out $\mathbf Z$, the maximum marginal likelihood estimator of  $\mathbf A$ in model (\ref{equ:model_1}) is  
     	\begin{equation}
\mathbf {\hat A}= \argmax_{\mathbf A} \sum^d_{l=1}  {\mathbf a^T_l \mathbf G_l \mathbf a_l},  \quad \text{s.t.} \quad \mathbf A^T \mathbf A=\mathbf I_d,
	\label{equ:A_est_diff_cov}
	\end{equation}
	where $\mathbf G_l= { \mathbf Y  (\sigma^2_0 \bm \Sigma^{-1}_l+\mathbf I_n )^{-1}\mathbf Y^T}$.

	\label{thm:est_A_diff_cov}
	\end{theorem}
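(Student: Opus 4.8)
The plan is to write down the log marginal likelihood of $\mathbf{Y}_v$ directly from the explicit precision matrix in Lemma~\ref{lemma:Y_v_inv_cov}, split it into a log-determinant term and a quadratic-form term, and show that under the constraint $\mathbf{A}^T\mathbf{A}=\mathbf{I}_d$ only the quadratic-form term depends on $\mathbf{A}$, and that it equals $\sum_{l=1}^d\mathbf{a}_l^T\mathbf{G}_l\mathbf{a}_l$ up to a positive multiplicative constant. Throughout, write $\mathbf{M}_l=(\sigma^2_0\bm{\Sigma}_l^{-1}+\mathbf{I}_n)^{-1}$ for $l=1,\dots,d$ and $\bm{\Omega}=\mathbf{I}_{nk}-\sum_{l=1}^d\mathbf{M}_l\otimes(\mathbf{a}_l\mathbf{a}_l^T)$, so that by (\ref{equ:Y_v_inv_cov}) the covariance matrix of $\mathbf{Y}_v$ is $\sigma^2_0\bm{\Omega}^{-1}$ and, up to an additive constant, the log marginal likelihood is $-\frac{1}{2}\log|\sigma^2_0\bm{\Omega}^{-1}|-\frac{1}{2\sigma^2_0}\mathbf{Y}_v^T\bm{\Omega}\mathbf{Y}_v$.

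For the log-determinant term, I would argue that $|\bm{\Omega}|$ does not depend on $\mathbf{A}$ once the columns of $\mathbf{A}$ are orthonormal. Extend $\mathbf{a}_1,\dots,\mathbf{a}_d$ to an orthonormal basis $\mathbf{a}_1,\dots,\mathbf{a}_k$ of $\mathbb{R}^k$. For each $m$, the $n$-dimensional subspace $V_m=\{\mathbf{v}\otimes\mathbf{a}_m:\mathbf{v}\in\mathbb{R}^n\}$ is invariant under $\sum_l\mathbf{M}_l\otimes(\mathbf{a}_l\mathbf{a}_l^T)$, since by the mixed-product rule $(\mathbf{M}_l\otimes\mathbf{a}_l\mathbf{a}_l^T)(\mathbf{v}\otimes\mathbf{a}_m)=(\mathbf{M}_l\mathbf{v})\otimes\mathbf{a}_l(\mathbf{a}_l^T\mathbf{a}_m)$, which vanishes unless $l=m$; hence the operator acts as $\mathbf{M}_m$ on $V_m$ when $m\le d$ and as $0$ when $m>d$. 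The subspaces $V_1,\dots,V_k$ are mutually orthogonal and span $\mathbb{R}^{nk}$, so $\bm{\Omega}$ is orthogonally similar to the block-diagonal matrix with blocks $\mathbf{I}_n-\mathbf{M}_1,\dots,\mathbf{I}_n-\mathbf{M}_d$ and $\mathbf{I}_{(k-d)n}$, whence $|\bm{\Omega}|=\prod_{l=1}^d|\mathbf{I}_n-\mathbf{M}_l|$, which involves only $\sigma^2_0$ and the $\bm{\Sigma}_l$'s. Thus $\log|\sigma^2_0\bm{\Omega}^{-1}|$ is constant in $\mathbf{A}$.

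For the quadratic form, expand $\mathbf{Y}_v^T\bm{\Omega}\mathbf{Y}_v=\mathbf{Y}_v^T\mathbf{Y}_v-\sum_{l=1}^d\mathbf{Y}_v^T(\mathbf{M}_l\otimes\mathbf{a}_l\mathbf{a}_l^T)\mathbf{Y}_v$, where $\mathbf{Y}_v^T\mathbf{Y}_v$ is free of $\mathbf{A}$. Using the identity $(\mathbf{M}_l\otimes\mathbf{a}_l\mathbf{a}_l^T)\mathrm{vec}(\mathbf{Y})=\mathrm{vec}(\mathbf{a}_l\mathbf{a}_l^T\mathbf{Y}\mathbf{M}_l)$ together with $\mathrm{vec}(\mathbf{Y})^T\mathrm{vec}(\mathbf{B})=\tr(\mathbf{Y}^T\mathbf{B})$ and cyclicity of the trace, each summand equals $\tr(\mathbf{Y}^T\mathbf{a}_l\mathbf{a}_l^T\mathbf{Y}\mathbf{M}_l)=\mathbf{a}_l^T\mathbf{Y}\mathbf{M}_l\mathbf{Y}^T\mathbf{a}_l=\mathbf{a}_l^T\mathbf{G}_l\mathbf{a}_l$ with $\mathbf{G}_l=\mathbf{Y}(\sigma^2_0\bm{\Sigma}_l^{-1}+\mathbf{I}_n)^{-1}\mathbf{Y}^T$. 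Combining the two terms, the log marginal likelihood is $\frac{1}{2\sigma^2_0}\sum_{l=1}^d\mathbf{a}_l^T\mathbf{G}_l\mathbf{a}_l$ plus a quantity independent of $\mathbf{A}$, so maximizing it over $\mathbf{A}$ subject to $\mathbf{A}^T\mathbf{A}=\mathbf{I}_d$ is precisely problem (\ref{equ:A_est_diff_cov}).

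The main obstacle is the log-determinant step: one must verify that the $k$ subspaces $V_m$ genuinely give an orthogonal decomposition of $\mathbb{R}^{nk}$ and read off the $\mathbf{M}_m$-action with the correct orientation of the Kronecker product — the same bookkeeping that, in the quadratic-form step, must land on $\mathbf{Y}^T\mathbf{B}$ rather than its transpose — and both are tied to the $\mathrm{vec}$ convention fixed in Lemma~\ref{lemma:Y_v_inv_cov}. I would also remark in passing that specializing to $\bm{\Sigma}_1=\dots=\bm{\Sigma}_d=\bm{\Sigma}$ turns the objective into $\tr(\mathbf{A}^T\mathbf{G}\mathbf{A})$ with $\mathbf{G}=\mathbf{Y}(\sigma^2_0\bm{\Sigma}^{-1}+\mathbf{I}_n)^{-1}\mathbf{Y}^T$, whose maximizer over the Stiefel manifold is the span of the top $d$ eigenvectors of $\mathbf{G}$, recovering Theorem~\ref{thm:est_A_shared_cov}.
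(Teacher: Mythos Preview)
Your proposal is correct and follows essentially the same route as the paper: both invoke the explicit precision matrix from Lemma~\ref{lemma:Y_v_inv_cov} and use the vec/trace identities (the paper's fact~\ref{item:vectorization}) to rewrite the exponential quadratic form as $\sum_{l}\mathbf a_l^T\mathbf G_l\mathbf a_l$ plus terms independent of $\mathbf A$. The paper's proof is terser about the log-determinant, simply absorbing it into a proportionality constant, whereas your block-diagonalization argument makes that step explicit and self-contained.
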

	The subset of matrices $\mathbf A$ that satisfies the orthogonal constraint $\mathbf A^T \mathbf A=\mathbf I_d$ is often referred as the \textit{Stiefel manifold}. Unlike the case where the covariance of each factor processes is shared, no closed-form solution of the optimization problem in (\ref{thm:est_A_diff_cov}) has been found.  A numerical optimization algorithm that preserves the orthogonal constraints  in  (\ref{equ:A_est_diff_cov}) is introduced in \cite{wen2013feasible}.  The main idea of their algorithm is to find the gradient of the objective function in the tangent space at the current step, and iterates by a curve along the  projected negative descent on the manifold. The curvilinear search is applied to find the appropriate step size that guarantees the convergence to a stationary point. We implement this approach to numerically optimize the marginal likelihood to obtain the estimated factor loading matrix in Theorem \ref{thm:est_A_diff_cov}.


	We call the method of estimating $\mathbf A$ in Theorem \ref{thm:est_A_shared_cov} and Theorem \ref{thm:est_A_diff_cov} the generalized probabilistic principal component analysis (GPPCA) of correlated data, which is a direct extension of the PPCA in \cite{tipping1999probabilistic}. Although both approaches obtain the maximum marginal likelihood estimator of the factor loading matrix, after integrating out the latent factors, the key difference is that in GPPCA, the latent factors at different inputs are allowed to be correlated, whereas the latent factors in PPCA are assumed to be independent.  A detailed numerical comparison between our method and other approaches including the PPCA will be given in Section~\ref{sec:comparison}. 
	
	Another nice feature of the proposed GPPCA method is that 
	the estimation of the factor loading matrix can be applied to any covariance structure of the factor processes.  In this paper, we use kernels to parameterize the covariance matrix as an illustrative example. There are many other ways to specify the covariance matrix or the inverse of the covariance matrix, such as the Markov random field and the dynamic linear model, and these approaches are readily applicable in our latent factor model (\ref{equ:model_1}). 
	
	
	 
	
	



For a function with a $p$-dimensional input, we use a product kernel to model the covariance for demonstration purposes \citep{sacks1989design}, meaning that for the $l$th factor,
\begin{equation}
\sigma^2_l K_l(\mathbf x_a, \mathbf x_b)=\sigma^2_l \prod^p_{m=1} K_{lm}( x_{am}, x_{bm}),  
\label{equ:K_l}
\end{equation}
 for any input $\mathbf x_a=(x_{a1},...,x_{ap})$ and $\mathbf x_a=(x_{b1},...,x_{bp})$, where $K_{lm}(\cdot, \cdot)$ is a one-dimensional kernel function of the $l$th factor that models the correlation of the $m$th coordinate of any two inputs.  
 

Some widely used one-dimensional kernel functions include the power exponential kernel and the Mat{\'e}rn kernel. For any two inputs $\mathbf x_a, \mathbf x_b\in \mathcal X$, the Mat{\'e}rn kernel  is 
 
 \begin{equation}
K_{lm}(x_{am}, x_{bm})=\frac{1}{2^{\nu_{lm}-1}\Gamma(\nu_{lm})}\left(\frac{|x_{am}- x_{bm}|}{\gamma_{lm}} \right)^{\nu_{lm}} \mathcal K_{\nu_{lm}} \left(\frac{|x_{am}- x_{bm}|}{\gamma_{lm}} \right),
\label{equ:matern}
\end{equation}
where $\Gamma(\cdot)$ is the gamma function and $\mathcal{K}_{\nu_{lm}}(\cdot)$ is the modified Bessel function of the second kind with a positive roughness parameter $\nu_{lm}$ and a nonnegative range parameter $\gamma_{lm}$ for $l=1,...,d$ and $m=1,...,p$. The Mat{\'e}rn kernel contains a wide range of different kernel functions. In particular, when $\nu_{lm}=1/2$, the Mat{\'e}rn kernel becomes the exponential kernel, $K_l(x_{am}, x_{bm})=\exp(-|x_{am}- x_{bm}|/\gamma_{lm})$, and the corresponding factor process is the Ornstein-Uhlenbeck process, which is a continuous autoregressive process with order 1. When $\nu_{lm} \to \infty$, the Mat{\'e}rn kernel becomes the Gaussian kernel, i.e., $K_l(x_{am}, x_{bm})=\exp(-|x_{am}- x_{bm}|^2/\gamma^2_{lm})$, where the factor process is infinitely differentiable.    The Mat{\'e}rn kernel has a closed-form expression when $(2\nu_{lm}+1)/2 \in \mathbb N$. For example, the Mat{\'e}rn kernel with $\nu_{lm}=5/2$ has the following form
 \begin{equation}
K_{lm}(x_{am}, x_{bm})=\left(1+\frac{\sqrt{5}|x_{am}-x_{bm}|}{\gamma_{lm}}+\frac{5|x_{am}-x_{bm}|^2}{3\gamma_{lm}^2}\right)\exp\left(-\frac{\sqrt{5}|x_{am}-x_{bm}|}{\gamma_{lm}}\right) \,,
\label{equ:matern_5_2}
\end{equation}
for any inputs $\mathbf x_a$ and $\mathbf x_b$ with $l=1,...,d$ and $m=1,...,p$. In this work, we use the Mat{\'e}rn kernel in (\ref{equ:matern_5_2}) for the simulation and real data analysis for demonstration purposes. Specifying a sensible kernel function depends on real applications and our results in this work apply to all commonly used kernel functions. We will also numerically compare different approaches when the kernel function is misspecified in Appendix C.

\subsection{Parameter estimation and predictive distribution}
\label{subsec:par_est}

The probabilistic estimation of the factor loading matrix depends on the variance of the noise and the covariances of the factor processes. We discuss the estimation of these parameters by assuming that the covariances of the factors are parameterized by a product of the kernel functions for demonstration purposes. We also obtain the predictive distribution of the data in this subsection. The probabilistic estimation of the factor loading matrix in the GPPCA can be also applied when the covariances of the factors are specified or estimated in other ways.  






We denote  $\tau_l:=\frac{\sigma^2_l}{\sigma^2_0}$ as the signal's variance to noise ratio (SNR) for the $l$th factor process, as a transformation of $\sigma^2_{l}$ in  (\ref{equ:K_l}). The maximum likelihood estimator of $\sigma^2_0$ has a  closed form expression using this parameterization. Furthermore, let the correlation matrix of the $k$th factor process be $\mathbf K_l$ with the $(i, j)$th term being $K_l(\mathbf x_i, \mathbf x_j)$. After this transformation, the estimator of $\mathbf A$ in Theorems \ref{thm:est_A_shared_cov} and   \ref{thm:est_A_diff_cov} becomes a function of the parameters $\bm \tau=(\tau_1,...,\tau_d)$ and $\bm \gamma=(\bm \gamma_1,...,\bm \gamma_d)$.  Under Assumption \ref{assumption:A}, after marginalizing out $\mathbf Z$, the maximum likelihood estimator of $\sigma^2_0$ becomes a function of $\mathbf A$, $\bm \tau$ and $\bm \gamma$ as  
\begin{align}
\hat \sigma^2_0&= \frac{\hat S^2}{nk},
\label{equ:hat_sigma_2_0}
\end{align} 
where $\hat S^2 =\tr(  \mathbf {Y}^T \mathbf {Y}    )-\sum^d_{l=1} \mathbf {a}^T_l \mathbf {Y}   (\tau^{-1}_l \mathbf  {K}^{-1}_l +\mathbf I_n )^{-1} \mathbf {Y} ^T  \mathbf {a}_l$. Ignoring the constants, the likelihood of $\bm \tau$ and $\bm \gamma$ by plugging $\mathbf {\hat A}$  and $\hat \sigma^2_0$ satisfies 
\begin{equation}
L(\bm \tau, \bm \gamma \mid \mathbf Y, \mathbf {\hat A}, \hat \sigma^2_0 )\propto \left\{ \prod^d_{l=1} | \tau_l \mathbf  {K}_l +\mathbf I_n |^{-1/2} \right\}|\hat S^2|^{-nk/2}.
\label{equ:profile_lik}
\end{equation}

A derivation of Equation (\ref{equ:profile_lik}) is given in the Appendix. Since there is no closed-form expression for the  parameter estimates in the kernels, one often numerically maximizes the Equation (\ref{equ:profile_lik})  to estimate these parameters 
\begin{equation}
( \bm {\hat \tau},  \bm {\hat \gamma}):=\argmax_{(\bm \tau, \bm \gamma)} L(\bm \tau, \bm \gamma \mid \mathbf Y, \mathbf {\hat A},  \hat \sigma^2_0 ).
\label{equ:tau_gamma_est}
\end{equation}
After obtaining $\hat \sigma^2_0$ and $\bm {\hat \tau}$ from (\ref{equ:hat_sigma_2_0}) and  (\ref{equ:tau_gamma_est}), respectively, we transform the expressions back to get the estimator of $\sigma^2_l$ as
\[\hat \sigma^2_l= \hat \tau_l\hat \sigma^2_0,\]
for $l=1,...,d$. Since both the estimator of $\hat \sigma^2_0$ and $\mathbf {\hat A}$ in Theorem \ref{thm:est_A_shared_cov} and  \ref{thm:est_A_diff_cov}  can be expressed as a function of $(\bm \tau, \bm \gamma)$, in each iteration, one can use the Newton's method \citep{nocedal1980updating} to find $(\bm \tau, \bm \gamma)$ based on the likelihood  in (\ref{equ:profile_lik}), after plugging the estimator of $\hat \sigma^2_0$ and $\mathbf {\hat A}$.

We have a few remarks regarding the expressions in (\ref{equ:hat_sigma_2_0}) and (\ref{equ:tau_gamma_est}). First, under Assumption \ref{assumption:A}, the likelihood of $(\bm \tau, \bm \gamma)$ in (\ref{equ:profile_lik}) can also be obtained by marginalizing out $\sigma^2_0$ using the objective prior $\pi(\sigma^2_0)\propto 1/\sigma^2_0$, instead of maximizing over $\sigma^2_0$. 

Second, consider the first term at the right hand side of (\ref{equ:hat_sigma_2_0}). As each row of $\mathbf Y$ has a zero mean, let $\mathbf S_0:={\mathbf Y\mathbf Y^T}/{n}={\sum^n_{i=1}\mathbf y(\mathbf x_i) \mathbf y(\mathbf x_i)^T}/{n},$ be the sample covariance matrix for $\mathbf y(\mathbf x_i)$. One has $\tr(\mathbf Y \mathbf Y^T)=n\sum^k_{i=1}\lambda_{0i}$, where $\lambda_{0i}$ is the $i$th eigenvalue of $\mathbf S_0$. The second term at the right hand side of (\ref{equ:hat_sigma_2_0}) is the variance explained by the projection. In particular, when the conditions in Theorem~\ref{thm:est_A_shared_cov} hold, i.e. $\bm \Sigma_1=...=\bm \Sigma_d$, one has   $\sum^d_{l=1} \mathbf {\hat a}^T_l \mathbf {Y}   (\tau^{-1}_l \mathbf  {K}^{-1}_l +\mathbf I_n )^{-1} \mathbf {Y} ^T  \mathbf {\hat a}_l =n\sum^d_{l=1}\hat \lambda_l$, where $\hat \lambda_l$ is the $l$th largest eigenvalues of $\mathbf Y(\sigma^2_0\bm \Sigma^{-1}+\mathbf I_n )^{-1} \mathbf Y^T/n$. The estimation of the noise is then the average variance being lost in the projection. Note that the projection in the GPPCA takes into account the correlation of the factor processes, whereas the projection in the PPCA assumes the independent factors. This difference makes the GPPCA more accurate in estimating the subspace of the factor loading matrix when the factors are correlated, as shown in various numerical examples in Section \ref{sec:numerical_results}.

Thirdly, although the model in (\ref{equ:model_1}) is regarded as a nonseparable model \citep{fricker2013multivariate}, the computational complexity of our algorithm is the same with that for the separable model \citep{Gu2016PPGaSP,conti2010bayesian}. Instead of inverting an $nk \times nk$  covariance matrix, the expression of the  likelihood in  (\ref{equ:profile_lik}) allows us to proceed in the same way when the covariance matrix for each factor has a size of $n\times n$. The number of  computational operations  of the likelihood is at most $\max(O(dn^3), O(kn^2))$, which is much smaller than the $O(n^3k^3)$ for inverting an $nk \times nk$  covariance matrix, because one often has $d \ll k$. When the input is one-dimensional and the Mat{\'e}rn kernel in (\ref{equ:matern}) is used, the computational operations are only $O(dkn)$ for computing the likelihood in (\ref{equ:profile_lik}) without any approximation (see e.g. \cite{hartikainen2010kalman}). We implement this algorithm in the ${\tt FastGaSP}$  package available  on CRAN.

Note that the estimator in (\ref{equ:profile_lik}) is known as the Type II maximum likelihood estimator, which is widely used in estimating the kernel parameters. When the number of the observations is small, the  estimator in (\ref{equ:profile_lik}) is not robust, in the sense that the estimated range parameters can be very small or very large, which makes the covariance matrix either a diagonal matrix or a singular matrix. This might be unsatisfactory in certain applications, such as emulating computationally expensive computer models \citep{oakley1999bayesian}. An alternative way is to use the maximum marginal posterior estimation that prevents the two unsatisfying scenarios of the estimated covariance matrix. We refer to \cite{Gu2018robustness} and \cite{gu2018jointly} for the theoretical properties of the maximum marginal posterior estimation and an ${\sf R}$ package is available on CRAN (\cite{gu2018robustgasp}). 


Given the parameter estimates, we can also obtain the predictive distribution for the outputs. 
Let $\hat K_l(\cdot, \cdot)$ be the $l$th kernel function after plugging the estimates $\bm{\hat \gamma}_l$ and let $\bm {\hat \Sigma}_l$ be the estimator of the covariance matrix for the $l$th factor, where the $(i,j)$ element of $\bm {\hat \Sigma}_l$ is $ \hat \sigma^2_l \hat K_l(\mathbf x_i, \mathbf x_j)$, with $1\leq i,j\leq n$ and $l=1,...,d$. We have the following predictive distribution for the output at any given input. 

\begin{theorem}
Under the Assumption \ref{assumption:A}, for any $\mathbf x^*$, one has

 \[\mathbf Y(\mathbf x^*) \mid \mathbf Y, \mathbf {\hat A},  \bm {\hat \gamma}, \bm {\hat \sigma}^2, { \hat \sigma}^2_0 \sim \MN \left(\bm {\hat \mu}^*(\mathbf x^*), \bm {\hat \Sigma}^*(\mathbf x^*) \right),\]
 where
 \begin{equation}
 \bm {\hat \mu}^*(\mathbf x^*)=\mathbf {\hat A}  \mathbf {\hat z}(\mathbf x^*),
 \label{equ:hat_mu}
 \end{equation} 
 with $\mathbf {\hat z}(\mathbf x^*)=( {\hat z}_1(\mathbf x^*),...,{\hat z}_d(\mathbf x^*) )^T $, with ${\hat z}_l(\mathbf x^*)=\bm {\hat \Sigma}^T_l(\mathbf x^*) ({\bm {\hat \Sigma}_l+\hat \sigma^2_0 \mathbf I_n })^{-1}\mathbf Y^T \mathbf {\hat a}_l$, $\bm {\hat \Sigma}_l(\mathbf x^*)=\hat \sigma^2_l(\hat K_l(\mathbf x_1, \mathbf x^*),...,\hat K_l(\mathbf x_n, \mathbf x^*))^T$ for $l=1,...,d$,
 and 
  \begin{equation}
\bm {\hat \Sigma}^*(\mathbf x^*)=  \mathbf {\hat A} \mathbf {\hat D}(\mathbf x^*) \mathbf {\hat A}^T+ \hat \sigma^2_0(\mathbf I_k - \mathbf {\hat A} \mathbf {\hat A}^T),
 \label{equ:hat_Sigma}
 \end{equation} 
 with $\mathbf {\hat D}(\mathbf x^*) $ being a diagonal matrix, and its $l$th diagonal term, denoted as ${\hat D}_l(\mathbf x^*) $, has the following expression
\[{\hat D}_l(\mathbf x^*)=  \hat\sigma^2_l \hat K_l(\mathbf x^*,\, \mathbf x^*) + \hat \sigma^2_0 -  \bm {\hat \Sigma}^T_l(\mathbf x^*) \left({\bm {\hat \Sigma}_l+\hat \sigma^2_0 \mathbf I_n }\right)^{-1} \bm {\hat \Sigma}_l(\mathbf x^*),    \]
  for $l=1,...,d$.
\label{thm:prediction}
\end{theorem}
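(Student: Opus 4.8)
The plan is to pass to an orthonormal basis of $\mathbb R^k$ adapted to the columns of $\mathbf A$, in which the model decouples into $d$ independent scalar Gaussian-process regressions plus $k-d$ pure-noise directions; the formulas (\ref{equ:hat_mu})--(\ref{equ:hat_Sigma}) then drop out of the textbook Gaussian conditioning identity followed by an orthogonal change of coordinates. Concretely, I would first pick $\mathbf A_\perp$, a $k\times(k-d)$ matrix completing the columns of $\mathbf A$ to an orthonormal basis, so that $\mathbf Q:=[\mathbf A,\ \mathbf A_\perp]$ is orthogonal and $\mathbf A\mathbf A^T+\mathbf A_\perp\mathbf A_\perp^T=\mathbf I_k$. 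Left-multiplying model (\ref{equ:model_1}) by $\mathbf Q^T$ gives $\mathbf a_l^T\mathbf y(\mathbf x)=z_l(\mathbf x)+\mathbf a_l^T\bm\epsilon$ for $l=1,\dots,d$ and $\mathbf A_\perp^T\mathbf y(\mathbf x)=\mathbf A_\perp^T\bm\epsilon$. Since $\bm\epsilon\sim\MN(\mathbf 0,\sigma^2_0\mathbf I_k)$ and the columns of $\mathbf Q$ are orthonormal, the projected noises $\mathbf a_1^T\bm\epsilon,\dots,\mathbf a_d^T\bm\epsilon,\ \mathbf A_\perp^T\bm\epsilon$ are mutually independent (this is exactly where orthonormality of $\mathbf A$ enters), with each $\mathbf a_l^T\bm\epsilon\sim N(0,\sigma^2_0)$; combined with the independence of the $d$ factor GPs and of the noise across the observed inputs and $\mathbf x^*$, this shows that the $n$-vector $\mathbf w_l:=\mathbf Y^T\mathbf a_l$ (whose $i$th entry is $z_l(\mathbf x_i)+\mathbf a_l^T\bm\epsilon_i$), paired with the scalar $z_l(\mathbf x^*)+\mathbf a_l^T\bm\epsilon^*$, is for each $l$ a standard noisy-GP observation, and that these $d$ problems together with the block $\mathbf A_\perp^T\mathbf y(\mathbf x^*)$ are jointly independent Gaussian.

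Next I would apply Gaussian conditioning to each scalar problem: given $\mathbf w_l$, whose covariance is $\bm\Sigma_l+\sigma^2_0\mathbf I_n$ and whose cross-covariance with $z_l(\mathbf x^*)$ is $\bm\Sigma_l(\mathbf x^*)$, the value $z_l(\mathbf x^*)$ is normal with mean $\bm\Sigma_l^T(\mathbf x^*)(\bm\Sigma_l+\sigma^2_0\mathbf I_n)^{-1}\mathbf w_l$ and variance $\sigma^2_l K_l(\mathbf x^*,\mathbf x^*)-\bm\Sigma_l^T(\mathbf x^*)(\bm\Sigma_l+\sigma^2_0\mathbf I_n)^{-1}\bm\Sigma_l(\mathbf x^*)$; adding the fresh independent noise $\mathbf a_l^T\bm\epsilon^*$ inflates the variance by $\sigma^2_0$, giving precisely the $l$th diagonal entry of a matrix $\mathbf D(\mathbf x^*)$ (the unhatted analogue of $\mathbf {\hat D}(\mathbf x^*)$), while $\mathbf A_\perp^T\mathbf y(\mathbf x^*)\mid\mathbf Y\sim\MN(\mathbf 0,\sigma^2_0\mathbf I_{k-d})$ because it is independent of $\mathbf Y$. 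Stacking the $d+1$ independent blocks, $\mathbf Q^T\mathbf y(\mathbf x^*)\mid\mathbf Y$ is normal with mean obtained by stacking the $z_l(\mathbf x^*)$ posterior means and $\mathbf 0$, and with block-diagonal covariance having blocks $\mathbf D(\mathbf x^*)$ and $\sigma^2_0\mathbf I_{k-d}$. Multiplying back by $\mathbf Q$ and using $\mathbf A_\perp\mathbf A_\perp^T=\mathbf I_k-\mathbf A\mathbf A^T$ turns the mean into $\mathbf A\,\mathbf {\hat z}(\mathbf x^*)$ and the covariance into $\mathbf A\mathbf D(\mathbf x^*)\mathbf A^T+\sigma^2_0(\mathbf I_k-\mathbf A\mathbf A^T)$. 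Substituting the plug-in estimates $\mathbf {\hat A},\hat\sigma^2_0,\hat\sigma^2_l,\hat K_l$ (so $\bm\Sigma_l\mapsto\bm{\hat\Sigma}_l$, $\bm\Sigma_l(\mathbf x^*)\mapsto\bm{\hat\Sigma}_l(\mathbf x^*)$, etc.) then yields (\ref{equ:hat_mu})--(\ref{equ:hat_Sigma}).

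The main obstacle I expect is the independence bookkeeping in the adapted basis: one must check carefully that orthonormality of $\mathbf A$ is precisely what decorrelates --- and hence, under joint normality, makes independent --- the projected noise coordinates $\mathbf a_l^T\bm\epsilon$, and that the factor processes contribute only to the $\mathbf A$-block, so that the $d$ scalar regressions and the orthogonal-complement block genuinely do not interact; once this is in place, the remaining work is the routine Gaussian conditioning computation and an orthogonal change of variables. I would also note, as is standard for empirical-Bayes prediction, that the estimated parameters are treated as fixed when forming the predictive law, which is why $\mathbf {\hat A}$ and the $\hat K_l$ appear as conditioning quantities rather than being integrated against their own uncertainty.
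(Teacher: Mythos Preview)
Your proposal is correct and takes a genuinely different route from the paper. The paper's proof works hierarchically via iterated conditional expectation and variance: it first derives the posterior $p(\mathbf Z_{vt}\mid\mathbf Y_v,\hat{\bm\theta})$ (a multivariate normal whose mean and covariance are computed using $\mathbf{\hat A}_v^T\mathbf{\hat A}_v=\mathbf I_{nd}$), then computes $\E[\mathbf Y(\mathbf x^*)\mid\mathbf Y]$ and $\V[\mathbf Y(\mathbf x^*)\mid\mathbf Y]$ by two layers of towering, conditioning in turn on $\mathbf Z(\mathbf x^*)$ and on $\mathbf Z$, and finally assembles the diagonal entries of $\mathbf{\hat D}(\mathbf x^*)$ via the Woodbury identity. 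Your approach instead rotates into the basis $\mathbf Q=[\mathbf A,\mathbf A_\perp]$ up front, so the problem splits immediately into $d$ independent scalar noisy-GP regressions plus $k-d$ pure-noise coordinates; the predictive moments then come from a single application of the standard Gaussian conditioning formula in each coordinate, and the identity $\mathbf A_\perp\mathbf A_\perp^T=\mathbf I_k-\mathbf A\mathbf A^T$ reassembles (\ref{equ:hat_Sigma}) without any Woodbury step. What your route buys is transparency about \emph{why} Assumption~\ref{assumption:A} matters (it is exactly what makes the projected noises independent and kills the cross-block covariances) and a shorter algebraic path; what the paper's route buys is an explicit intermediate object, the posterior of $\mathbf Z$, which is reused for Corollary~\ref{cor:posterior_AZ} and connects more directly to the mean-structure extension in Theorem~\ref{thm:prediction_with_mean}.
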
 



 





Next we give the posterior distribution of $\mathbf A \mathbf Z$ in Corollary \ref{cor:posterior_AZ}.



\begin{corollary}[Posterior distribution of $\mathbf A \mathbf Z$]
Under the Assumption (\ref{assumption:A}), the posterior distribution of  $ \mathbf A \mathbf Z$ is
\[ (\mathbf A \mathbf Z \mid \mathbf Y, \mathbf {\hat A},  \bm {\hat \gamma}, \bm {\hat \sigma}^2, { \hat \sigma}^2_0)  \sim \MN\left( \mathbf {\hat A} \mathbf {\hat Z}, \hat \sigma^2_0\sum^d_{l=1} \mathbf {\hat D}_l \otimes \mathbf {\hat a}_l \mathbf {\hat a}^T_l\right),\]
where 
 $ \mathbf {\hat Z}=(\mathbf {\hat Z}^T_1,...,\mathbf {\hat Z}^T_d)^T $, $\mathbf {\hat Z}^T_l= \bm {\hat \Sigma}_l ( \bm {\hat \Sigma}_l+ \hat \sigma^2_0 \mathbf I_n)^{-1} \mathbf Y^T\mathbf {\hat a}_l$,  and $\mathbf {\hat D}_l=\left(\sigma^2_0 \bm {\hat \Sigma}^{-1}_l+{\mathbf I_n}\right)^{-1} $,
 for $l=1,...,d$.


\label{cor:posterior_AZ}
\end{corollary}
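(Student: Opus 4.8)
The plan is to turn the joint posterior of the $d$ factor processes into $d$ decoupled one‑dimensional Gaussian‑process regressions by rotating the output through an orthogonal completion of $\mathbf{\hat A}$, and then to reassemble $\mathbf A\mathbf Z=\sum_{l=1}^d\mathbf{\hat a}_l\mathbf Z_l$ with Kronecker‑product identities. Write the stacked form of model~\eqref{equ:model_1} as $\mathbf Y=\mathbf{\hat A}\mathbf Z+\mathbf E$, where $\mathbf E=[\bm\epsilon_1,\dots,\bm\epsilon_n]$ has independent $N(\mathbf 0,\hat\sigma_0^2\mathbf I_k)$ columns, and pick $\mathbf A_\perp$ so that $[\mathbf{\hat A},\mathbf A_\perp]$ is $k\times k$ orthogonal. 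First I would set $\mathbf w_l:=\mathbf Y^T\mathbf{\hat a}_l$ for $l=1,\dots,d$ and $\mathbf W_\perp:=\mathbf A_\perp^T\mathbf Y$, so that (using $\mathbf{\hat A}^T\mathbf{\hat a}_l=\mathbf e_l$) one gets $\mathbf w_l=\mathbf Z_l^T+\mathbf E^T\mathbf{\hat a}_l$ and $\mathbf W_\perp=\mathbf A_\perp^T\mathbf E$. Since each column of $\mathbf E$ has covariance $\hat\sigma_0^2\mathbf I_k$ and $\mathbf{\hat A}^T\mathbf{\hat A}=\mathbf I_d$, one checks that $\mathrm{Cov}(\mathbf E^T\mathbf{\hat a}_l,\mathbf E^T\mathbf{\hat a}_{l'})=\hat\sigma_0^2(\mathbf{\hat a}_l^T\mathbf{\hat a}_{l'})\mathbf I_n$ vanishes for $l\neq l'$, and likewise that $\mathbf W_\perp$ is uncorrelated with every $\mathbf E^T\mathbf{\hat a}_l$; together with the prior independence of $\mathbf Z_1,\dots,\mathbf Z_d$ and of $\mathbf Z$ from $\mathbf E$, joint normality then makes the blocks $(\mathbf Z_1^T,\mathbf w_1),\dots,(\mathbf Z_d^T,\mathbf w_d),\mathbf W_\perp$ mutually independent. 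Because $[\mathbf{\hat A},\mathbf A_\perp]$ is invertible, conditioning on $\mathbf Y$ is the same as conditioning on $(\mathbf w_1,\dots,\mathbf w_d,\mathbf W_\perp)$, so the posterior of $\mathbf Z$ factorizes as $\prod_{l=1}^d p(\mathbf Z_l\mid\mathbf w_l)$.

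Next I would read off each factor's posterior. For fixed $l$, $\mathbf Z_l^T\sim\MN(\mathbf 0,\bm{\hat\Sigma}_l)$ and $\mathbf w_l\mid\mathbf Z_l^T\sim\MN(\mathbf Z_l^T,\hat\sigma_0^2\mathbf I_n)$, so the Gaussian conditioning formula gives
\[
\mathbf Z_l^T\mid\mathbf w_l\ \sim\ \MN\!\Big(\bm{\hat\Sigma}_l(\bm{\hat\Sigma}_l+\hat\sigma_0^2\mathbf I_n)^{-1}\mathbf w_l,\ \ \bm{\hat\Sigma}_l-\bm{\hat\Sigma}_l(\bm{\hat\Sigma}_l+\hat\sigma_0^2\mathbf I_n)^{-1}\bm{\hat\Sigma}_l\Big).
\]
Substituting $\mathbf w_l=\mathbf Y^T\mathbf{\hat a}_l$ identifies the mean with $\mathbf{\hat Z}_l^T$, and the elementary identity $\bm{\hat\Sigma}_l-\bm{\hat\Sigma}_l(\bm{\hat\Sigma}_l+\hat\sigma_0^2\mathbf I_n)^{-1}\bm{\hat\Sigma}_l=\hat\sigma_0^2\bm{\hat\Sigma}_l(\bm{\hat\Sigma}_l+\hat\sigma_0^2\mathbf I_n)^{-1}=\hat\sigma_0^2(\hat\sigma_0^2\bm{\hat\Sigma}_l^{-1}+\mathbf I_n)^{-1}=\hat\sigma_0^2\mathbf{\hat D}_l$ — obtained by pre‑ and post‑multiplying the middle expression by $\bm{\hat\Sigma}_l^{-1}$ and using that $\bm{\hat\Sigma}_l$ commutes with $(\bm{\hat\Sigma}_l+\hat\sigma_0^2\mathbf I_n)^{-1}$ — identifies the posterior covariance of $\mathbf Z_l^T$ with $\hat\sigma_0^2\mathbf{\hat D}_l$.

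Finally I would assemble the answer. From $\mathbf A\mathbf Z=\sum_{l=1}^d\mathbf{\hat a}_l\mathbf Z_l$ we have $\mathrm{vec}(\mathbf A\mathbf Z)=\sum_{l=1}^d\mathbf Z_l^T\otimes\mathbf{\hat a}_l$, a sum of jointly Gaussian and (posterior‑)independent terms, hence Gaussian with posterior mean $\sum_l\mathbf{\hat Z}_l^T\otimes\mathbf{\hat a}_l=\mathrm{vec}(\mathbf{\hat A}\mathbf{\hat Z})$ and posterior covariance $\sum_l\mathrm{Cov}(\mathbf Z_l^T\otimes\mathbf{\hat a}_l\mid\mathbf Y)$. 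Writing $\mathbf Z_l^T\otimes\mathbf{\hat a}_l=(\mathbf I_n\otimes\mathbf{\hat a}_l)\mathbf Z_l^T$ and applying $(\mathbf B_1\otimes\mathbf C_1)(\mathbf B_2\otimes\mathbf C_2)=(\mathbf B_1\mathbf B_2)\otimes(\mathbf C_1\mathbf C_2)$ together with the scalar factorization $\hat\sigma_0^2\mathbf{\hat D}_l=(\hat\sigma_0^2\mathbf{\hat D}_l)\otimes 1$ reduces each term to $\hat\sigma_0^2\,\mathbf{\hat D}_l\otimes\mathbf{\hat a}_l\mathbf{\hat a}_l^T$, which yields the stated covariance $\hat\sigma_0^2\sum_{l=1}^d\mathbf{\hat D}_l\otimes\mathbf{\hat a}_l\mathbf{\hat a}_l^T$. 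The step I expect to be the main obstacle is the first one: carefully verifying that the orthogonal rotation splits the $d$ factors and the orthogonal complement into genuinely independent Gaussian blocks, since this is exactly what licenses writing the posterior as a product over $l$; everything afterward is routine linear algebra and runs parallel to the derivation behind Theorem~\ref{thm:prediction}. (An alternative is to form the joint Gaussian of $(\mathbf Y_v,\mathrm{vec}(\mathbf A\mathbf Z))$ and apply the conditioning formula using the explicit precision in Lemma~\ref{lemma:Y_v_inv_cov}, but the factorwise argument avoids manipulating $nk\times nk$ matrices.)
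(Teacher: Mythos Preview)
Your argument is correct. The paper itself omits the proof, stating only that the corollary is a direct consequence of Theorem~\ref{thm:prediction}; in the proof of that theorem the posterior of $\mathbf Z_{vt}$ is obtained by completing the square in the joint density and then observing that, under Assumption~\ref{assumption:A}, $\mathbf{\hat A}_v^T\mathbf{\hat A}_v=\mathbf I_{nd}$ forces the posterior precision $(\mathbf I_{nd}/\hat\sigma_0^2+\bm{\hat\Sigma}_v^{-1})$ to be block diagonal, after which one linearly transforms by $\mathbf{\hat A}_v$ to get the law of $\mathrm{vec}(\mathbf A\mathbf Z)=\mathbf{\hat A}_v\mathbf Z_{vt}$.

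Your route reaches the same endpoint but trades that $nd\times nd$ matrix algebra for a probabilistic independence argument: you rotate the observations through an orthogonal completion of $\mathbf{\hat A}$ and verify directly that the pairs $(\mathbf Z_l^T,\mathbf w_l)$ are mutually independent Gaussian blocks, so each factor's posterior is a standard one-dimensional GP regression. This is arguably cleaner conceptually and avoids ever forming the stacked vectors $\mathbf Z_{vt}$ and $\mathbf A_v$, at the small cost of introducing the auxiliary $\mathbf A_\perp$. The step you flagged as the potential obstacle---independence after rotation---is fine exactly as you wrote it: the cross-covariances vanish because the columns of $\mathbf E$ are i.i.d.\ with spherical covariance and $[\mathbf{\hat A},\mathbf A_\perp]$ is orthogonal, so joint normality upgrades uncorrelatedness to independence. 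Both approaches ultimately rely on the same fact (orthonormality of the $\mathbf{\hat a}_l$ decouples the $d$ factors), and your Kronecker bookkeeping in the final step is equivalent to the paper's computation $\mathbf{\hat A}_v\,\mathrm{blockdiag}(\hat\sigma_0^2\mathbf{\hat D}_l)\,\mathbf{\hat A}_v^T=\hat\sigma_0^2\sum_l\mathbf{\hat D}_l\otimes\mathbf{\hat a}_l\mathbf{\hat a}_l^T$.
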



The Corollary \ref{cor:posterior_AZ}  is a direct consequence of Theorem \ref{thm:prediction}, so the proof is omitted. Note that the uncertainty of the parameters and the factor loading matrix are not taken into consideration for predictive distribution of $\mathbf Y(\mathbf x^*)$ in Theorem \ref{thm:prediction} and the posterior distribution of $\mathbf A \mathbf Z$ in Corollary \ref{cor:posterior_AZ}, because of the use of the plug-in estimator for $(\mathbf A,\sigma^2_0, \bm \sigma^2, \bm \gamma)$. The resulting posterior credible interval may be narrower than it should be when the sample size is small to moderate. The uncertainty in $\mathbf A$ and other model parameters could be obtained by Bayesian analysis with a prior placed on these parameters for these scenarios.  


 \subsection{Mean structure}
 \label{subsec:mean_structure}
 In many applications, the outputs are not centered at zero. For instance, \cite{Bayarri09} and \cite{Gu2016PPGaSP} studied emulating the height of the pyroclastic flow generated from TITAN2D computer model, where the flow volume in the chamber is positively correlated to height of the flow at each spatial coordinate. Thus, modeling the flow volume as a covariate in the mean function typically improves the accuracy of the emulator. When $\mathbf Y$ is not centered around zero, one often subtracts the mean of each row of $\mathbf Y$ before the inference  \citep{higdon2008computer,paulo2012calibration}. The full Bayesian analysis of the regression parameters are discussed in coregionalization models of multivariate spatially correlated data (see e.g. \cite{gelfand2004nonstationary}) using the Markov Chain Monte Carlo (MCMC) algorithm, but the computation may be too complex to implement in many studies. 
 
 
 
 Consider the latent factor model with a mean structure for a $k$-dimensional output vector at the input $\mathbf x$, 
	\begin{equation}
	\mathbf y(\mathbf x)= (\mathbf h(\mathbf x) \mathbf B)^T+ \mathbf A \mathbf z(\mathbf x) + \bm \epsilon,
	\label{equ:model_with_mean}
	\end{equation}
where $\mathbf h(\mathbf x):=( h_1(\mathbf x),..., h_q(\mathbf x) )$ is $1\times q$ known mean basis function related to input $\mathbf x$ and possibly other covariates, $\mathbf B=(\bm \beta_1,...,\bm \beta_k)$ is a $q \times k$ matrix of the regression parameters. The regression parameters could be different for each row of the outcomes, and $\bm \epsilon \sim N(0, \sigma^2_0 \mathbf I_k)$ is a vector of the independent Gaussian noises, with $\mathbf I_k$ being the $k\times k$ identity matrix.


  For any set of inputs $\{\mathbf x_1,...,\mathbf x_n \}$, we assume $\mathbf Z_l=(z_l(\mathbf x_1),...,z_l(\mathbf x_n))$ follows a multivariate normal distribution  
	\begin{equation}
	\mathbf Z^T_l \sim \MN(\mathbf 0, \bm \Sigma_l),
	\end{equation} d
	where the $(i,\, j)$ entry of $\bm \Sigma_l$ is parameterized by  $K_l(\mathbf x_i, \mathbf x_j)$ for $l=1,...,d$ and $1\leq i,j\leq n$.


Denote $\mathbf H$ the $n\times q$ matrix with $(i,j)$th term being $h_j(\mathbf x_i)$ for $1\leq i\leq n$ and  $q<n$. We let $n>q$ and assume $\mathbf H$ is a full rank matrix. Further denote $\mathbf M= \mathbf I- \mathbf H (\mathbf H^T\mathbf H)^{-1}\mathbf H^T$. We apply a Bayesian approach for the regression parameters by assuming the objective  prior  $\pi(\mathbf B)\propto 1$ \cite{berger2001objective,berger2009formal}. We first marginalize out $\mathbf B$ and then marginalize out $\mathbf Z$ to obtain the marginal likelihood for estimating the other parameters .

\begin{lemma}
Let the prior of the regression parameters be $\pi(\mathbf B)\propto 1$. Under Assumption  \ref{assumption:A}, after marginalizing out $\mathbf B$ and $\mathbf Z$, the maximum likelihood estimator for $\sigma^2_0$ is 
\begin{equation}
\hat \sigma^2_0=\frac{S^2_M}{k(n-q)},
\label{equ:sigma_2_0_with_mean}
\end{equation}
where $S^2_M= \tr(\mathbf Y  {\mathbf M} \mathbf Y^T)-\sum^d_{l=1} \mathbf a^T_l \mathbf Y  {\mathbf M}  ({\mathbf M}+ \tau^{-1}_l \mathbf K^{-1}_l )^{-1}  {\mathbf M} \mathbf Y^T \mathbf a_l$. Moreover, the marginal density of the data satisfies
 \begin{align}
 p( \mathbf Y \mid \mathbf A,  \bm { \tau},  \bm { \gamma},  { \hat \sigma}^2_0 )  &\propto \left\{\prod^d\limits_{l=1} \left|\tau_l \mathbf K_l +\mathbf I_n \right|^{-1/2} \left|  \mathbf H^T (\tau_l \mathbf K_l+\mathbf I_n)^{-1} \mathbf H \right|^{-\frac{1}{2}}\right\}  \left| S^2_M \right|^{- \left(\frac{k(n-q)}{2}\right) }.
 \label{equ:marginal_with_mean}  
  \end{align}  
%
%
%
\label{lemma:Y_mean_structure}
\end{lemma}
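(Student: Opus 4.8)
The plan is to diagonalize the output space using Assumption~\ref{assumption:A} so that the model splits into $k$ independent univariate regressions, marginalize the regression coefficients one row at a time with the standard flat‑prior Gaussian integral, simplify the resulting determinant and quadratic‑form pieces, and finally profile out $\sigma_0^2$. Concretely, I would first complete $\mathbf A$ to a $k\times k$ orthogonal matrix $\mathbf Q=[\mathbf A\mid\mathbf A_\perp]$ and put $\widetilde{\mathbf Y}=\mathbf Q^T\mathbf Y$. Left‑multiplying (\ref{equ:model_with_mean}) by $\mathbf Q^T$ shows that $\widetilde{\mathbf Y}$ obeys a model of the same form, with loading matrix $\mathbf Q^T\mathbf A$ (whose top $d\times d$ block is $\mathbf I_d$ and the rest zero), regression matrix $\mathbf B\mathbf Q$ (on which $\pi(\mathbf B\mathbf Q)\propto 1$ still, since right multiplication by an orthogonal matrix has unit Jacobian), and the same i.i.d.\ noise. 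Because the $z_l$'s are independent across $l$ (as witnessed by the block form in Lemma~\ref{lemma:Y_v_inv_cov}) and the noise is independent across coordinates, the rotated model decouples: row $l\le d$ of $\widetilde{\mathbf Y}$, which equals $\mathbf a_l^T\mathbf Y$, satisfies $(\mathbf a_l^T\mathbf Y)^T\mid\bm\beta_l\sim\MN(\mathbf H\bm\beta_l,\ \bm\Sigma_l+\sigma_0^2\mathbf I_n)$ after integrating out the single relevant factor $\mathbf Z_l$, while row $l>d$ is a plain regression $\MN(\mathbf H\bm\beta_l,\sigma_0^2\mathbf I_n)$. Marginalizing $\mathbf B$ and $\mathbf Z$ therefore amounts to marginalizing each $\bm\beta_l$ separately.

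For a Gaussian $\mathbf w\mid\bm\beta\sim\MN(\mathbf H\bm\beta,\mathbf V)$ with $\pi(\bm\beta)\propto1$, completing the square gives $\int\MN(\mathbf w;\mathbf H\bm\beta,\mathbf V)\,d\bm\beta\propto|\mathbf V|^{-1/2}\,|\mathbf H^T\mathbf V^{-1}\mathbf H|^{-1/2}\exp(-\frac12\mathbf w^T\mathbf P_{\mathbf V}\mathbf w)$ with $\mathbf P_{\mathbf V}=\mathbf V^{-1}-\mathbf V^{-1}\mathbf H(\mathbf H^T\mathbf V^{-1}\mathbf H)^{-1}\mathbf H^T\mathbf V^{-1}$. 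Writing $\bm\Sigma_l=\sigma_0^2\tau_l\mathbf K_l$ so that $\bm\Sigma_l+\sigma_0^2\mathbf I_n=\sigma_0^2(\tau_l\mathbf K_l+\mathbf I_n)$, the determinant factors coming from all $k$ rows multiply to a constant times $(\sigma_0^2)^{-k(n-q)/2}\prod_{l=1}^d|\tau_l\mathbf K_l+\mathbf I_n|^{-1/2}|\mathbf H^T(\tau_l\mathbf K_l+\mathbf I_n)^{-1}\mathbf H|^{-1/2}$, the rows $l>d$ contributing only the $(\sigma_0^2)^{-(n-q)/2}$ piece and the constant $|\mathbf H^T\mathbf H|^{-1/2}$.

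The crux is identifying the quadratic forms. I would introduce an $n\times(n-q)$ matrix $\mathbf N$ with orthonormal columns spanning $\mathrm{col}(\mathbf H)^\perp$, so $\mathbf N^T\mathbf N=\mathbf I_{n-q}$ and $\mathbf N\mathbf N^T=\mathbf M$. The classical identity $\mathbf W^{-1}-\mathbf W^{-1}\mathbf H(\mathbf H^T\mathbf W^{-1}\mathbf H)^{-1}\mathbf H^T\mathbf W^{-1}=\mathbf N(\mathbf N^T\mathbf W\mathbf N)^{-1}\mathbf N^T$ (verified by multiplying both sides by $\mathbf W$ and checking equality on the complementary ranges $\{\mathbf N\mathbf u\}$ and $\{\mathbf W^{-1}\mathbf H\mathbf v\}$) gives, with $\mathbf W=\tau_l\mathbf K_l+\mathbf I_n$, that $\sigma_0^2\mathbf P_{\mathbf V_l}=\mathbf N(\mathbf N^T\mathbf W\mathbf N)^{-1}\mathbf N^T$. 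Conjugating by $\mathbf N^T(\cdot)\mathbf N$ reduces the claimed identity $\sigma_0^2\mathbf P_{\mathbf V_l}=\mathbf M-\mathbf M(\mathbf M+\tau_l^{-1}\mathbf K_l^{-1})^{-1}\mathbf M$ to $(\mathbf I+\mathbf D)^{-1}=\mathbf I-\mathbf D+\mathbf D(\mathbf I+\mathbf D)^{-1}\mathbf D$ with $\mathbf D=\tau_l\mathbf N^T\mathbf K_l\mathbf N$ (after one application of the Woodbury formula to $(\mathbf M+\tau_l^{-1}\mathbf K_l^{-1})^{-1}=(\mathbf N\mathbf N^T+\tau_l^{-1}\mathbf K_l^{-1})^{-1}$), and this holds because $\mathbf D(\mathbf I+\mathbf D)^{-1}\mathbf D=\mathbf D-\mathbf I+(\mathbf I+\mathbf D)^{-1}$. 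Together with the trivial $\mathbf P_{\sigma_0^2\mathbf I_n}=\sigma_0^{-2}\mathbf M$ for the rows $l>d$, and using that row $l$ of $\widetilde{\mathbf Y}$ equals $\mathbf q_l^T\mathbf Y$ with $\mathbf q_l$ the columns of $\mathbf Q$, the trace identity $\tr(\mathbf Q^T\mathbf Y\mathbf M\mathbf Y^T\mathbf Q)=\tr(\mathbf Y\mathbf M\mathbf Y^T)$ collapses the sum of all quadratic forms to $\sigma_0^{-2}\left(\tr(\mathbf Y\mathbf M\mathbf Y^T)-\sum_{l=1}^d\mathbf a_l^T\mathbf Y\mathbf M(\mathbf M+\tau_l^{-1}\mathbf K_l^{-1})^{-1}\mathbf M\mathbf Y^T\mathbf a_l\right)=\sigma_0^{-2}S_M^2$. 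I expect this step to be the only real obstacle: it is the sole place where a genuine matrix identity rather than bookkeeping is needed, and the appearance of the singular projector $\mathbf M$ inside an inverse makes it delicate, with the substitution $\mathbf M=\mathbf N\mathbf N^T$ being exactly what makes both the profiling identity and the Woodbury reduction go through.

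Assembling the above, $p(\mathbf Y\mid\mathbf A,\bm\tau,\bm\gamma,\sigma_0^2)\propto(\sigma_0^2)^{-k(n-q)/2}\left\{\prod_{l=1}^d|\tau_l\mathbf K_l+\mathbf I_n|^{-1/2}|\mathbf H^T(\tau_l\mathbf K_l+\mathbf I_n)^{-1}\mathbf H|^{-1/2}\right\}\exp\!\left(-S_M^2/(2\sigma_0^2)\right)$, where $S_M^2$ does not involve $\sigma_0^2$. Differentiating the logarithm in $\sigma_0^2$ and setting it to zero yields the stationary point $\hat\sigma_0^2=S_M^2/(k(n-q))$, which is (\ref{equ:sigma_2_0_with_mean}); substituting it back turns the $\sigma_0^2$‑dependent part into a quantity proportional to $|S_M^2|^{-k(n-q)/2}$, which gives (\ref{equ:marginal_with_mean}).
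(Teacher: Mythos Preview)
Your proof is correct but follows a genuinely different route from the paper. The paper first integrates out $\mathbf B$ under the flat prior, which replaces the Gaussian exponent by $(\mathbf Y-\mathbf A\mathbf Z)\tilde{\mathbf M}(\mathbf Y-\mathbf A\mathbf Z)^T$ with $\tilde{\mathbf M}=\mathbf M/\sigma_0^2$; it then writes the joint density of $(\mathbf Y,\mathbf Z)$ in vectorized Kronecker form, completes the square in $\mathbf Z_{vt}$, and marginalizes $\mathbf Z$ directly, obtaining the quadratic form $S_M^2$ in one stroke and the determinants $|\tilde{\mathbf M}\bm\Sigma_l+\mathbf I_n|$, which it then rewrites as $|\tau_l\mathbf K_l+\mathbf I_n|\,|\mathbf H^T\mathbf H|^{-1}\,|\mathbf H^T(\tau_l\mathbf K_l+\mathbf I_n)^{-1}\mathbf H|$ via the matrix determinant lemma and Woodbury. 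You instead exploit Assumption~\ref{assumption:A} up front by completing $\mathbf A$ to an orthogonal $\mathbf Q$ and rotating to $\tilde{\mathbf Y}=\mathbf Q^T\mathbf Y$, which decouples the model into $k$ independent univariate regressions (rows $l\le d$ with covariance $\sigma_0^2(\tau_l\mathbf K_l+\mathbf I_n)$, rows $l>d$ with $\sigma_0^2\mathbf I_n$); marginalizing each $\bm\beta_l$ separately then produces the determinant factors immediately, and the only nontrivial step is your identification $\sigma_0^2\mathbf P_{\mathbf V_l}=\mathbf M-\mathbf M(\mathbf M+\tau_l^{-1}\mathbf K_l^{-1})^{-1}\mathbf M$ via the $\mathbf N$-basis for $\mathrm{col}(\mathbf H)^\perp$, which you verify cleanly. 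Your approach avoids all Kronecker bookkeeping and makes the row-wise independence structure explicit, at the cost of an extra matrix identity for the quadratic form; the paper's approach keeps everything in one block calculation and needs no orthogonal completion of $\mathbf A$, but pays with heavier vectorization and a separate determinant simplification. Both are valid and of comparable length.
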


\begin{remark}
Under Assumption \ref{assumption:A}, the likelihood for $(\bm \tau, \bm \gamma)$ in (\ref{equ:marginal_with_mean}) are equivalent to the maximum marginal likelihood estimator by marginalizing out both $\mathbf B$ and $\sigma^2_0$ using the objective prior $\pi(\mathbf B, \sigma^2_0)\propto 1/\sigma^2_0$, instead of maximizing over $\sigma^2_0$. 
\end{remark}

  Since there is no closed-form expression for the parameters $(\bm \tau, \bm \gamma)$ in the kernels, one can numerically maximize the Equation (\ref{equ:marginal_with_mean})  to estimate $\mathbf A$ and other   parameters.  
  
\begin{align}
\mathbf {\hat A}&= \argmax_{\mathbf A} \sum^d_{l=1} \mathbf a^T_l  \mathbf G_{l,M}  \mathbf a_l,  \quad \text{s.t.} \quad \mathbf A^T \mathbf A=\mathbf I_d, \label{equ:est_A_with_mean} \\
( \bm {\hat \tau},  \bm {\hat \gamma})&=\argmax_{(\bm \tau, \bm \gamma)} p( \mathbf Y \mid \mathbf {\hat A},   \bm { \tau},  \bm { \gamma} ).
\label{equ:est_tau_gamma_with_mean}
\end{align}

When $ {\bm \Sigma}_1=...= {\bm \Sigma}_d$, the closed-form expression of $\mathbf {\hat A}$ can be obtained similarly in Theorem \ref{thm:est_A_shared_cov}. In general, we can use the approach in \cite{wen2013feasible} for solving the optimization problem in (\ref{equ:est_A_with_mean}). After obtaining $\bm {\hat \tau}$ and $\hat \sigma^2_0$, we transform them to get $\hat {\sigma}^2_l=\hat \tau_l \hat \sigma^2_0$ for $l=1,...,d$.

Let $\bm {\hat \Sigma}_l$ be a matrix with the $(i,j)$-term as $\hat \sigma^2_l \hat K_l(\mathbf x_i, \mathbf x_j)$, where $\hat K_l(\mathbf x_i, \mathbf x_j)$ is the kernel function after plugging the estimator $ {\hat \gamma}_l$ for $1\leq l\leq d$. We first marginalize out $\mathbf B$ and then marginalize out $\mathbf Z$. The rest of the parameters are estimated by the maximum marginal likelihood estimator by (\ref{equ:sigma_2_0_with_mean}), (\ref{equ:est_A_with_mean}) and (\ref{equ:est_tau_gamma_with_mean}) in the  predictive distribution given below.


\begin{theorem}
Under the Assumption \ref{assumption:A} and assume the objective prior $\pi(\mathbf B)\propto 1$.  After  marginalizing out $\mathbf B$, $\mathbf Z$, and plugging in the maximum marginal likelihood estimator of $(\mathbf {A}, \bm {\gamma},\bm \sigma^2, \sigma^2_0)$, the predictive distribution of model (\ref{equ:model_with_mean}) for any $\mathbf x^*$ is 
 \[\mathbf Y(\mathbf x^*) \mid \mathbf Y, \mathbf {\hat A},  \bm {\hat \gamma}, \bm \hat {\bm \sigma}^2, { \hat \sigma}^2_0 \sim \MN \left(\bm {\hat \mu}_M^*(\mathbf x^*), \bm {\hat \Sigma}_M^*(\mathbf x^*) \right).\]
 Here
 \begin{align}
 \bm {\hat \mu}_M^*(\mathbf x^*)&= 
( \mathbf h(\mathbf x^*){ \hat {\mathbf B} })^T  +  \mathbf {\hat A}  \mathbf {\hat z}_M(\mathbf x^*),
 \label{equ:hat_mu} \\
 \bm {\hat \Sigma}^*_M(\mathbf x^*)&=  \mathbf {\hat A} \mathbf {\hat D}_M(\mathbf x^*) \mathbf {\hat A}^T +\hat \sigma^2_0(1+ \mathbf h(\mathbf x^*)(\mathbf H^T\mathbf H)^{-1}\mathbf h^T(\mathbf x^*))  (\mathbf I_k-\mathbf {\hat A} \mathbf  {\hat A}^T),
 \label{equ:hat_Sigma}
 \end{align} 
 %
 where $\mathbf{\hat B}=(\mathbf H^T \mathbf H)^{-1}\mathbf{H}^T (\mathbf Y -\mathbf {\hat A} \mathbf {\hat Z}_M)^T  $, $\mathbf {\hat Z}_M= (\mathbf {\hat Z}^T_{1,M},...,\mathbf {\hat Z}^T_{d,M})^T$ with $\mathbf {\hat Z}_{l,M}=\mathbf a^T_l \mathbf Y \mathbf M (  \hat{\bm \Sigma}_l \mathbf M+ \hat \sigma^2_0 \mathbf I_n )^{-1} \hat {\bm \Sigma}_l  $,  $\mathbf {\hat z}_M(\mathbf x^*)=( {\hat z}_{1,M}(\mathbf x^*),...,{\hat z}_{d,M}(\mathbf x^*) )^T$ with  $ {\hat z}_{l,M}(\mathbf x^*)=\hat{\bm \Sigma}^T_l(\mathbf x^*)  ( \hat{\bm \Sigma}_l\mathbf M+ \hat \sigma^2_0 \mathbf I_n )^{-1} \mathbf M \mathbf Y  \mathbf a_l$,  for $l=1,...,d$, and $\mathbf {\hat D}_M(\mathbf x^*) $ is a diagonal matrix with the $l$th term:
\begin{align*}
{\hat D}_{l,M}(\mathbf x^*)&=\hat \sigma^2_l \hat K_l(\mathbf x^*,\, \mathbf x^*) +\hat \sigma^2_0-  \bm {\hat \Sigma}^T_l(\mathbf x^*) \tilde {\bm \Sigma}_l^{-1} \bm {\hat \Sigma}_l(\mathbf x^*) \\
&\quad \quad +(\mathbf h^T(\mathbf x^*)- \mathbf H^T  \tilde {\bm \Sigma}^{-1}_l  \mathbf {\hat \Sigma}_l(\mathbf x^*) )^T (\mathbf H^{T}  \tilde {\bm \Sigma}^{-1}_l   \mathbf H  )^{-1} (\mathbf h^T(\mathbf x^*)- \mathbf H^T  \tilde {\bm \Sigma}^{-1}_l  \mathbf {\hat \Sigma}_l(\mathbf x^*) ), 
\end{align*}
with $\tilde {\bm \Sigma}_l={\bm {\hat \Sigma}_l+\hat \sigma^2_0 \mathbf I_n }$ for $l=1,...,d$.
\label{thm:prediction_with_mean}
\end{theorem}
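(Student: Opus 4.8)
The plan is to rotate the output coordinates by an orthogonal completion of $\mathbf A$: this decouples the $d$ ``signal'' directions from the $k-d$ ``noise'' directions, and in each signal direction the model becomes a single‑output Gaussian‑process regression with an unknown linear mean (universal kriging with a nugget). Since the statement plugs in point estimates, I would fix $(\mathbf A,\bm\gamma,\bm\sigma^2,\sigma^2_0)$ and substitute the MMLEs only at the end, so it suffices to compute the law of $\mathbf y(\mathbf x^*)$ given $\mathbf Y$ after marginalizing $\mathbf B$ and $\mathbf Z$. Concretely, pick a $k\times(k-d)$ matrix $\mathbf A_\perp$ so that $[\mathbf A,\mathbf A_\perp]$ is orthogonal; then $\mathbf A^T\mathbf A_\perp=\mathbf 0$ and $\mathbf A\mathbf A^T+\mathbf A_\perp\mathbf A_\perp^T=\mathbf I_k$. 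Stacking (\ref{equ:model_with_mean}) as $\mathbf Y=\mathbf B^T\mathbf H^T+\mathbf A\mathbf Z+\mathbf E$ ($\mathbf E$ the $k\times n$ noise matrix) and left‑multiplying by $[\mathbf A,\mathbf A_\perp]^T$ gives $\mathbf A^T\mathbf Y=\mathbf B_1^T\mathbf H^T+\mathbf Z+\mathbf A^T\mathbf E$ and $\mathbf A_\perp^T\mathbf Y=\mathbf B_2^T\mathbf H^T+\mathbf A_\perp^T\mathbf E$, where $\mathbf B_1=\mathbf B\mathbf A$ and $\mathbf B_2=\mathbf B\mathbf A_\perp$ carry independent flat priors since $\mathbf B\mapsto(\mathbf B_1,\mathbf B_2)$ is a linear bijection. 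The columns of $\mathbf A^T\mathbf E$ are i.i.d.\ $N(\mathbf 0,\sigma^2_0\mathbf I_d)$ and uncorrelated with those of $\mathbf A_\perp^T\mathbf E$, and $\mathbf Z$ appears only in the first block, so $(\mathbf A^T\mathbf Y,\mathbf A^T\mathbf y(\mathbf x^*))$ is independent of $(\mathbf A_\perp^T\mathbf Y,\mathbf A_\perp^T\mathbf y(\mathbf x^*))$; I would treat the two blocks separately and recombine via $\mathbf y(\mathbf x^*)=\mathbf A\,\mathbf A^T\mathbf y(\mathbf x^*)+\mathbf A_\perp\,\mathbf A_\perp^T\mathbf y(\mathbf x^*)$ using $\mathbf A_\perp\mathbf A_\perp^T=\mathbf I_k-\mathbf A\mathbf A^T$.

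Next I would dispatch the two blocks. The $\mathbf A_\perp$‑block $\mathbf A_\perp^T\mathbf Y=\mathbf B_2^T\mathbf H^T+\mathbf A_\perp^T\mathbf E$, with i.i.d.\ rows of variance $\sigma^2_0$ and a flat prior, is a textbook Gaussian linear model: the posterior of $\mathbf B_2$ is Gaussian around the OLS fit $(\mathbf H^T\mathbf H)^{-1}\mathbf H^T\mathbf Y^T\mathbf A_\perp$, and the predictive of $\mathbf A_\perp^T\mathbf y(\mathbf x^*)$ is Gaussian with mean $\mathbf A_\perp^T\mathbf Y\mathbf H(\mathbf H^T\mathbf H)^{-1}\mathbf h^T(\mathbf x^*)$ and covariance $\sigma^2_0(1+\mathbf h(\mathbf x^*)(\mathbf H^T\mathbf H)^{-1}\mathbf h^T(\mathbf x^*))\mathbf I_{k-d}$; multiplying by $\mathbf A_\perp$ and using $\mathbf A_\perp\mathbf A_\perp^T=\mathbf I_k-\mathbf A\mathbf A^T$ produces the $(\mathbf I_k-\mathbf A\mathbf A^T)$‑term of the predictive covariance and the $(\mathbf I_k-\mathbf A\mathbf A^T)\mathbf Y\mathbf H(\mathbf H^T\mathbf H)^{-1}\mathbf h^T(\mathbf x^*)$ piece of the mean. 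In the $\mathbf A$‑block the $d$ rows are mutually independent (factor processes are independent, $\mathbf A^T\mathbf E$ has independent rows, the columns of $\mathbf B_1$ are a priori independent), so it splits into $d$ copies of $\mathbf Y^T\mathbf a_l=\mathbf H\bm\beta_{1,l}+\mathbf Z_l^T+\bm e_l$ with $\mathbf Z_l^T\sim\MN(\mathbf 0,\bm\Sigma_l)$, $\bm e_l\sim\MN(\mathbf 0,\sigma^2_0\mathbf I_n)$ and a flat prior on $\bm\beta_{1,l}$. Marginalizing $\bm\beta_{1,l}$ (equivalently, replacing the likelihood's quadratic form by its minimum over $\bm\beta_{1,l}$, which inserts $\mathbf M=\mathbf I-\mathbf H(\mathbf H^T\mathbf H)^{-1}\mathbf H^T$) and conditioning yields the standard universal‑kriging‑with‑nugget answer: $\E[\mathbf Z_l^T\mid\mathbf Y]=\mathbf{\hat Z}_{l,M}^T$, the predictive mean of $z_l(\mathbf x^*)$ is $\hat{\bm\Sigma}_l^T(\mathbf x^*)\hat{\bm\Sigma}_l^{-1}\E[\mathbf Z_l^T\mid\mathbf Y]={\hat z}_{l,M}(\mathbf x^*)$, and the predictive variance of $\bm\beta_{1,l}^T\mathbf h^T(\mathbf x^*)+z_l(\mathbf x^*)+(\mathbf A^T\bm\epsilon^*)_l$ is $\hat\sigma^2_l\hat K_l(\mathbf x^*,\mathbf x^*)+\hat\sigma^2_0$ minus the nugget‑kriging reduction $\hat{\bm\Sigma}_l^T(\mathbf x^*)\tilde{\bm\Sigma}_l^{-1}\hat{\bm\Sigma}_l(\mathbf x^*)$ plus the inflation from estimating $\bm\beta_{1,l}$, i.e.\ exactly ${\hat D}_{l,M}(\mathbf x^*)$ with $\tilde{\bm\Sigma}_l=\hat{\bm\Sigma}_l+\hat\sigma^2_0\mathbf I_n$. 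Row independence makes $\mathbf{\hat D}_M(\mathbf x^*)$ diagonal, and multiplying back by $\mathbf A$ yields the $\mathbf A\mathbf{\hat D}_M(\mathbf x^*)\mathbf A^T$ and $\mathbf A\mathbf{\hat z}_M(\mathbf x^*)$ contributions.

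To reassemble the covariance, block‑diagonality in the rotated coordinates — the cross term vanishes because the two blocks share no latent randomness — together with $\mathbf A_\perp\mathbf A_\perp^T=\mathbf I_k-\mathbf A\mathbf A^T$ gives the stated $\bm{\hat\Sigma}^*_M(\mathbf x^*)$. For the mean, I would use the tower rule: given $\mathbf Z_l$, the posterior mean of $\bm\beta_{1,l}$ is the OLS fit of $\mathbf Y^T\mathbf a_l-\mathbf Z_l^T$ on $\mathbf H$, so $\E[\bm\beta_{1,l}\mid\mathbf Y]=(\mathbf H^T\mathbf H)^{-1}\mathbf H^T(\mathbf Y^T\mathbf a_l-\mathbf{\hat Z}_{l,M}^T)$, while $\E[\mathbf B_2\mid\mathbf Y]=(\mathbf H^T\mathbf H)^{-1}\mathbf H^T\mathbf Y^T\mathbf A_\perp$; combining these through $\mathbf B=\mathbf B_1\mathbf A^T+\mathbf B_2\mathbf A_\perp^T$ and the identities $(\mathbf Y-\mathbf{\hat A}\mathbf{\hat Z}_M)^T\mathbf a_l=\mathbf Y^T\mathbf a_l-\mathbf{\hat Z}_{l,M}^T$, $(\mathbf Y-\mathbf{\hat A}\mathbf{\hat Z}_M)^T\mathbf A_\perp=\mathbf Y^T\mathbf A_\perp$ gives $\E[\mathbf B\mid\mathbf Y]=(\mathbf H^T\mathbf H)^{-1}\mathbf H^T(\mathbf Y-\mathbf{\hat A}\mathbf{\hat Z}_M)^T=\mathbf{\hat B}$, hence $\bm{\hat\mu}^*_M(\mathbf x^*)=(\mathbf h(\mathbf x^*)\mathbf{\hat B})^T+\mathbf{\hat A}\mathbf{\hat z}_M(\mathbf x^*)$. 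Substituting the MMLEs $(\mathbf{\hat A},\bm{\hat\gamma},\bm{\hat\sigma}^2,\hat\sigma^2_0)$ then finishes the proof.

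I expect the hard part to be purely algebraic: matching forms. Gaussian conditioning naturally produces GLS/precision expressions such as $(\bm\Sigma_l^{-1}+\sigma^{-2}_0\mathbf M)^{-1}$ and $(\mathbf H^T\tilde{\bm\Sigma}_l^{-1}\mathbf H)^{-1}$, and turning these into the $\mathbf M$‑filtered closed forms ${\hat z}_{l,M}$, $\mathbf{\hat Z}_{l,M}$, ${\hat D}_{l,M}$ of the statement requires the Woodbury identity together with the commutation identity $(\bm\Sigma_l\mathbf M+\sigma^2_0\mathbf I_n)^{-1}\bm\Sigma_l=\bm\Sigma_l(\mathbf M\bm\Sigma_l+\sigma^2_0\mathbf I_n)^{-1}$ (so that $\E[\mathbf Z_l^T\mid\mathbf Y]=(\bm\Sigma_l\mathbf M+\sigma^2_0\mathbf I_n)^{-1}\bm\Sigma_l\mathbf M\,\mathbf Y^T\mathbf a_l$ can be rewritten with $\mathbf M$ on the outside), plus a check that the flat‑prior marginal is proper, which uses that $\mathbf H$ has full column rank and $n>q$ as assumed before Lemma~\ref{lemma:Y_mean_structure}. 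A convenient sanity check throughout is that taking $q=0$ makes $\mathbf M=\mathbf I_n$, kills every $\mathbf H$‑correction, and collapses the result to Theorem~\ref{thm:prediction}.
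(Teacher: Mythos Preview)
Your proposal is correct and takes a genuinely different route from the paper. The paper works directly in the original coordinates: it first reads off the posterior $\mathbf Z_{vt}\mid\mathbf Y$ from the joint density derived in Lemma~\ref{lemma:Y_mean_structure}, then the conditional $\mathbf B\mid\mathbf Y,\mathbf Z$, and computes $\bm{\hat\mu}^*_M$ and $\bm{\hat\Sigma}^*_M$ by two layers of iterated expectation and variance; the diagonal structure of $\mathbf{\hat D}_M(\mathbf x^*)$ emerges only after expanding $(\tilde{\mathbf M}+\hat{\bm\Sigma}_l^{-1})^{-1}$ via Woodbury and invoking a separate matrix identity (Lemma~\ref{lemma:identity_useful1}, which is precisely the equivalence between your OLS-on-residuals form and the GLS form). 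Your rotation by $[\mathbf A,\mathbf A_\perp]$ front-loads the structural insight: it makes the block-independence and the diagonality of $\mathbf{\hat D}_M(\mathbf x^*)$ immediate, and reduces each signal coordinate to a textbook universal-kriging-with-nugget problem whose predictive mean and variance are already in the target form $\hat D_{l,M}(\mathbf x^*)$. The trade-off is that you import the kriging formula as a known result, whereas the paper's route is self-contained but needs Lemma~\ref{lemma:identity_useful1} to reach the same clean expressions; the commutation identity $(\bm\Sigma_l\mathbf M+\sigma^2_0\mathbf I_n)^{-1}\bm\Sigma_l=\bm\Sigma_l(\mathbf M\bm\Sigma_l+\sigma^2_0\mathbf I_n)^{-1}$ you flag is the remaining bookkeeping on your side and plays essentially the same role that Lemma~\ref{lemma:identity_useful1} plays in the paper. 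Your $q=0$ sanity check and the propriety remark using the full-rank-$\mathbf H$, $n>q$ assumption are both appropriate.
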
 

In Theorem \ref{thm:prediction_with_mean}, the estimated mean parameters are $\mathbf {\hat B}=\E[\mathbf B \mid  \mathbf Y, \mathbf {\hat A},  \bm {\hat \gamma}, \bm \hat {\bm \sigma}^2, { \hat \sigma}^2_0]$, which could be used for inferring the trend of some given covariates (e.g. the gridded temperature example in Section \ref{subsec:gridded_temperature}).     

Denote $\mathbf Y(\mathbf x^*)=(\mathbf Y^T_1(\mathbf x^*), \,  \mathbf Y^T_2(\mathbf x^*))^T$ where $\mathbf Y_1(\mathbf x^*)$  and $\mathbf Y_2(\mathbf x^*)$ are two vectors of dimensions $k_1$ and $k_2$ ($k_1+k_2=k$), respectively. Assuming the same conditions in Theorem \ref{thm:prediction_with_mean} hold, if one observes both $\mathbf Y_1(\mathbf x^*)$ and $\mathbf Y$, the predictive distribution of $\mathbf Y_2(\mathbf x^*)$ follows 
\begin{equation}
\mathbf Y_2(\mathbf x^*) \mid \mathbf Y_1(\mathbf x^*),  \mathbf Y, \mathbf {\hat A},  \bm {\hat \gamma},  \bm {\hat \sigma}^2, { \hat \sigma}^2_0 \sim \MN \left(\bm {\hat \mu}_{M,2|1}^*(\mathbf x^*), \bm {\hat \Sigma}_{M,2|1}^*(\mathbf x^*) \right).
\label{equ:cond_pred_with_mean}
\end{equation}
where $\bm {\hat \mu}_{M,2|1}^*(\mathbf x^*)= \bm {\hat \mu}_{M,2}^*(\mathbf x^*)+\bm {\hat \Sigma}_{M,12}^*(\mathbf x^*)^T\bm {\hat \Sigma}_{M,11}^*(\mathbf x^*)^{-1}(\mathbf Y_1(\mathbf x^*)- \bm {\hat \mu}_{M,1}^*(\mathbf x^*)) $ with ${\hat \mu}_{M,1}^*(\mathbf x^*)$ and ${\hat \mu}_{M,2}^*(\mathbf x^*)$ being the first $k_1$ and last $k_2$ entries of ${\hat \mu}_{M}^*(\mathbf x^*)$; $\bm {\hat \Sigma}_{M,2|1}^*(\mathbf x^*)= \bm {\hat \Sigma}_{M,22}^*(\mathbf x^*)-\bm {\hat \Sigma}_{M,12}^*(\mathbf x^*)^T\bm {\hat \Sigma}_{M,11}^*(\mathbf x^*)^{-1}\bm {\hat \Sigma}_{M,12}^*$ with $\bm {\hat \Sigma}_{M,11}$, $\bm {\hat \Sigma}_{M,22}$ and $\bm {\hat \Sigma}_{M,12}$ being the first $k_1 \times k_1$, last $k_2\times k_2$ entries in the diagonals and $k_1\times k_2$ entries in the off-diagonals of  $\bm {\hat \Sigma}_{M}^*$, respectively.

\section{Comparison to other approaches}
\label{sec:comparison}
In this section, we compare our method to various other frequently used approaches and discuss their connections and differences using examples.  First of all, note that the maximum likelihood estimator (MLE) of the factor loading matrix $\mathbf A$ under the Assumption \ref{assumption:A} is $\mathbf U_0 \mathbf R$ (without marginalizing out $\mathbf Z$), where $\mathbf U_0$ is the first $d$ ordered eigenvectors of $\mathbf Y \mathbf Y^T$ and $\mathbf R$ is an arbitrary orthogonal rotation matrix. This corresponds to the solution of principal component analysis, which is widely used in the literature for the inference of the latent factor model. For example,  \citet{bai2002determining} and \citet{bai2003inferential} assume that $\mathbf A^T \mathbf A = k \mathbf I_d$ and estimate  $\mathbf A$ by $\sqrt{k} \mathbf U_0$ in modeling high-dimensional time series. The estimation of factor loading matrix by the PCA is also applied in emulating  multivariate outputs from a computer model \citep{higdon2008computer}, where the factor loading matrix is estimated by the singular value decomposition of the standardized output matrix.  
 


The principal axes of the PCA are the same with those obtained from the PPCA, in which the factor loading matrix  is estimated by the maximum marginal likelihood, after marginalizing out the independent and normally distributed factors \citep{tipping1999probabilistic}. The estimator of the factor loadings is found to be the first $d$ columns of $\mathbf {\tilde U}_0  (\mathbf {\tilde  D}_0-\sigma^2_0 \mathbf I_d) \mathbf R$, where $\mathbf {\tilde D}_0$ is a diagonal matrix whose $l$th diagonal term is the $l$th largest eigenvalues of $\mathbf Y \mathbf Y^T/n$ and $\mathbf R$ is an arbitrary $d\times d$ orthogonal rotation matrix.  


 

The PPCA gives a probabilistic model of the PCA by modeling $\mathbf Z$ via independent normal distributions. However, when outputs are correlated across different inputs, modeling the factor processes as independent normal distributions  may not sensible in some applications. In comparison, the factors are allowed to be correlated in GPPCA; and we marginalize the factors out to estimate $\mathbf A$ to account for the uncertainty. This is why our approach can be viewed as a generalized approach of the PPCA for the correlated data.


The second observation is that the estimation of the factor loading matrix in the PCA or PPCA typically assumes the data are standardized. However, the standardization process could cause a loss of information and the uncertainty in the standardization is typically not considered. This problem is also  resolved by GPPCA, where the intercept and other covariates can be included in the model and the mean parameters can be marginalized out in estimating the factor loading matrix, as discussed in Section \ref{subsec:mean_structure}.


Next we illustrate the difference between the GPPCA and PCA using Example \ref{eg:demonstration}.

\begin{example}
The data is sampled from the model (\ref{equ:model_1}) with the shared covariance matrix $\bm \Sigma_1=\bm \Sigma_2=\bm \Sigma$, where $x$ is equally spaced from $1$ to $n$ and the kernel function is assumed to follow (\ref{equ:matern_5_2}) with $\gamma=100$ and $\sigma^2=1$. We choose $k=2$, $d=1$ and $n=100$. Two scenarios are implemented with $\sigma^2_0=0.01$ and $\sigma^2_0=1$, respectively. The parameters $(\sigma^2_0, \sigma^2, \gamma)$ are assumed to be unknown and estimated from the data.
\label{eg:demonstration}
\end{example}

\begin{figure}[t]
\centering
  \begin{tabular}{c}
    \includegraphics[width=1\textwidth,height=.3\textwidth]{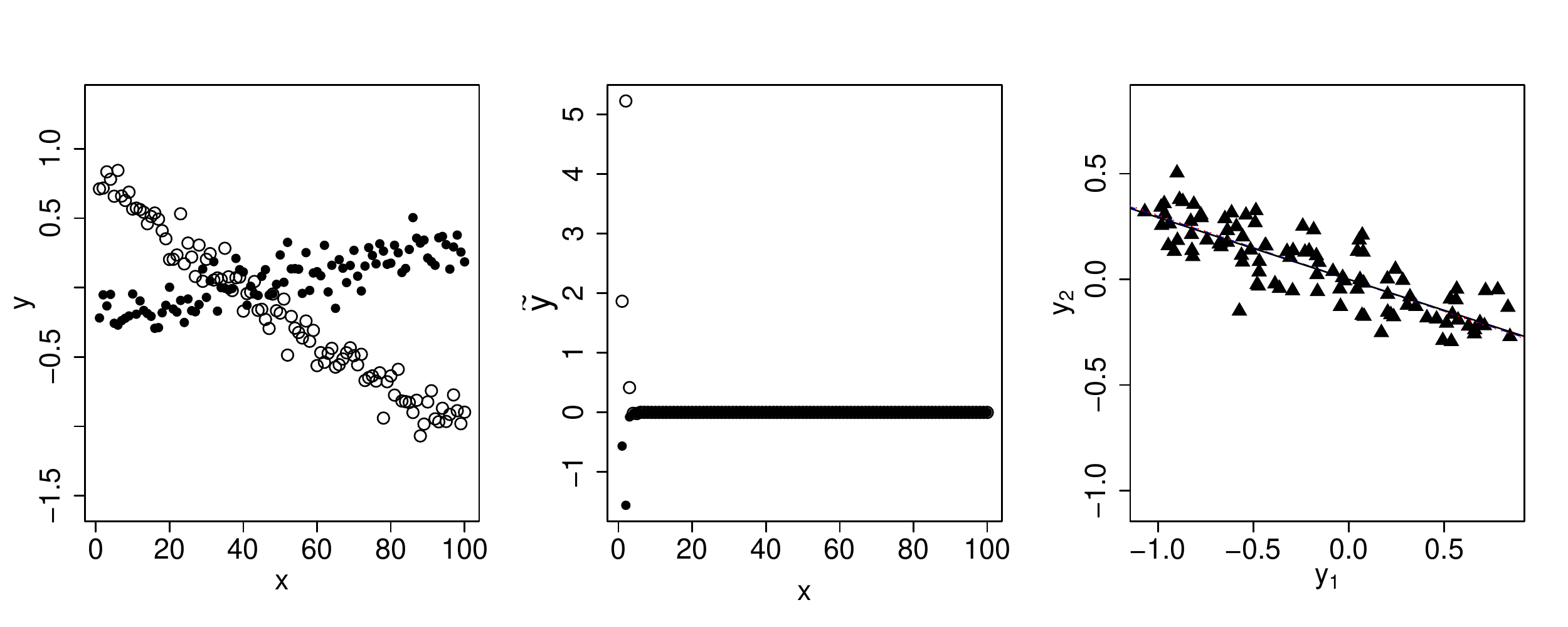}\vspace{-.15in}\\
        \includegraphics[width=1\textwidth,height=.3\textwidth]{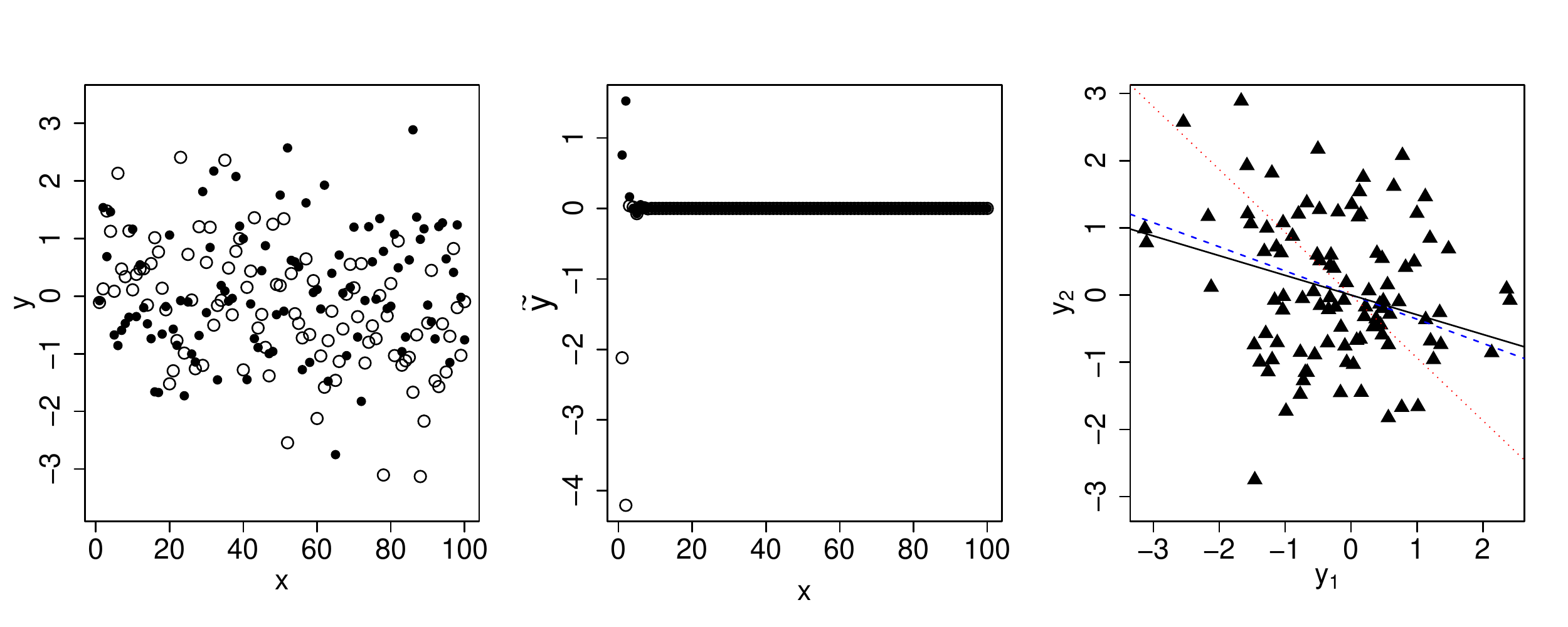}
  \end{tabular}
   \caption{Estimation of the factor loading matrix by the PCA and GPPCA for Example \ref{eg:demonstration} with the variance noise being  $\sigma^2_0=0.01$ and $\sigma^2_0=1$, graphed in the upper panels and lower panels, respectively. The circles and dots are the first and second rows of $\mathbf Y$ in the left panel, and of the transformed output $\mathbf {\tilde Y}=\mathbf Y \mathbf L$ in the middle panels,  where $\mathbf L=\mathbf U \mathbf D^{1/2}$ with $\mathbf U$ being the eigenvectors and the diagonals of $\mathbf D$ being the eigenvalues of the eigendecomposition of $(\hat \sigma^2_0\bm{ \hat \Sigma}^{-1}+\mathbf I_n )^{-1}$, where the $(i,j)$-term of $\bm{ \hat \Sigma}$ is $\hat \sigma^2 \hat K(\mathbf x_i, \mathbf x_j)$ by plugging the estimated range parameter $\hat \gamma$. The circles and dots in the middle panels almost overlap when $x$ is slightly larger than 0.  In the right panels,  the black solid lines,  red dotted lines and blue dash lines are the subspace of $\mathbf A$, the first eigenvector of $\mathbf U_0$ and the first eigenvector of $\mathbf G$ in Theorem~\ref{thm:est_A_shared_cov}, respectively, with the black triangles being the outputs. The black, blue and red lines almost overlap in the upper right panel.  }
\label{fig:demonstration}
\end{figure}

Note the linear subspace spanned from the column space of estimated loading matrix by the PCA or PPCA  is the same, which is  $\mathcal M(\mathbf U_0)$.  Thus we only compare the GPPCA to the PCA in Figure \ref{fig:demonstration} where $\mathbf A$ is a two-dimensional vector generated from a uniform distribution on the Stiefel manifold \citep{hoff2013bayesian}.  The signal to noise ratio (SNR) is $\tau=10^2$ and $\tau=1$ for the upper and lower panels in Figure \ref{fig:demonstration}, respectively.  


 From Figure \ref{fig:demonstration}, we observe that when the SNR is large, two rows of the outputs are strongly correlated, as shown in the upper left panel, with the empirical correlation being around $-0.83$ between two rows of the output $\mathbf Y$. The estimated subspaces by the PCA and GPPCA both match the true $\mathbf A$ equally well in this scenario, shown in the upper right panel. When the variance of the noise gets large, the outputs are no longer very correlated. For example, the empirical correlation between two simulated output variables is only around $-0.18$.  As a result, the angle between the estimated subspace and the column space of $\mathbf A$ by the PCA is large, as shown in the right lower panel. 
 
The GPPCA by Theorem \ref{thm:est_A_shared_cov} essentially transforms the output by $\mathbf {\tilde Y}=\mathbf Y \mathbf L$, graphed in the middle panels, where $\mathbf L=\mathbf U \mathbf D^{1/2}$ with $\mathbf U$ and $\mathbf D$ being a matrix of eigenvectors and a diagonal matrix of the eigenvalues from the eigendecomposition of $(\hat \sigma^2_0\bm {\hat  \Sigma}^{-1}+\mathbf I_n )^{-1}$, respectively, where variance parameter and kernel parameter are estimated by the MMLE discussed in Section \ref{subsec:par_est}. The two rows of the transformed outputs are strongly correlated, shown in the middle panels. The empirical correlation between two rows of the transformed outputs graphed in the lower panel is about $-0.99$, even though the variance of the noise is as large as the variance of the signal. The subspace by the GPPCA is equivalent to the first eigenvector of the transformed output for this example, and it is graphed as the blue dashed curves in the right panels. The estimated subspace by the GPPCA is  close to the truth in both scenarios, even when the variance of the noise is large in the second scenario.


\begin{figure}[t]
\centering
  \begin{tabular}{c}
    \includegraphics[width=1\textwidth,height=.35\textwidth]{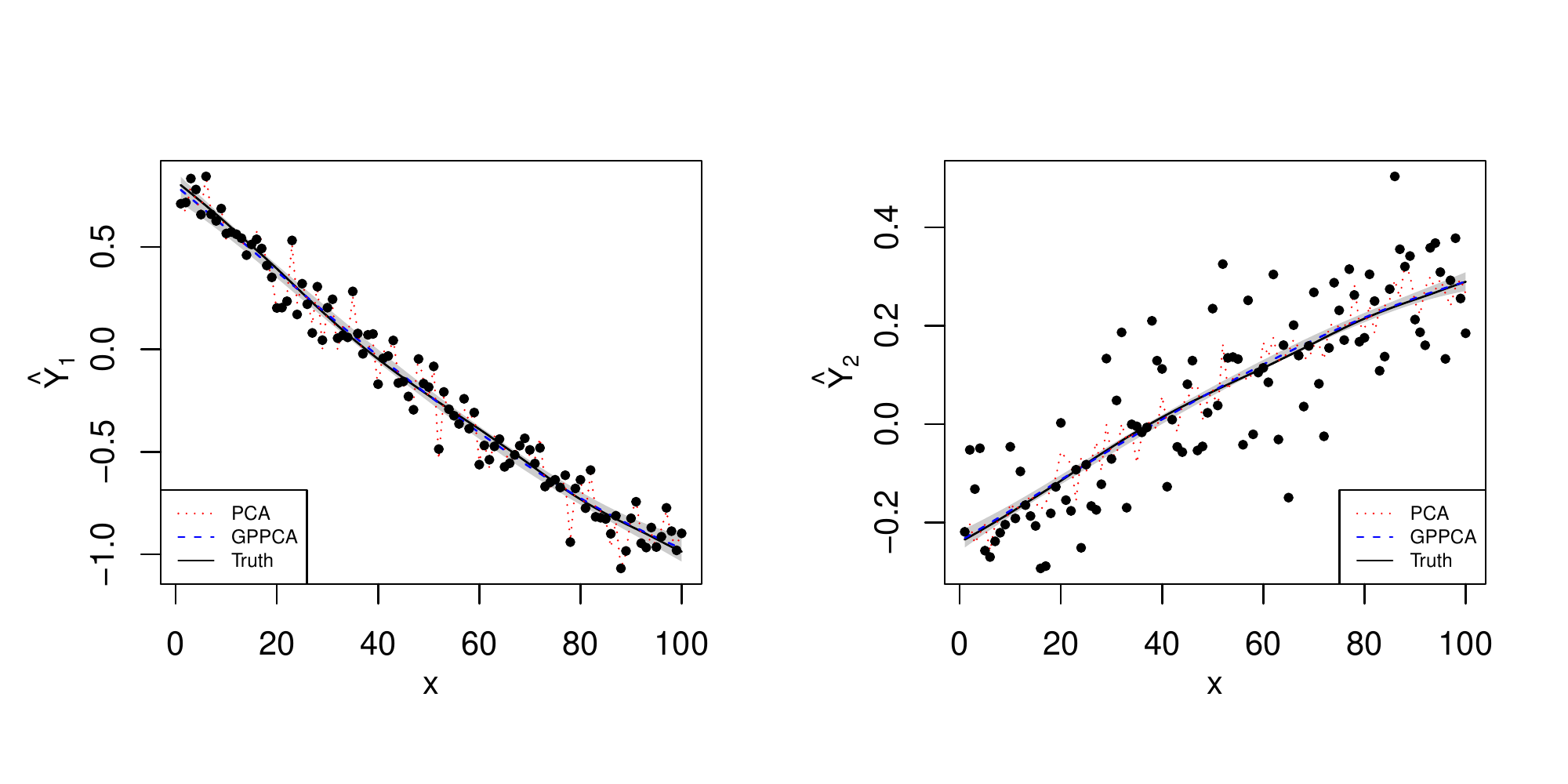}\vspace{-.5in} \\
        \includegraphics[width=1\textwidth,height=.35\textwidth]{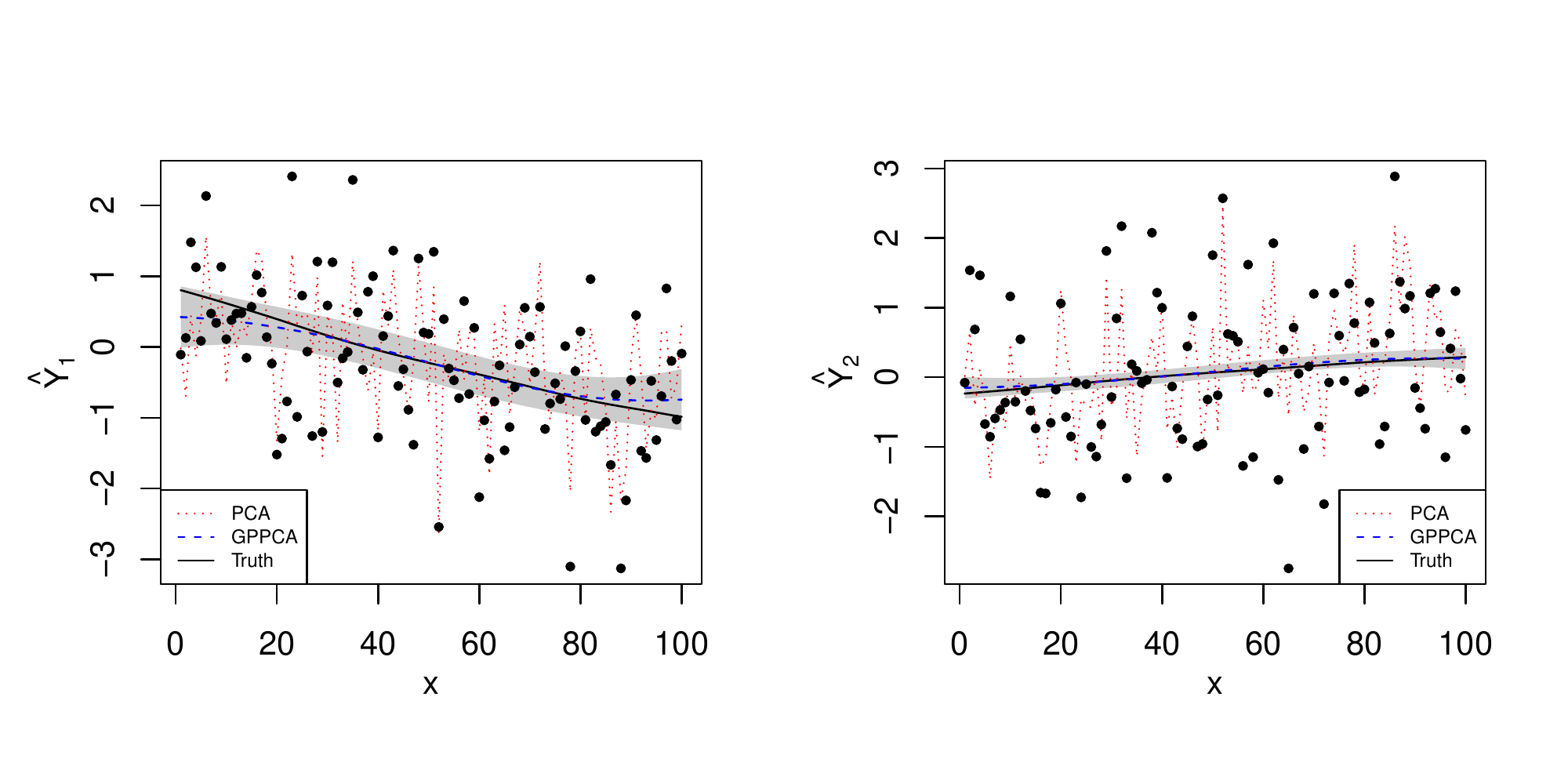}
  \end{tabular}
  \vspace{-.3in}
   \caption{Estimation of the mean of the output $\mathbf Y$ for Example \ref{eg:demonstration} with the variance of the noise being  $\sigma^2_0=0.01$ and $\sigma^2_0=1$, graphed in the upper panels and lower panels, respectively. The first row and second row of $\mathbf Y$ are graphed as the black curves in the left panels and right panels, respectively. The red dotted curves and the blue dashed curves are the prediction by the PCA and GPPCA, respectively. The grey region is the $95\%$ posterior credible interval from GPPCA. The black curves, blue curves and grey regions almost overlap in the upper panels. }
\label{fig:demonstration_AZ}
\end{figure}


For PCA, the mean of the outputs is typically estimated  by the maximum likelihood estimator $\mathbf {\hat A}_{pca} \mathbf {\hat A}^T_{pca}\mathbf Y$, where $\mathbf {\hat A}_{pca}=\mathbf U_0$ \citep{bai2002determining}. In Figure \ref{fig:demonstration_AZ},  the PCA estimation of the mean for Example \ref{eg:demonstration} is graphed as the red curves and the posterior mean of the output in the GPPCA in Corollary \ref{cor:posterior_AZ} is graphed as the blue curves. The PCA  underestimates the variance of the noise and hence has a large estimation error.  In comparison, the estimated mean of the output by the GPPCA is more accurate, as the correlation in each  output variable is properly modeled through the GPs of the latent factors.  



Note that we restrict $\mathbf A$ to satisfy $\mathbf A^T \mathbf A=\mathbf I_d$ when simulating data examples in Figure \ref{fig:demonstration}. In practice, we find this constraint only affects the estimation of the variance parameter $\sigma^2_l$  in the kernel, $l=1,...,d$, because the meaning of this parameter changes.

There are some other estimators of the factor loading matrix in modeling high-dimensional time series. For example,  \cite{lam2011estimation,lam2012factor} estimate the factor loading matrix of model (\ref{equ:model_1}) by  $\hat {\mathbf A}_{LY}:=\sum^{q_0}_{q=1}\bm {\hat \Sigma}_y(q) \bm {\hat \Sigma}^T_y(q)$, where $\bm {\hat \Sigma}_y(q)$ is the $k\times k$ sample covariance at lag $q$ of the output and $q_0$ is fixed to be a small positive integer. This approach is sensible, because  $\mathcal M(\mathbf A)$ is shown to be spanned from $\sum^{q_0}_{q=1}\bm { \Sigma}_y(q) \bm { \Sigma}^T_y(q)$ under some reasonable assumptions, where $\bm {\Sigma}_y(q)$ is the underlying   lag-$q$ covariance of the outputs.  It is also suggested in \cite{lam2012factor} to estimate the latent factor by $\mathbf {\hat Z}_{LY}=\hat {\mathbf A}^T_{LY}\mathbf Y$, meaning that  the mean of the output is estimated by $ \hat {\mathbf A}_{LY} \mathbf {\hat Z}_{LY}=\hat {\mathbf A}_{LY} \hat {\mathbf A}^T_{LY}\mathbf Y$.  This estimator and the PCA are both included for comparison in Section \ref{sec:numerical_results}.. 



\section{Simulated examples}
\label{sec:numerical_results}
In this section, we numerically compare different approaches studied before. We use several criteria to examine the estimation. The first criterion is the largest principal angle between the estimated subspace $\mathcal M(\mathbf{\hat A})$ and the true subspace $\mathcal M({\mathbf A})$. Let  
$0\leq \phi_1\leq...\leq \phi_d\leq \pi/2$ be the principal angles between $\mathcal M({\mathbf A})$ and $\mathcal M({\mathbf {\hat A}})$, recursively defined by 
\begin{equation*}
\phi_i=\mbox{arccos} \left( \max_{\mathbf a \in \mathcal M(\mathbf A), \mathbf {\hat a} \in \mathcal M(\mathbf {\hat A}) } |\mathbf a^T \mathbf {\hat a}|  \right)=\mbox{arccos}(|\mathbf a^T_i \mathbf {\hat a}_i| ), 
\end{equation*}
subject to 
\[ ||\mathbf a||=||\mathbf {\hat a}||=1, \, \mathbf a^T \mathbf a_i=0,\,\mathbf {\hat a}^T \mathbf {\hat a}_i=0,\, i=1,...,d-1, \]
where $||\cdot||$ denotes the $L_2$ norm. The largest principal angle is $\phi_d$, which quantifies how close two linear subspaces are. When two subspaces are identical, all principal angles are zero. When the columns of the $\mathbf A$ and $\mathbf {\hat A}$ form orthogonal bases of the $\mathcal M(\mathbf A)$ and $\mathcal M(\mathbf {\hat A})$, the cosine of the largest principal angle is equal to the smallest singular value of $\mathbf A^T \mathbf {\hat A}$ \citep{bjorck1973numerical,absil2006largest}. Thus the largest principal angle can be calculated efficiently through the singular value decomposition of $\mathbf {A}^T \mathbf {\hat A}$. 



We numerically compare four approaches for estimating $\mathbf A$. The first approach is the PCA, which estimates $\mathbf A$ by $\mathbf U_0$, where $\mathbf U_0$ is the first $d$ eigenvectors of  $\mathbf Y \mathbf Y^T/n$. Note the other  version of the PCA and the PPCA  have the same largest principal angle between the estimated subspace of  $\mathbf A$ and the true subspace of  $\mathbf A$, so the results are omitted. The GPPCA  is the second approach. When the covariance of the factor processes is the same, the closed-form expression of the estimator of the factor loading matrix is given in Theorem \ref{thm:est_A_shared_cov}. When the covariance of the factor processes is different, we implement the optimization algorithm in \cite{wen2013feasible}  that preserves the orthogonal constraints to obtain the maximum marginal likelihood estimation of the factor loading matrix in Theorem \ref{thm:est_A_diff_cov}. In both cases, the estimator $\mathbf {\hat A}$ can be written as a function of $(\bm \gamma,\bm \tau, \sigma^2_0)$ which are estimated by maximizing the marginal likelihood after integrating out $\mathbf Z$ and plugging $\mathbf {\hat A}$. The third approach, denoted as LY1, estimates $\mathbf A$ by $\bm {\hat \Sigma}_y(1) \bm {\hat \Sigma}^T_y(1)$, where $\bm {\hat \Sigma}_y(1)$ is the sample covariance of the output at lag $1$ and the fourth approach, denoted as LY5, estimates $\mathbf A$ by $\sum^{q_0}_{q=1}\bm {\hat \Sigma}_y(q) \bm {\hat \Sigma}^T_y(q)$ with $q_0=5$, used  in  \cite{lam2012factor} and \cite{lam2011estimation}, respectively. 

We also compare the performance of different approaches by the average mean squared errors (AvgMSE) in predicting the mean of the output over $N$ experiments as follows 
\begin{equation}
    \mbox{AvgMSE}= \sum^N_{l=1} \sum^k_{j=1} \sum^n_{i=1} \frac{( \hat Y^{(l)}_{j,i}- \E[Y^{(l)}_{j,i}])^2 }{knN},
    \label{eg:avgmse}
\end{equation}
where $\E[Y^{(l)}_{j,i}]$ is the $(j,i)$  element of the mean of the output matrix at the $l$th experiment, and $\hat Y^{(l)}_{j,i}$ is the estimation. As discussed in Section \ref{sec:numerical_results}, the estimated mean of the output matrix by the PCA, LY1 and LY5 is   $\mathbf {\hat A} \mathbf {\hat A}^T\mathbf Y$, where $\mathbf {\hat A}$ is the estimated factor loading matrix in each approach (\cite{bai2002determining,lam2011estimation,lam2012factor}). In GPPCA, we use the posterior mean of $\mathbf A \mathbf Z$ in Corollary \ref{cor:posterior_AZ} to estimate mean of the output matrix.



The cases of the shared covariance  and the different covariances of the factor processes are studied  in Example \ref{eg:shared_cov} and Example \ref{eg:diff_cov}, respectively. we assume that $\mathbf A$ is sampled from the uniform distribution on the Stiefel manifold \citep{hoff2013bayesian}, and the kernels are correctly specified with unknown parameters in these  examples. In Appendix C, we compare different approaches when  the factor loading matrix, kernel functions or the factors are misspecified.






\begin{figure}[t]
\centering
  \begin{tabular}{cccc}
      \includegraphics[width=.25\textwidth,height=.35\textwidth]{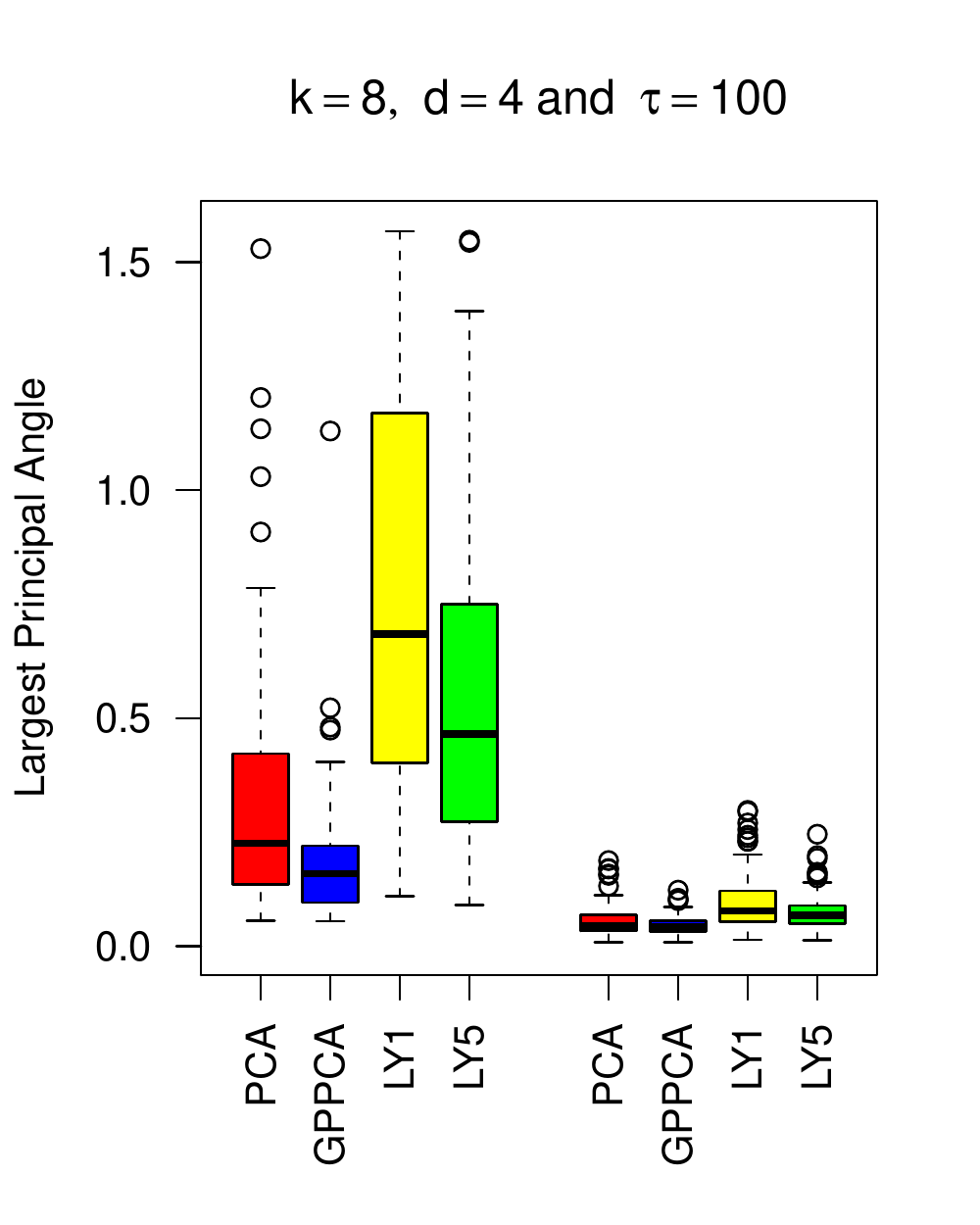}
        \includegraphics[width=.25\textwidth,height=.35\textwidth]{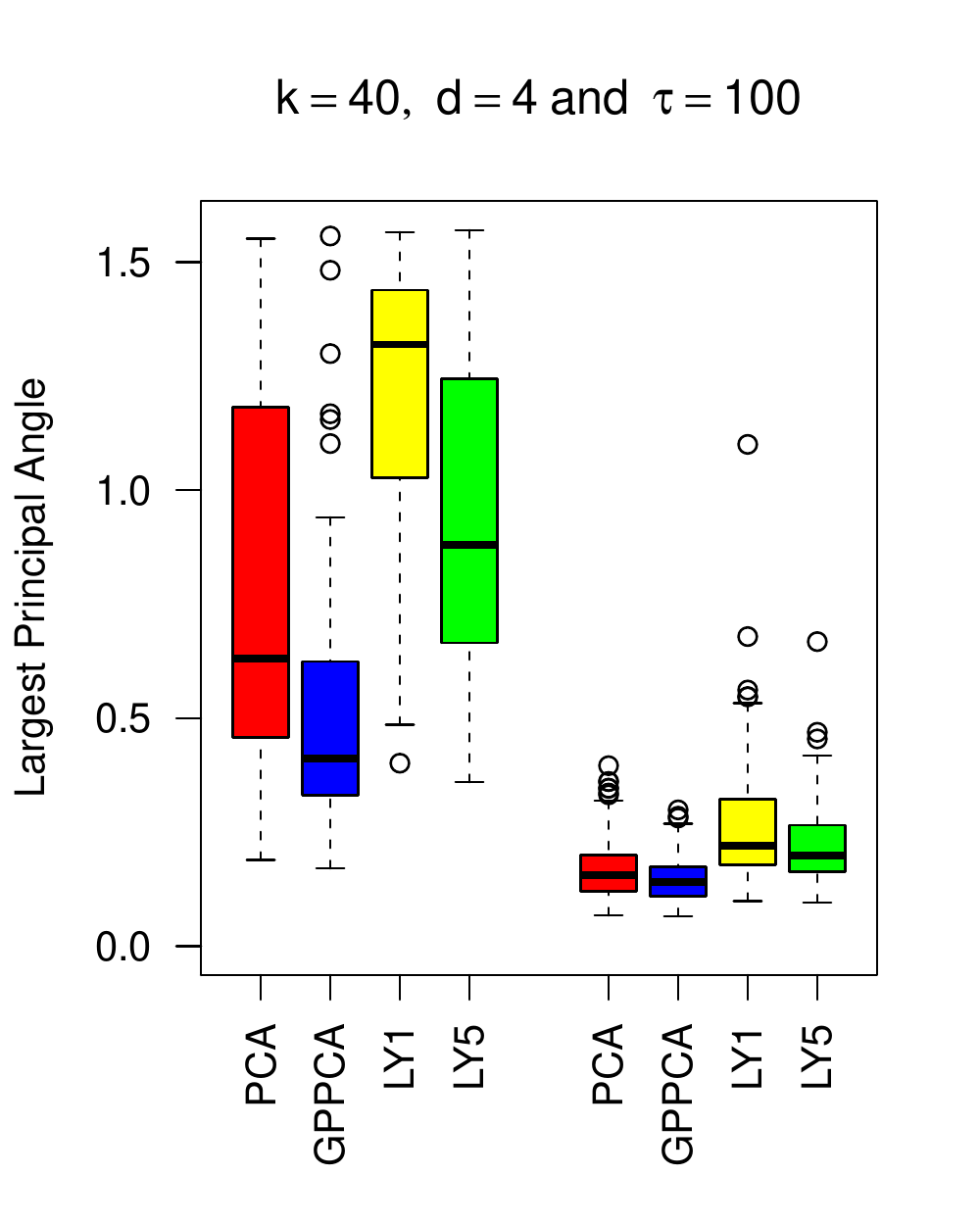}
            \includegraphics[width=.25\textwidth,height=.35\textwidth]{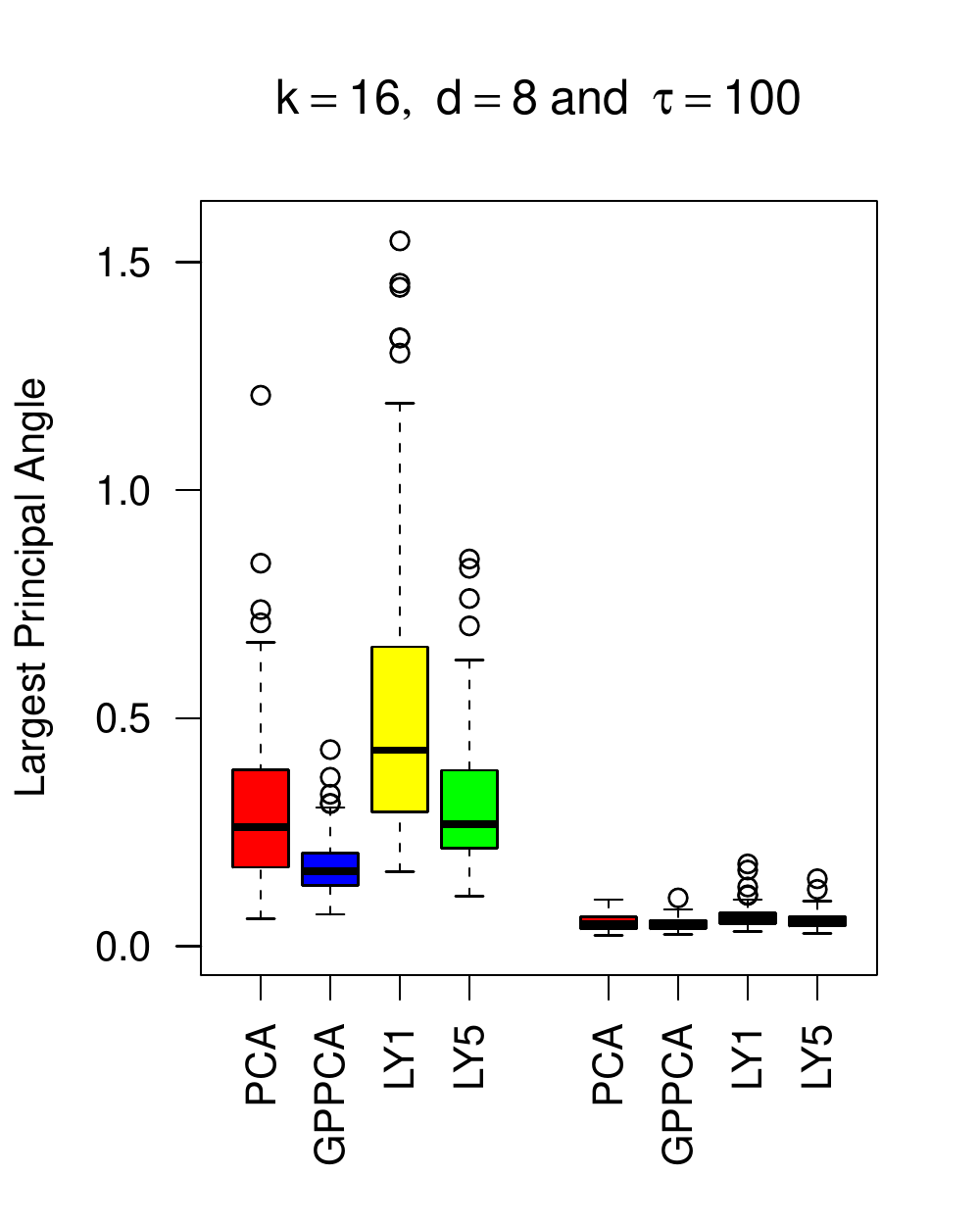}
        \includegraphics[width=.25\textwidth,height=.35\textwidth]{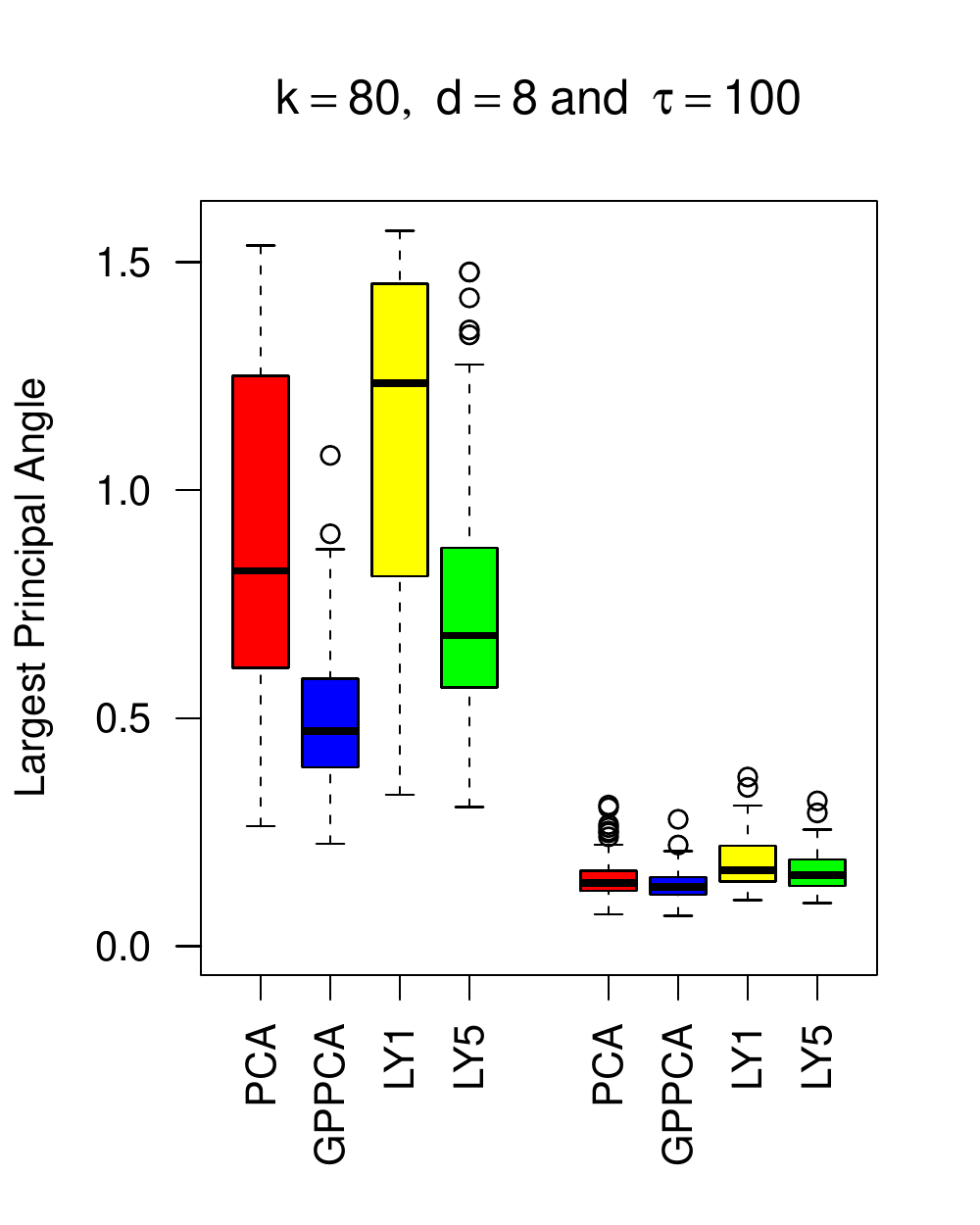}\vspace{-.1in}\\
    \includegraphics[width=.25\textwidth,height=.35\textwidth]{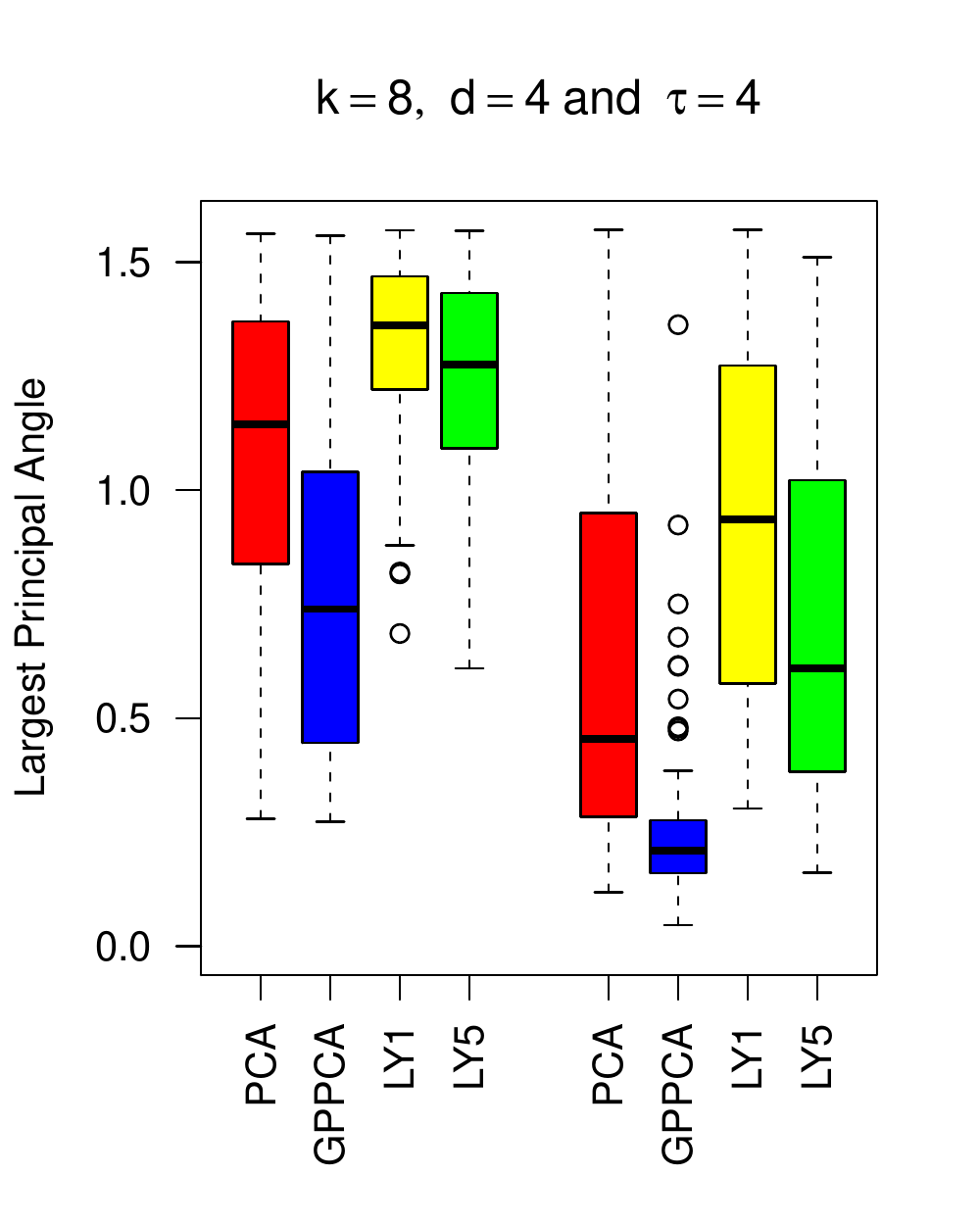}
        \includegraphics[width=.25\textwidth,height=.35\textwidth]{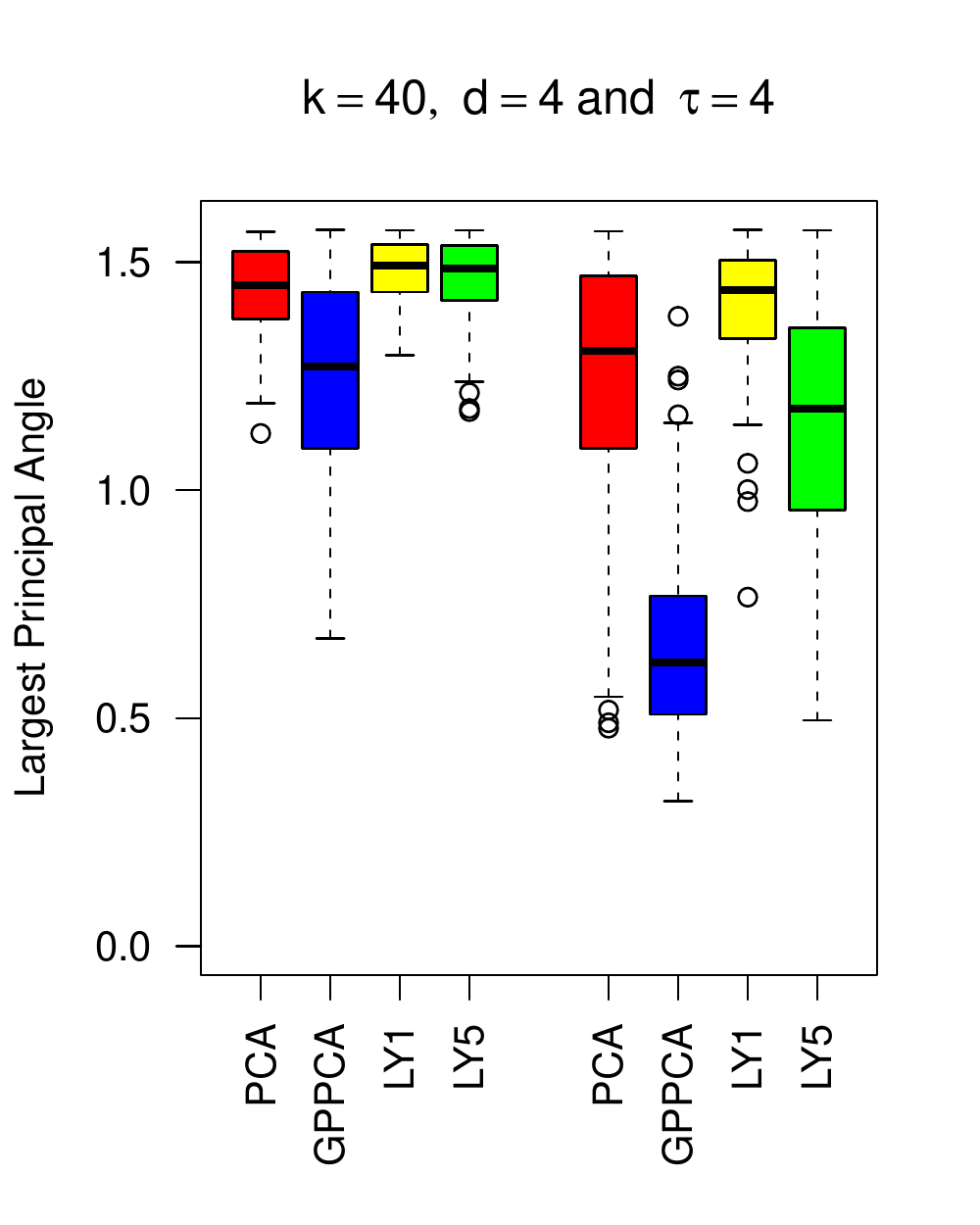}
            \includegraphics[width=.25\textwidth,height=.35\textwidth]{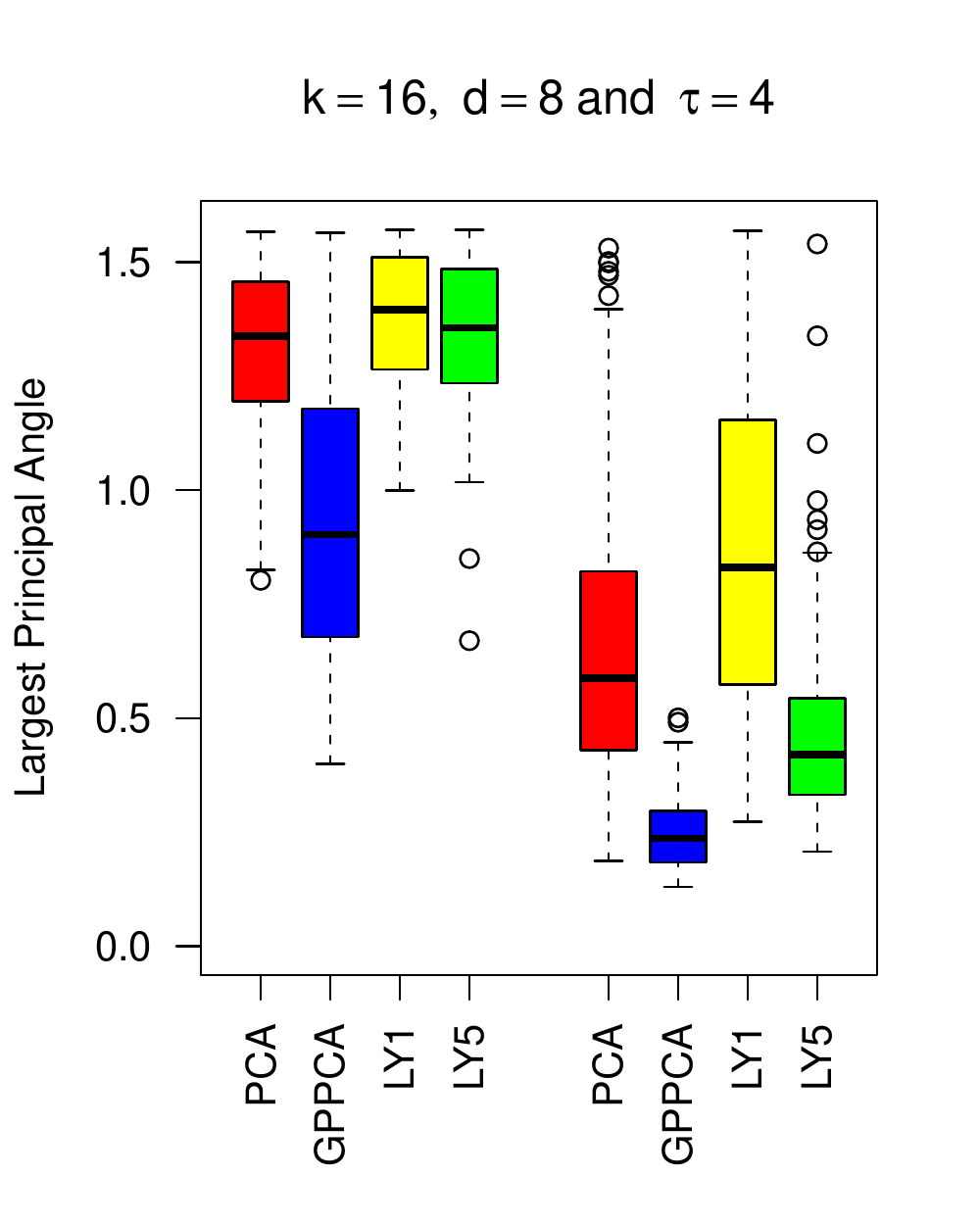}
        \includegraphics[width=.25\textwidth,height=.35\textwidth]{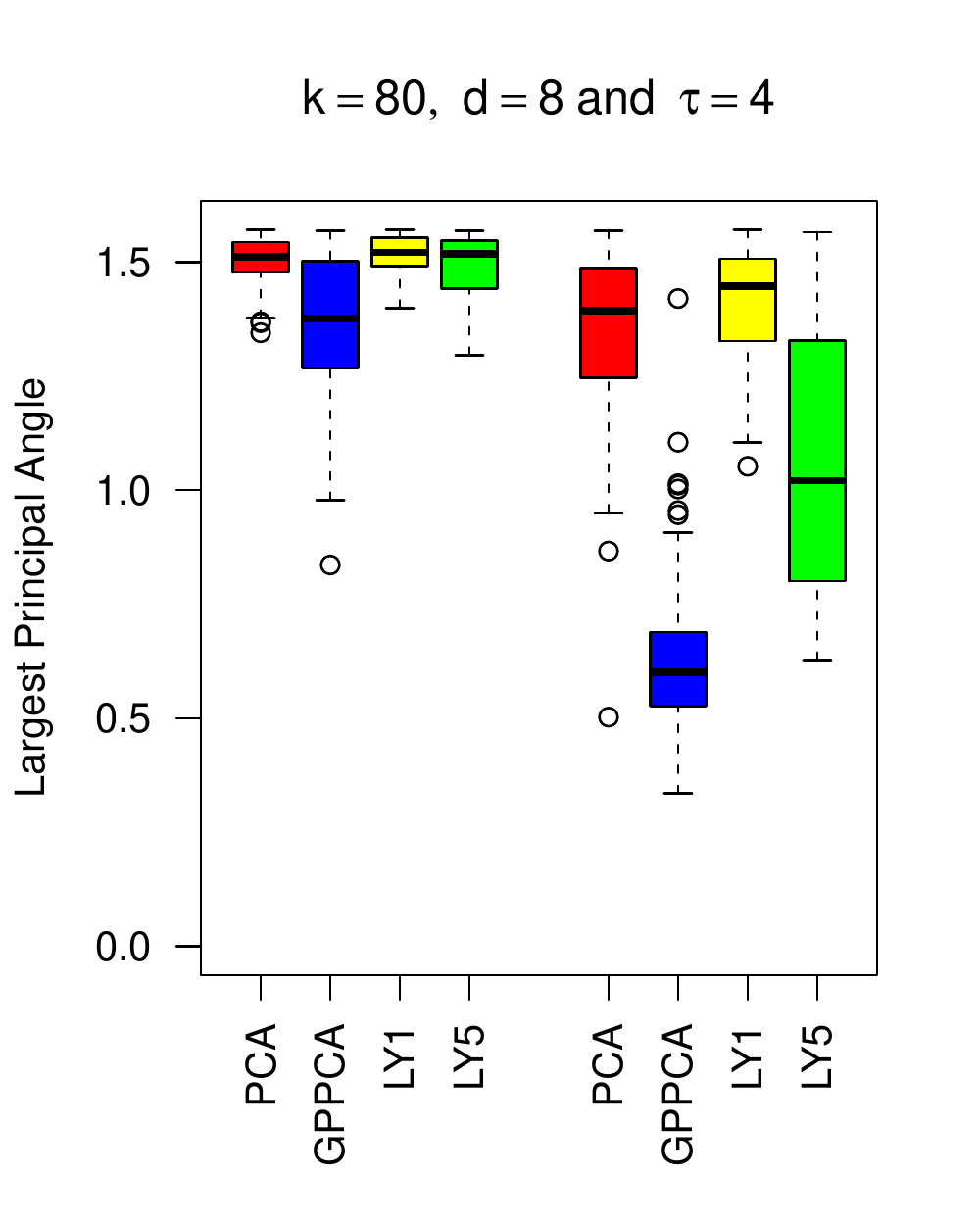}
  \end{tabular}
  \vspace{-.1in}
   \caption{The largest principal angle between the true subspace of the factor loading matrix and the estimation from the four approaches  for Example \ref{eg:shared_cov} (ranging from $[0,\pi/2]$, the smaller the better).  In the first row, the number of the observations of each output variable is assumed to be $n=200$ and $n=400$ for the left four boxplots and right four boxplots in each panel, respectively. In the second row, the number of observations is  assumed to be $n=500$ and $n=1000$ for the left four boxplots and right four boxplots in each panel, respectively. }
\label{fig:simulation_shared_cov_angles}
\end{figure}



\begin{example}[Factors with the same covariance matrix]
The data are sampled from model (\ref{equ:model_1}) with $\bm \Sigma_1=...=\bm \Sigma_d=\bm \Sigma$, where $x_i=i$ for $1\leq i\leq n$, and the kernel function in (\ref{equ:matern_5_2}) is used with $\gamma=100$ and $\sigma^2=1$. 
In each scenario, we simulate the data from $16$ different combinations of  $\sigma^2_0$, $k$, $d$ and $n$. We repeat $N=100$ times for each scenario. The parameters $(\sigma^2_0,  \sigma^2, \gamma)$ are treated as unknown and estimated from the data.
\label{eg:shared_cov}
\end{example}



 In Figure \ref{fig:simulation_shared_cov_angles}, we present the largest principal angle between the true subspace $\mathcal M(\mathbf A)$ and  estimated subspace $\mathcal M(\mathbf {\hat A})$ at different settings of Example 1. The red, blue, yellow and green boxplots are the results from the PCA, GPPCA, LY1 and LY5. In each panel, the sample size gets doubled from the left four boxplots to the right four. The SNR $\tau =\sigma^2/\sigma^2_0$ is assumed to be $100$ and $4$ in the upper panels and lower panels, respectively.

\begin{table}[t]
\begin{center}
\begin{tabular}{lcccc}
  \hline
 $d=4$ and $\tau=100$         &  \multicolumn{2}{c}{k=8} & \multicolumn{2}{c}{k=40}\\
 &  $n=200$ &  $n=400$ & $n=200$ &  $n=400$  \\
  \hline
  PCA            &$5.3\times 10^{-3}$& $5.1\times 10^{-3}$ &$1.4\times 10^{-3}$& $1.1\times 10^{-3}$  \\
  GPPCA          &$\bf 3.3\times 10^{-4}$ & $\bf 2.6\times 10^{-4}$ &$\bf 2.2\times 10^{-4}$ & $\bf 1.3\times 10^{-4}$ \\
  LY1 & $4.6\times 10^{-2}$ & $5.8 \times 10^{-3}$ &$1.5\times 10^{-2}$ &$2.1\times 10^{-3}$ \\
  LY5 &  $3.2\times 10^{-2}$ & $5.5\times 10^{-3}$ & $1.1\times 10^{-2}$&$1.8\times 10^{-3}$\\

  \hline
 $d=8$ and $\tau=100$         &  \multicolumn{2}{c}{k=16} & \multicolumn{2}{c}{k=80}\\
 &  $n=500$ &  $n=1000$ & $n=500$ &  $n=1000$  \\
  \hline
  PCA            &$5.2\times 10^{-3}$& $5.0\times 10^{-3}$ &$1.3\times 10^{-3}$& $1.1\times 10^{-3}$  \\
  GPPCA          &$\bf 2.9\times 10^{-4}$ & $\bf 2.4\times 10^{-4}$ &$\bf 1.9\times 10^{-4}$ & $\bf 1.1\times 10^{-4}$ \\
  LY1 & $1.4\times 10^{-2}$ & $5.1 \times 10^{-3}$ &$5.4\times 10^{-3}$ &$1.2\times 10^{-3}$ \\
  LY5 &  $8.8\times 10^{-3}$ & $5.1\times 10^{-3}$ & $3.9\times 10^{-3}$&$1.2\times 10^{-3}$\\
  \hline

 $d=4$ and $\tau=4$         &  \multicolumn{2}{c}{k=8} & \multicolumn{2}{c}{k=40}\\
 &  $n=200$ &  $n=400$ & $n=200$ &  $n=400$  \\
  \hline
  PCA            &$1.4\times 10^{-1}$& $1.3\times 10^{-1}$ &$4.2\times 10^{-2}$& $3.4\times 10^{-2}$  \\
  GPPCA          &$\bf 5.8\times 10^{-3}$ & $\bf 4.4\times 10^{-3}$ &$\bf 5.3\times 10^{-3}$ & $\bf 3.0\times 10^{-3}$ \\
    LY1 & $2.2\times 10^{-1}$ & $1.7 \times 10^{-1}$ &$7.2\times 10^{-2}$ &$6.4\times 10^{-2}$ \\
  LY5 &  $2.2\times 10^{-1}$ & $1.5\times 10^{-1}$ & $4.8\times 10^{-2}$&$4.1\times 10^{-2}$\\
  \hline
 $d=8$ and $\tau=4$         &  \multicolumn{2}{c}{k=16} & \multicolumn{2}{c}{k=80}\\
 &  $n=500$ &  $n=1000$ & $n=500$ &  $n=1000$  \\
  \hline
  PCA            &$1.4\times 10^{-1}$& $1.3\times 10^{-1}$ &$3.9\times 10^{-2}$& $3.2\times 10^{-2}$  \\
  GPPCA          & $\bf 5.1\times 10^{-3}$ & $\bf 3.9\times 10^{-3}$ &$\bf 4.3\times 10^{-3}$ & $\bf 2.4\times 10^{-3}$ \\
      LY1 & $1.8\times 10^{-1}$ & $1.4 \times 10^{-1}$ &$5.1\times 10^{-2}$ &$3.4\times 10^{-2}$ \\
  LY5 &  $1.7\times 10^{-1}$ & $1.3\times 10^{-1}$ & $4.6\times 10^{-2}$&$3.1\times 10^{-2}$\\

  \hline

\end{tabular}
\end{center}
   \caption{AvgMSE for Example \ref{eg:shared_cov}.}

   \label{tab:AvgMSE_shared_cov}
\end{table}

 Since the covariance of the factor processes is the same in Example \ref{eg:shared_cov},  the estimated $\mathbf A$ by the GPPCA has a closed-form solution given in Theorem \ref{thm:est_A_shared_cov}. For all 16 different scenarios, the GPPCA outperforms the other three methods in terms of having the smallest largest principal angle between   $\mathcal M(\mathbf A)$ and $\mathcal M(\mathbf {\hat A})$. Both PCA and GPPCA can be viewed as maximum likelihood type of approaches under the orthonormality assumption of the factor loading matrix. The difference is that the estimator of $\mathbf A$ by the GPPCA maximizes the marginal likelihood after integrating out the factor processes, whereas the PCA maximizes the likelihood without modeling the factor processes. The principal axes by the PCA are the same as the PPCA which assumes the factors are independently distributed.  As discussed before, the model with independent factors, however, is not a sensible sampling model for the correlated data, such as the multiple time series or multivariate spatial processes. 
 

The performance of all methods improves when the sample size increases or when the SNR increases, shown in Figure \ref{fig:simulation_shared_cov_angles}. The  LY5 estimator  \citep{lam2011estimation} seems to perform slightly better than the PCA when the SNR is smaller. This method is sensible  because the factor loading space $\mathcal M(\mathbf A)$ is spanned by the eigenvectors of $\mathbf M:=\sum^{q_0}_{i=1}\bm {\Sigma}_y(q) \bm {\Sigma}^T_y(q)$ under some conditions. However, this may not be the unique way to represent the  subspace of the factor loading matrix. Thus the estimator based on this argument may not be as efficient as the maximum marginal likelihood approach by the GPPCA, shown in Figure \ref{fig:simulation_shared_cov_angles}.


\begin{figure}[t]
\centering
  \begin{tabular}{c}
    \includegraphics[width=1\textwidth,height=.4\textwidth]{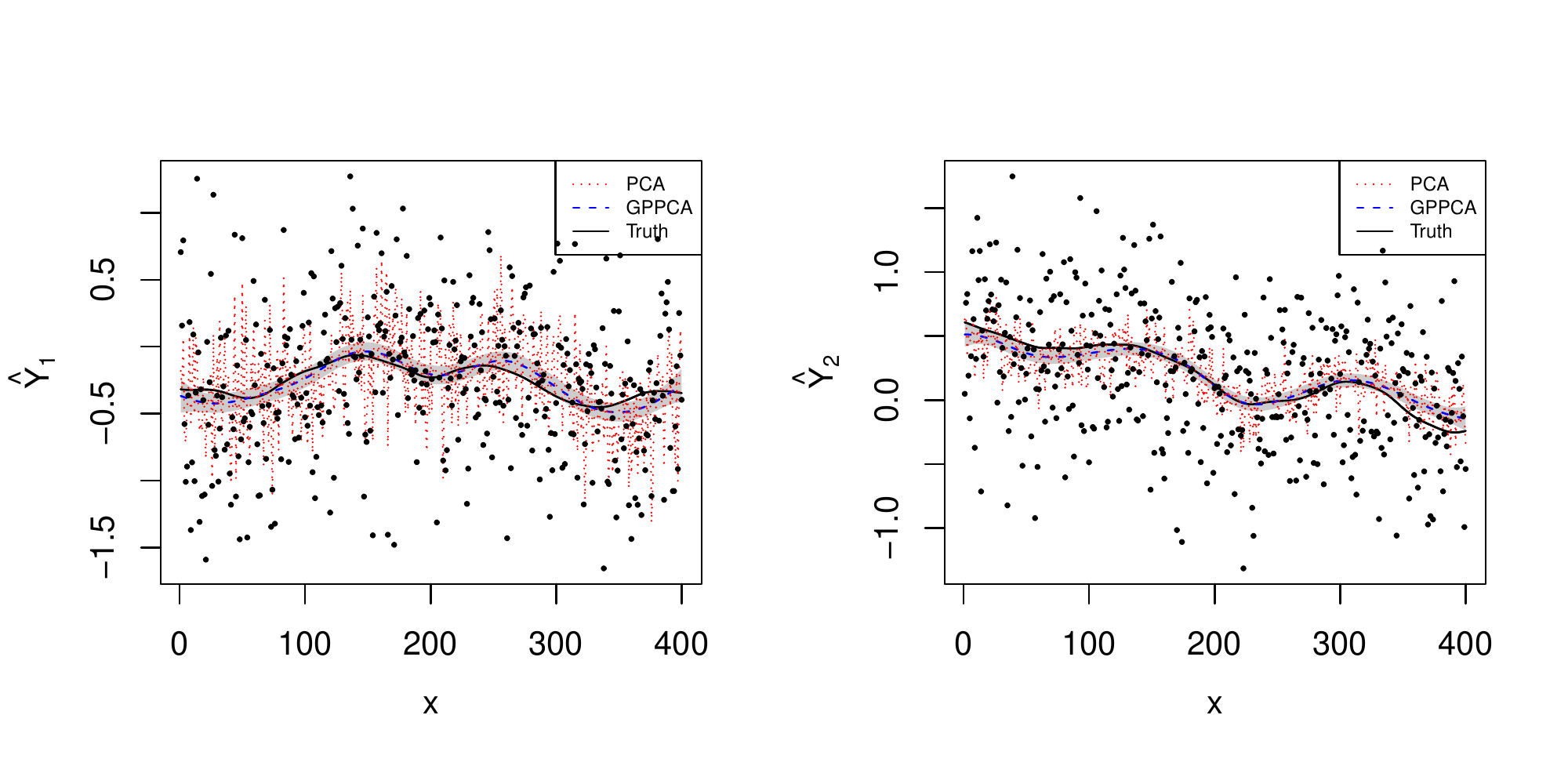}
  \end{tabular}
  \vspace{-.2in}
   \caption{Prediction of the mean of the first two output variables in one experiment with $k=8$, $d=4$, $n=400$ and $\tau=4$. The observations are plotted as black circles and the truth is graphed as the black curves. The estimation by the PCA and GPPCA is graphed as the red dotted curves and blue dashed curves, respectively. The shaded area is the 95\% posterior credible interval by the GPPCA.  }
\label{fig:shared_cov_pred_mean}
\end{figure}


The AvgMSE of the different approaches for Example \ref{eg:shared_cov} is shown in Table \ref{tab:AvgMSE_shared_cov}. The mean squared error of the estimation by the GPPCA is typically a digit or two smaller than the ones by the other approaches. This is because the correlation of the factor processes in the GPPCA is properly modeled, and the kernel parameters are estimated based on the maximum marginal likelihood estimation.



We plot the first two rows of the estimated mean of the output in one experiment from the Example \ref{eg:shared_cov} in Figure \ref{fig:shared_cov_pred_mean}. The estimation of the GPPCA approach is graphed as the blue dashed curves, which is very close to the truth, graphed as the black curves, wheares the estimation by the PCA is graphed as the red dotted curves, which are less smooth and less accurate in predicting the mean of the outputs, because of the noise in the data. The estimators by LY1 and LY5 are similar to those of PCA so we omit them in Figure \ref{fig:shared_cov_pred_mean}. The problem of the PCA (and PPCA) is that the estimation assumes that the factors are independently distributed, which makes the likelihood too concentrated. Hence the variance of the noise is underestimated as indicated by the red curves in Figure \ref{fig:shared_cov_pred_mean}. In comparison, the variance of the noise estimated by the GPPCA is more accurate, which makes predictions by the GPPCA closer to the truth. 





\begin{example}[Factors with different covariance matrices]
The data are sampled from model (\ref{equ:model_1}) where $x_i=i$ for $1\leq i\leq n$. The variance of the noise is $\sigma^2_0=0.25$ and the kernel function is assumed to follow from (\ref{equ:matern_5_2}) with $\sigma^2=1$. The range parameter $\gamma$ of each factor is uniformly sampled from $[10,10^3]$ in each experiment.  We simulate the data from $8$ different combinations of $k$, $d$ and $n$. In each scenario, we repeat $N=100$ times. The parameters in the kernels and the variance of the noise are all estimated from the data.
\label{eg:diff_cov}
\end{example}

\begin{figure}[t]
\centering
  \begin{tabular}{cccc}
    \includegraphics[width=.245\textwidth,height=.35\textwidth]{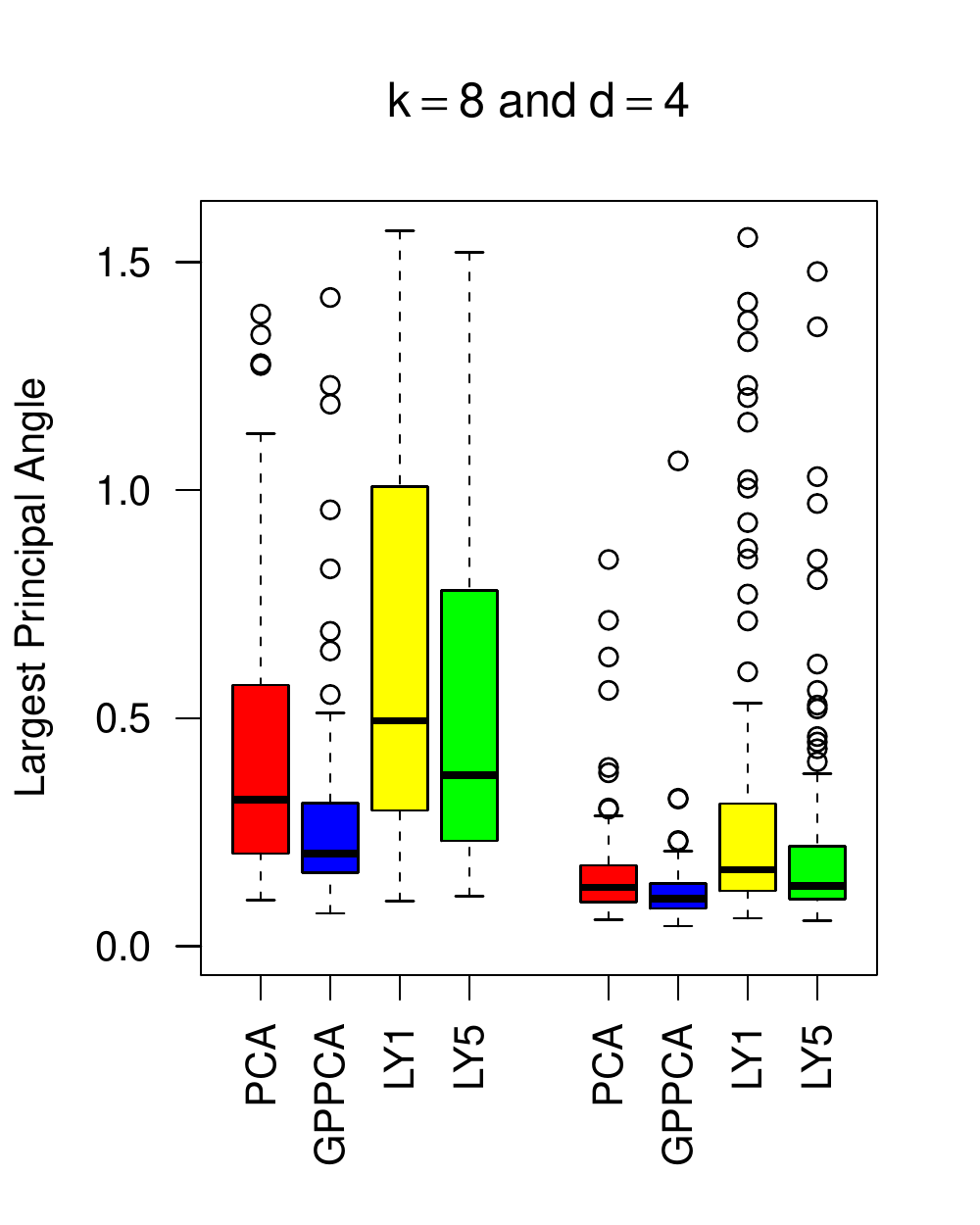}
        \includegraphics[width=.245\textwidth,height=.35\textwidth]{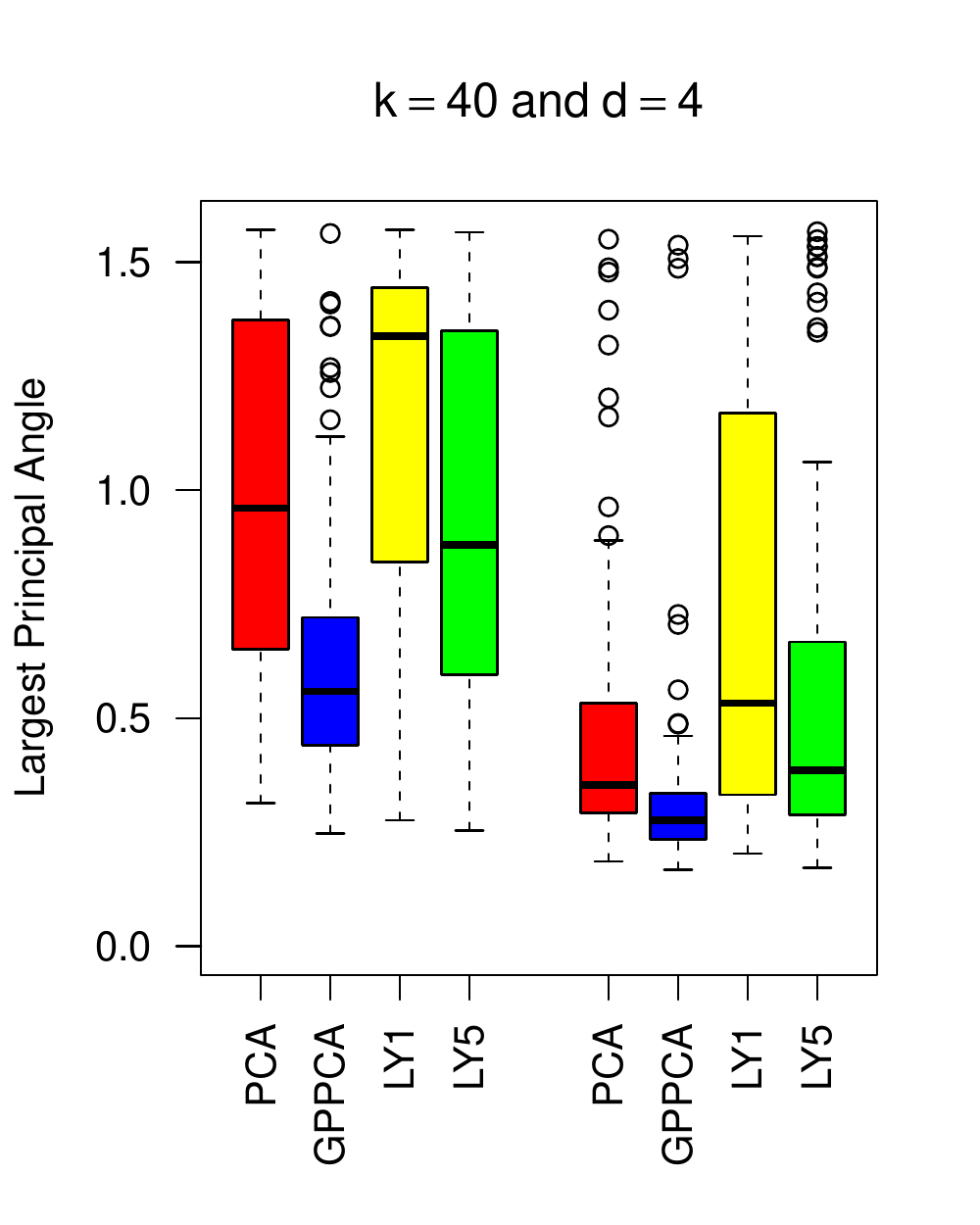}
            \includegraphics[width=.245\textwidth,height=.35\textwidth]{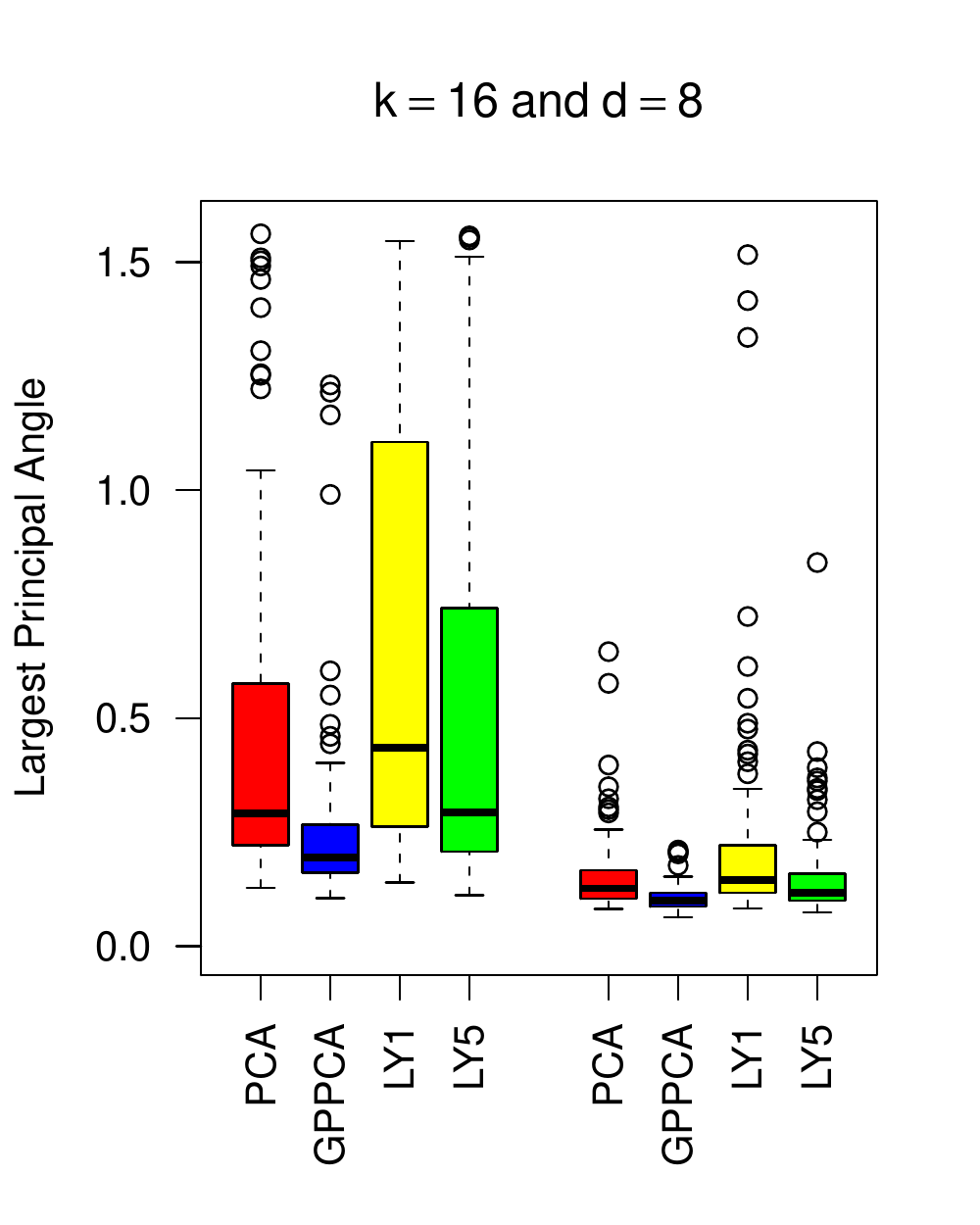}
        \includegraphics[width=.245\textwidth,height=.35\textwidth]{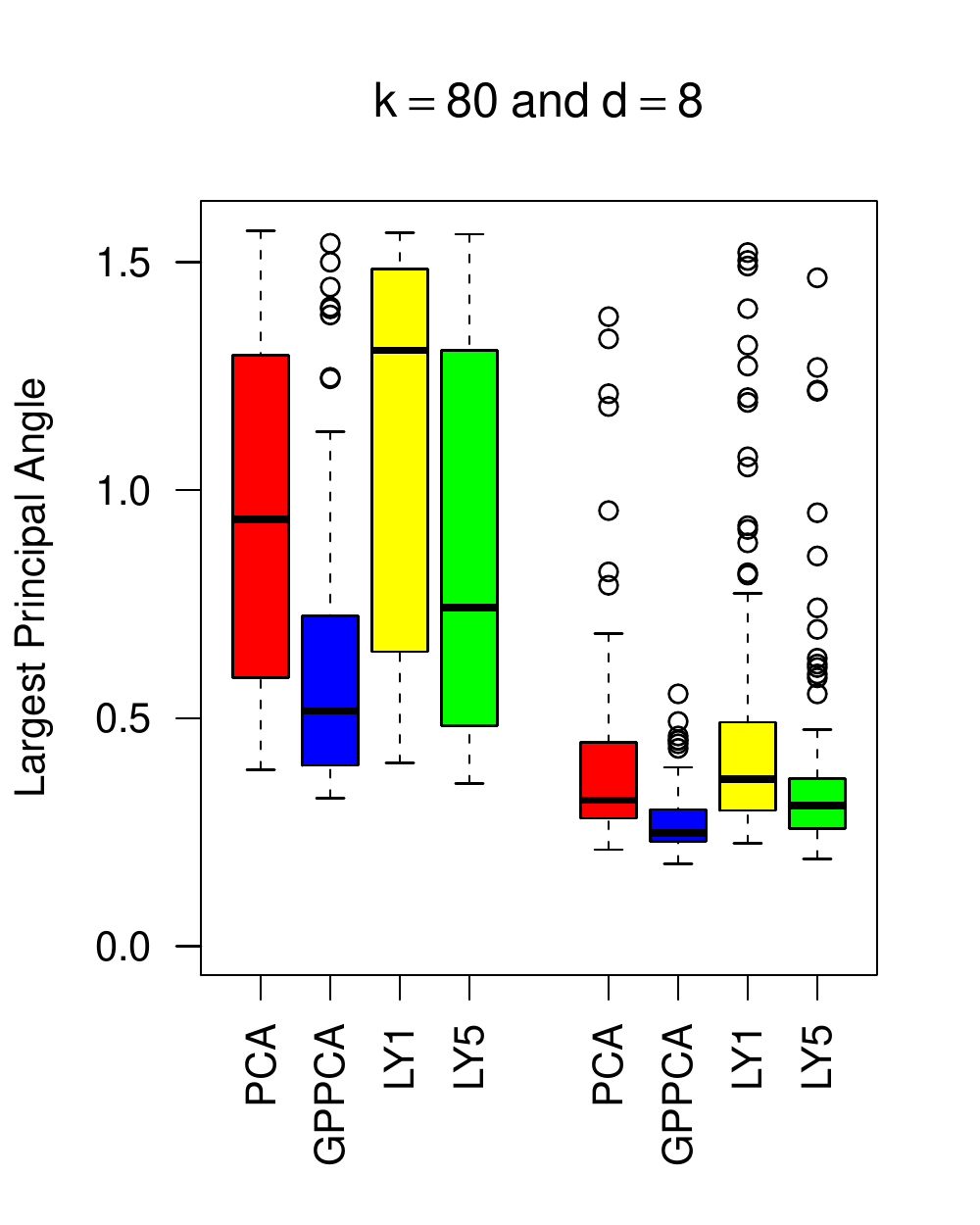}
  \end{tabular}
  \vspace{-.1in}
   \caption{The largest principal angle between the true subspace and the estimated subspace of the four approaches for Example \ref{eg:diff_cov}. The number of observations of each output variable is $n=200$ and $n=400$ for left 4 boxplots and right 4 boxplots in 2 left panels, respectively. The number of observations is $n=500$ and $n=1000$ for left 4 boxplots and right 4 boxplots in 2 right panels, respectively. }
\label{fig:simulation_diff_cov_angles}
\end{figure}

\begin{table}[t]
\begin{center}
\begin{tabular}{lcccc}
  \hline

 $d=4$ and $\tau=4$         &  \multicolumn{2}{c}{k=8} & \multicolumn{2}{c}{k=40}\\
 &  $n=200$ &  $n=400$ & $n=200$ &  $n=400$  \\
  \hline
  PCA            &$1.3\times 10^{-1}$& $1.3\times 10^{-1}$ &$3.8\times 10^{-2}$& $3.0\times 10^{-2}$  \\
  GPPCA          &$\bf 1.4\times 10^{-2}$ & $\bf 4.0\times 10^{-2}$ &$\bf 7.1\times 10^{-3}$ & $\bf 1.1\times 10^{-2}$ \\
    LY1 & $1.6\times 10^{-1}$ & $1.4 \times 10^{-1}$ &$4.9\times 10^{-2}$ &$3.4\times 10^{-2}$ \\
  LY5 &  $1.5\times 10^{-1}$ & $1.3\times 10^{-1}$ & $4.4\times 10^{-2}$&$3.2\times 10^{-2}$\\
  \hline
 $d=8$ and $\tau=4$        &  \multicolumn{2}{c}{k=16} & \multicolumn{2}{c}{k=80}\\
 &  $n=500$ &  $n=1000$ & $n=500$ &  $n=1000$  \\
  \hline
  PCA            &$1.3\times 10^{-1}$& $1.3\times 10^{-1}$ &$3.5\times 10^{-2}$& $2.9\times 10^{-2}$  \\
  GPPCA          & $\bf 1.3\times 10^{-2}$ & $\bf 3.3\times 10^{-2}$ &$\bf 6.0\times 10^{-3}$ & $\bf 8.0\times 10^{-3}$ \\
      LY1 & $1.4\times 10^{-1}$ & $1.3 \times 10^{-1}$ &$3.7\times 10^{-2}$ &$2.9\times 10^{-2}$ \\
  LY5 &  $1.4\times 10^{-1}$ & $1.3\times 10^{-1}$ & $3.4\times 10^{-2}$&$2.8\times 10^{-2}$\\

  \hline

\end{tabular}
\end{center}
   \caption{AvgMSE for Example \ref{eg:diff_cov}.}

   \label{tab:AvgMSE_diff_cov}
\end{table}



Since the covariance matrices are different in Example \ref{eg:diff_cov}, we implement the numerical optimization algrithm on the Stiefel manifold \citep{wen2013feasible} to estimate $\mathbf A$ in Theorem \ref{thm:est_A_diff_cov}. The largest principal angle between $\mathcal M(\mathbf A)$ and  $\mathcal M(\hat {\mathbf A})$ and the AvgMSE in estimating the mean of the output matrix  by different approaches for Example \ref{eg:diff_cov} is given in Figure \ref{fig:simulation_diff_cov_angles} and Table \ref{tab:AvgMSE_diff_cov}, respectively. The estimation by the GPPCA outperforms the other methods based  on both criteria.

\section{Real Data Examples}
\label{sec:real_eg}
We apply the proposed GPPCA approach and compared its performance with other approaches on two real data applications in this section.

\subsection{Emulating multivariate output of the computer models}
We first apply GPPCA for emulating computer models with multivariate output. Computer models or simulators have been developed and used in various scientific, engineering and social applications. Some simulators are computationally expensive (as the numerical solution of a system of the partial different equations (PDEs) is often required and is slow), and some contain multivariate outputs at a set of the input parameters (see e.g. \cite{higdon2008computer,paulo2012calibration,fricker2013multivariate,Gu2016PPGaSP}). Thus, a statistical emulator is often required to approximate the behavior of the simulator. 

We consider the testbed called the `diplomatic and military operations in a non-warfighting domain' (DIAMOND) simulator \citep{taylor2004development}. The DIAMOND simulator models the number of casualties during the second day to sixth day after the earthquake and volcanic eruption in Giarre and Catania. The simulator has 13 input variables, such as the helicopter cruise speed, engineer ground speed,  shelter and food supply capacity at the two places (see Table 1 in \cite{overstall2016multivariate} for a complete list of the input variables). 

We use the same $n=120$ training and $n^*=120$ test outputs in \cite{overstall2016multivariate} to compare different methods. We focus on the out-of-sample prediction criteria:
\begin{eqnarray}
\text{RMSE}&=&\sqrt{\frac{\sum^{k}_{j=1}\sum^{n^*}_{i=1}(\hat Y^*_j(\mathbf x^*_i )-  Y^*_j(\mathbf x^*_i  ))^2 }{kn^*}},\, \label{equ:RMSE} \\
{P_{CI}(95\%)} &=& \frac{1}{k{n^{*}}}\sum\limits_{j = 1}^{k} {\sum\limits_{i = 1}^{n^{*}} 1\{Y^*_j(\mathbf x^*_i  )\in C{I_{ij}}(95\% )\}}\,,\label{equ:PCI} \\
{L_{CI}(95\%)} &=& \frac{1}{{k{n^{*}}}}\sum\limits_{j = 1}^{k} \sum\limits_{i = 1}^{{n^{*}}} {\Length\{C{I_{ij}}(95\% )\} } \,, \label{equ:LCI}
\end{eqnarray}
where  $Y^*_j(\mathbf x^*_i)$ is the $j$th coordinate of the held-out test output vector at the  $i$th  test input $\mathbf x^*_i$ for  $1\leq i\leq n^*$ and $1\leq j\leq k^*$. $C{I_{ij}}(95\% )$ is the $95\%$ predictive credible interval and $\Length\{C{I_{ij}}(95\% )\}$ is the length of the $95\%$ predictive credible interval of the $Y^*_j(\mathbf x^*_i)$. A method with a small out-of-sample RMSE, ${P_{CI}(95\%)}$ being close to nominal $95\%$ level, and  a small $L_{CI}(95\%)$ is considered precise in prediction and uncertainty quantification.

\begin{table}[t]
\begin{center}
\begin{tabular}{lcccccc}
  \hline

Method & Mean function & Kernel &  RMSE &   ${P_{CI}(95\%)} $ &  ${L_{CI}(95\%)} $  \\
  \hline
 GPPCA  & Intercept& Gaussian kernel            &$3.33\times 10^{2}$& $0.948$ &$1.52\times 10^{3}$  \\
  GPPCA &Selected covariates& Gaussian kernel  &$3.18\times 10^{2}$& $0.957$ &$1.31\times 10^{3}$  \\
 GPPCA &Intercept& Mat{\'e}rn kernel           &$2.82\times 10^{2}$& $0.962$ &$1.22\times 10^{3}$  \\
  GPPCA &Selected covariates& Mat{\'e}rn kernel  &$2.74\times 10^{2}$& $0.957$ &$1.18\times 10^{3}$  \\
\hline
    Ind GP  & Intercept& Gaussian kernel            &$3.64\times 10^{2}$& $0.918$ &$1.18\times 10^{3}$  \\
   Ind GP &Selected covariates& Gaussian kernel  &$4.04\times 10^{2}$& $0.918$ &$1.17\times 10^{3}$  \\
  Ind GP &Intercept& Mat{\'e}rn kernel           &$3.40\times 10^{2}$& $0.930$ &$0.984\times 10^{3}$  \\
   Ind GP &Selected covariates& Mat{\'e}rn kernel  &$3.31\times 10^{2}$& $0.927$ &$0.967\times 10^{3}$  \\
  \hline
    Multi GP  & Intercept & Gaussian kernel            &$3.63\times 10^{2}$& $0.975$ &$1.67\times 10^3$   \\
   Multi GP & Selected covariates & Gaussian kernel  &$3.34\times 10^{2}$& $0.963$ &  $1.54 \times  10^3$ \\
      Multi GP & Intercept & Mat{\'e}rn kernel  &$3.01\times 10^{2}$& $0.962$ &  $1.34 \times  10^3$ \\
      Multi GP & Selected covariates & Mat{\'e}rn kernel  &$3.05\times 10^{2}$& $0.970$ &  $1.50 \times  10^3$ \\

\hline

\end{tabular}
\end{center}
   \caption{Emulation of the DIAMOND simulator by different models. The first four rows show the predictive performance by the GPPCA with different mean structure and kernels. The middle four rows give the predictive performance by Ind GP with the same mean structure and kernels, as used in the GPPCA. The 9th and 10th rows show the emulation result of two best models in \cite{overstall2016multivariate} using Gaussian kernel for the same held-out test output, whereas the last two rows give the result of the same model with the Mat{\'e}rn kernel in (\ref{equ:matern_5_2}).  The RMSE is $1.08\times 10^5$ using the mean of the training output to predict. }

   \label{tab:humanity_model}
\end{table}

We compare the prediction performance of the GPPCA, the independent Gaussian processes (Ind GP) and multivariate Gaussian process (Multi GP) on the held-out test output. The Ind GP builds a GP to emulate each coordinate of the output vector separately. The Multi GP in \cite{overstall2016multivariate} proposes a separable model, where the covariance of the output is a Kronecker product of the covariance matrix of the output vector at the same input, and the correlation matrix of the any output variable at different inputs. The parameters of Multi GP are estimated by the MLE using the code provided in \cite{overstall2016multivariate} and the parameters in Ind GP are estimated by the posterior mode using ${\tt RobustGaSP}$ ${\sf R}$ package \citep{gu2018robustgasp}. 


We use a product kernel for all models where each kernel is assumed the same for each input dimension. The Gaussian kernel is assumed in \cite{overstall2016multivariate} and we also include results using the Mat{\'e}rn kernel in (\ref{equ:matern_5_2}) for comparison. In \cite{overstall2016multivariate}, the model with the least RMSE is the one using the Gaussian kernel and a set of selected covariates. We find the 11th input (food capacity in Catania) is positively correlated with the outputs. Thus for the GPPCA and Ind GP, we explore the predictive performance of the models with the mean basis function being $\mathbf h(\mathbf x)=(1, x_{11})$. For GPPCA, we assume the range parameters in the kernels are shared for the latent factor processes, while the variance parameters are allowed to be different.

 \begin{figure}[t]
\centering
  \begin{tabular}{cc}
  \raisebox{5mm}[0pt][0pt]{
   \includegraphics[width=.49\textwidth,height=.31\textwidth]{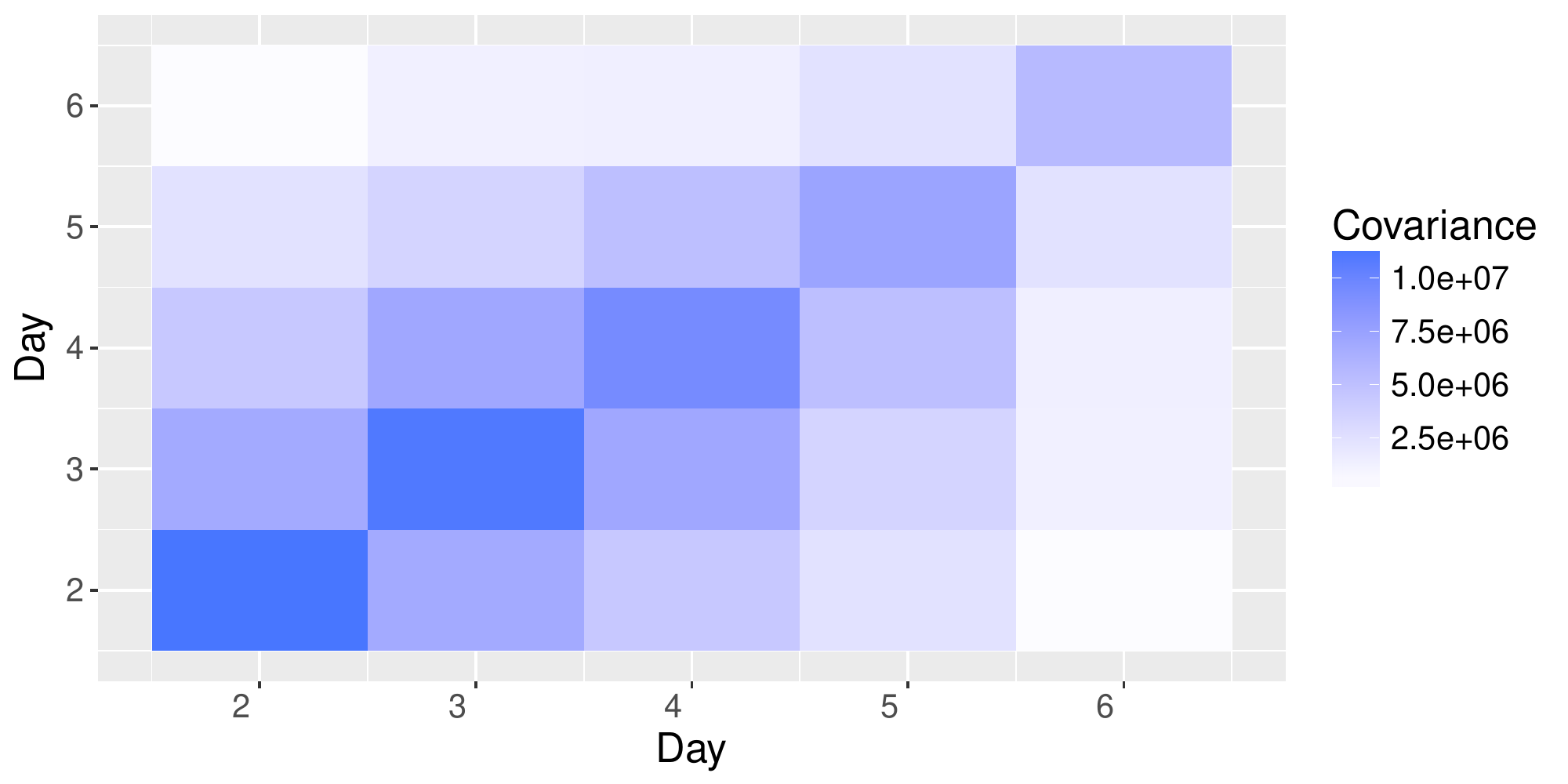}
   }
      \includegraphics[width=.49\textwidth,height=.4\textwidth]{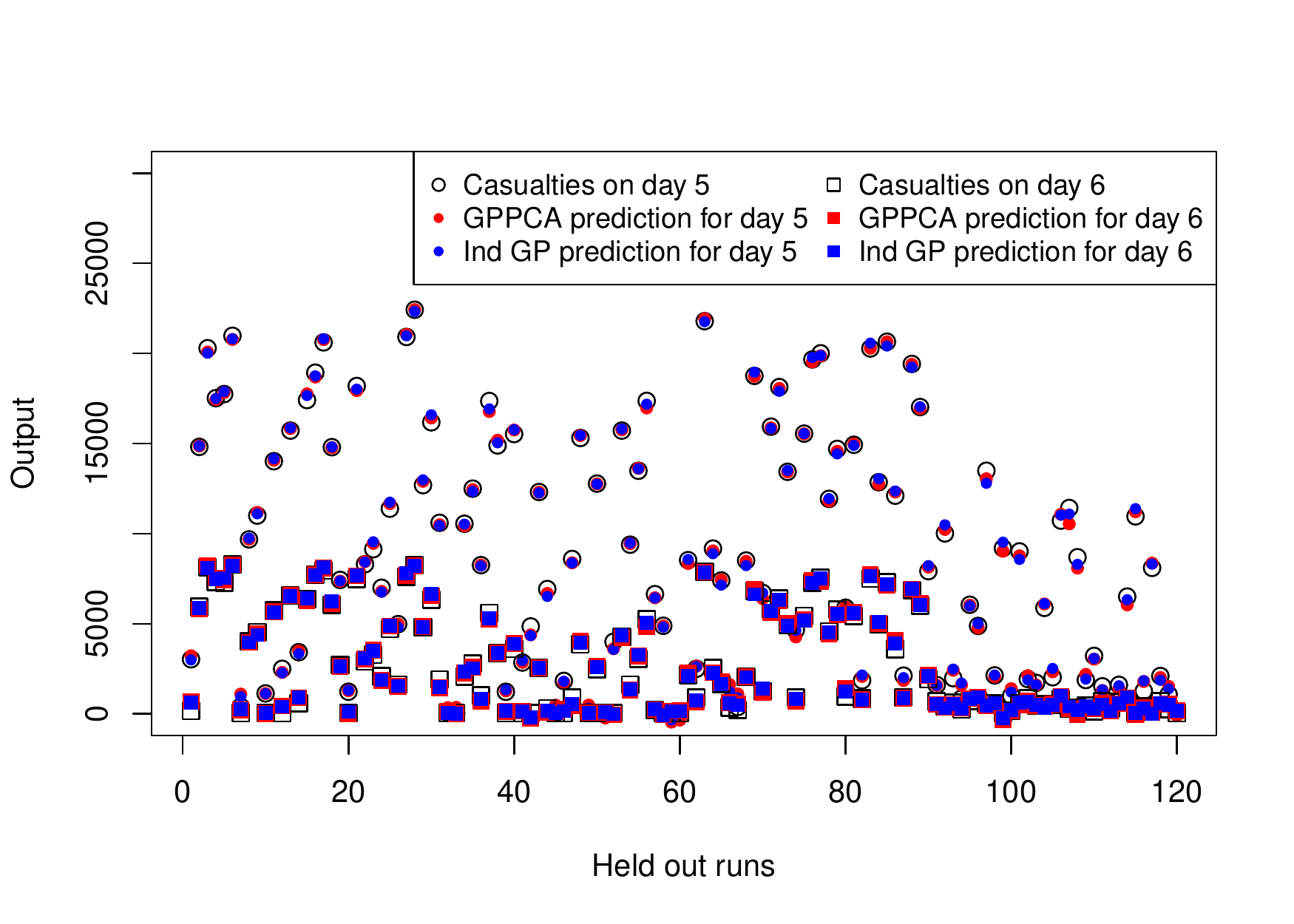}
    
  \end{tabular}
  \vspace{-.1in}
   \caption{The estimated  covariance of the casualties at the different days after the catastrophe by the GPPCA is graphed in the left panel.  The held-out test output, the prediction by the GPPCA and Independent GPs with the mean basis $\mathbf h(\mathbf x)=(1,x_{11} )$ and Mat{\'e}rn kernel for the fifth day and sixth day are graphed in the right panel.   }
\label{fig:humanity_model}
\end{figure}

The predictive RMSE of different models are shown in Table \ref{tab:humanity_model}. Overall, all three approaches are precise in prediction, as the predictive RMSE is less than $1\%$ of the RMSE using the mean to predict. Compared to the other two approaches, the GPPCA has the smallest out-of-sample RMSE on each combination of the kernel function and mean function among three approaches. The nominal $95\%$ predictive interval covers around $95\%$ of held-out test output with relatively short average length of the predictive interval. The predictive interval from Multi GP covers more than $95\%$ of the held-out test output, but the average length of the interval is the highest. The Ind GP has the shortest length of the  predictive interval, but it covers less than  $95\%$ of the held-out test output using any kernel or mean function. The held-out test output on the fifth and sixth day and the prediction by Ind PG and GPPCA are graphed in the right panel in Figure \ref{fig:humanity_model}, both of which seem to be accurate. 
 






In GPPCA,  the estimated covariance matrix of the casualties at the different days is $\mathbf {\hat A}  \bm {\hat   \Lambda}  \mathbf {\hat A} + \hat \sigma^2_0 \mathbf I_k $, where $\bm {\hat   \Lambda}$ is a diagonal matrix where the $i$th term is  $\hat \sigma^2_i$ (the estimated variance of  the $i$th factor).  This covariance matrix is shown in the left panel in Figure \ref{fig:humanity_model}. We found that the estimated covariance between any two days is positive. This is sensible as the short food capacity, for example, is associated with the high casualties for all following days after the catastrophe. We also noticed that the estimated correlation of  the output at the two consecutive days is larger, though we do not enforce a time-dependent structure (such as the autoregressive model in \cite{liu2009dynamic,farah2014bayesian}). The GPPCA is a more general model as the output does not have to be time-dependent, and the estimated covariance between the output variables captures the time dependence in the example. 






\subsection{Gridded temperature}
\label{subsec:gridded_temperature}
In this subsection, we consider global gridded temperature anomalies from U.S. National Oceanic and Atmospheric Administration (NOAA), available at:

\href{ftp://ftp.ncdc.noaa.gov/pub/data/noaaglobaltemp/operational}{ftp://ftp.ncdc.noaa.gov/pub/data/noaaglobaltemp/operational}

\noindent This dataset records the global gridded monthly anomalies of the  air and marine temperature from Jan 1880 to near present with $5^{\circ}\times 5^{\circ}$ latitude-longitude  resolution \citep{SShen2017}. 

A proportion of the temperature measurements is missing in the data set, which is also a common scenario in other climate data set. As many scientific studies may rely on the full data set, we first compare different approaches on  interpolation,  using the monthly temperature anomalies at $k=1,639$ spatial grid boxes in the past 20 years. We hold out the $24,000$ randomly sampled measurements on $k^*=1,200$ spatial grid boxes in $n^*=20$ months as the test data set. The rest $15,336$ measurements are used as the training data. We evaluate the interpolation performance of different methods based on the RMSE, ${P_{CI}(95\%)} $, and ${L_{CI}(95\%)}$ on the test data set.

\begin{table}[t]
\begin{center}
\begin{tabular}{lcccccc}
  \hline
Method  & measurement error &  RMSE &   ${P_{CI}(95\%)} $ &  ${L_{CI}(95\%)} $  \\
  \hline
     GPPCA, $d=50$  &estimated    &$0.386$& $0.870$ &$1.02$  \\
  GPPCA, $d=100$  & estimated           &$0.320$& $0.772$ &$0.563$  \\
 GPPCA, $d=50$  &fixed    &$0.385$& $0.933$ &$1.33$  \\
  GPPCA, $d=100$  & fixed           &$0.314$& $0.977$ &$1.44$  \\
  \hline
    PPCA, $d=50$  &estimated    &$0.620$& $0.677$ &$1.08$  \\
  PPCA, $d=100$  & estimated           &$0.602$& $0.525$ &$0.803$  \\
 PPCA, $d=50$  &fixed    &$0.617$& $0.765$ &$1.32$  \\
  PPCA, $d=100$  & fixed           &$0.585$& $0.819$ &$1.400$  \\
\hline
 Temporal model   &estimated    &$0.937$& $0.944$ &$2.28$  \\
  Spatial model &estimated    &$0.560$& $0.942$ &$2.23$  \\
  Spatio-temporal model &  estimated   &$0.492$ &$0.957$ & $2.10$ \\
\hline
 Temporal regression by RF  &estimated    &$0.441$ & /&/   \\
 Spatial regression by RF  &estimated    &$0.391$& / & /  \\
\hline

\end{tabular}
\end{center}
   \caption{Out of sample prediction of the temperature anomalies by different approaches. The first four rows give the predictive performance by the GPPCA with different latent factors, estimated and fixed variance of the measurement error, whereas the latter four rows record the results by the PPCA. The predictive performance by the temporal, spatial and spatio-temporal smoothing methods  are given in the 9th and 10th rows. The last two rows give the predictive RMSE by regression using the random forest (RF) algorithm. }

   \label{tab:temperature}
\end{table}

The  predictive performance by the GPPCA using the predictive distribution in (\ref{equ:cond_pred_with_mean}) is shown in the first four rows of Table \ref{tab:temperature}. Here the number of grid boxes is $k=1639$, and the temporal correlation of the temperature measurements at different months are parameterized by the Mat{\'e}rn kernel in (\ref{equ:matern_5_2}). We model the intercept and monthly change rate at each location by assuming the mean basis function $\mathbf h(x)=(1,x)$, where $x$ is an integer from $1$ to $240$ to denote the month of an observation. We explore the cases with $d=50$ and $d=100$ latent factor processes where the covariance in each latent process is assumed to be the same.  In this dataset, the average recorded variance of the measurement error is around $0.1$. We implement the scenarios with an estimated variance or a fixed variance of the measurements. In the fifth to the eighth rows, we show the predictive performance of the PPCA with the same number of latent factors. In the ninth and tenth rows, we show the results by a spatial model and a temporal model both based on the  Mat{\'e}rn kernel, separately for the observations in each spatial grid box and in each month, respectively. The ${\tt RobustGaSP}$ ${\sf R}$ package \citep{gu2018robustgasp} is used to fit the GP regression with the estimated nuggets, and the mean basis function is assumed to be  $\mathbf h(x)=(1,x)$ when fitting GP regression for the monthly measurements.  The predictive performance by a spatio-temporal model that use a product Mat{\'e}rn kernel function is  shown in the eleventh row. In the last two rows in Table \ref{tab:temperature}, we consider two regression schemes based on the random forest algorithm (\cite{breiman2001random}). The first scheme treats the observations in each spatial grid box as independent measurements, whereas the second scheme treats the observations in each month as independent measurements. The  modeling fitting details of these approaches are given in Appendix D. 


First, we find that GPPCA has the lowest out-of-sample RMSE among all the methods we considered. When the number of factors increases, both the PPCA and GPPCA seem to perform better in terms of RMSEs. However, the estimation by the GPPCA is more precise. This is because the temporal correlation and linear trend are modeled and estimated in the GPPCA, whereas the PPCA is a special case of GPPCA with the independent monthly measurements. This result is achieved with the simplest setting in GPPCA, that is when the covariance of the factor processes is assumed to be the same. In this case, the estimation of the factor loadings has a closed form expression. Assuming different parameters in the factor processes and use other kernel functions may further improve the precision in prediction. Furthermore, when the variance of the measurement error is estimated, the predictive credible interval by either the PPCA or GPPCA is too short, resulting in less than $95\%$ of the data covered by $95\%$ predictive interval. When the variance of the noise is fixed to be $0.1$ (the variance of the measurement error), around $95\%$ of the held-out data are covered in the nominal $95\%$ predictive interval in the GPPCA, but not in the PPCA.



\begin{figure}[t]
\centering
  \begin{tabular}{ccc}
      \includegraphics[width=.33\textwidth,height=.3\textwidth]{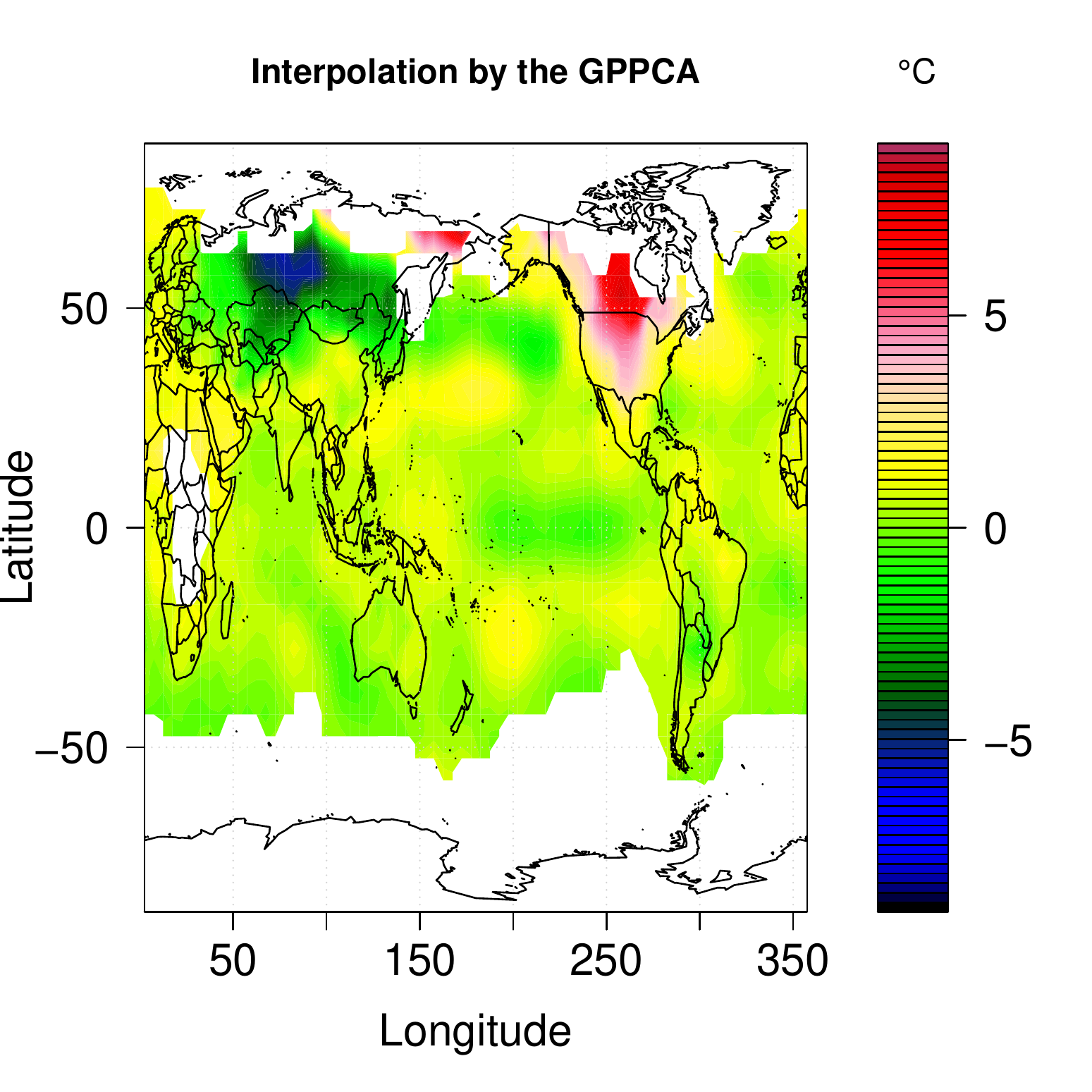} 
    \includegraphics[width=.33\textwidth,height=.3\textwidth]{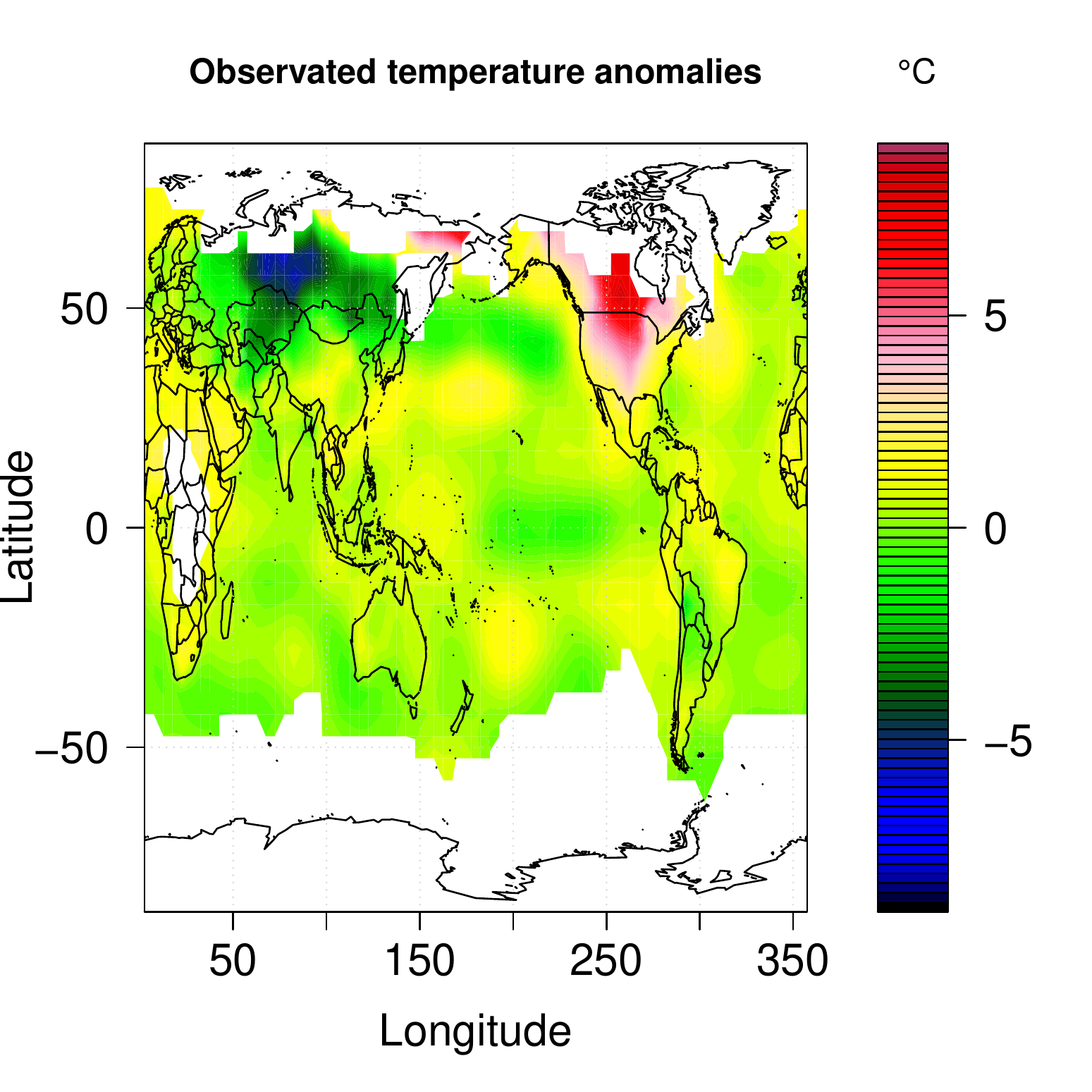} 
          \includegraphics[width=.33\textwidth,height=.3\textwidth]{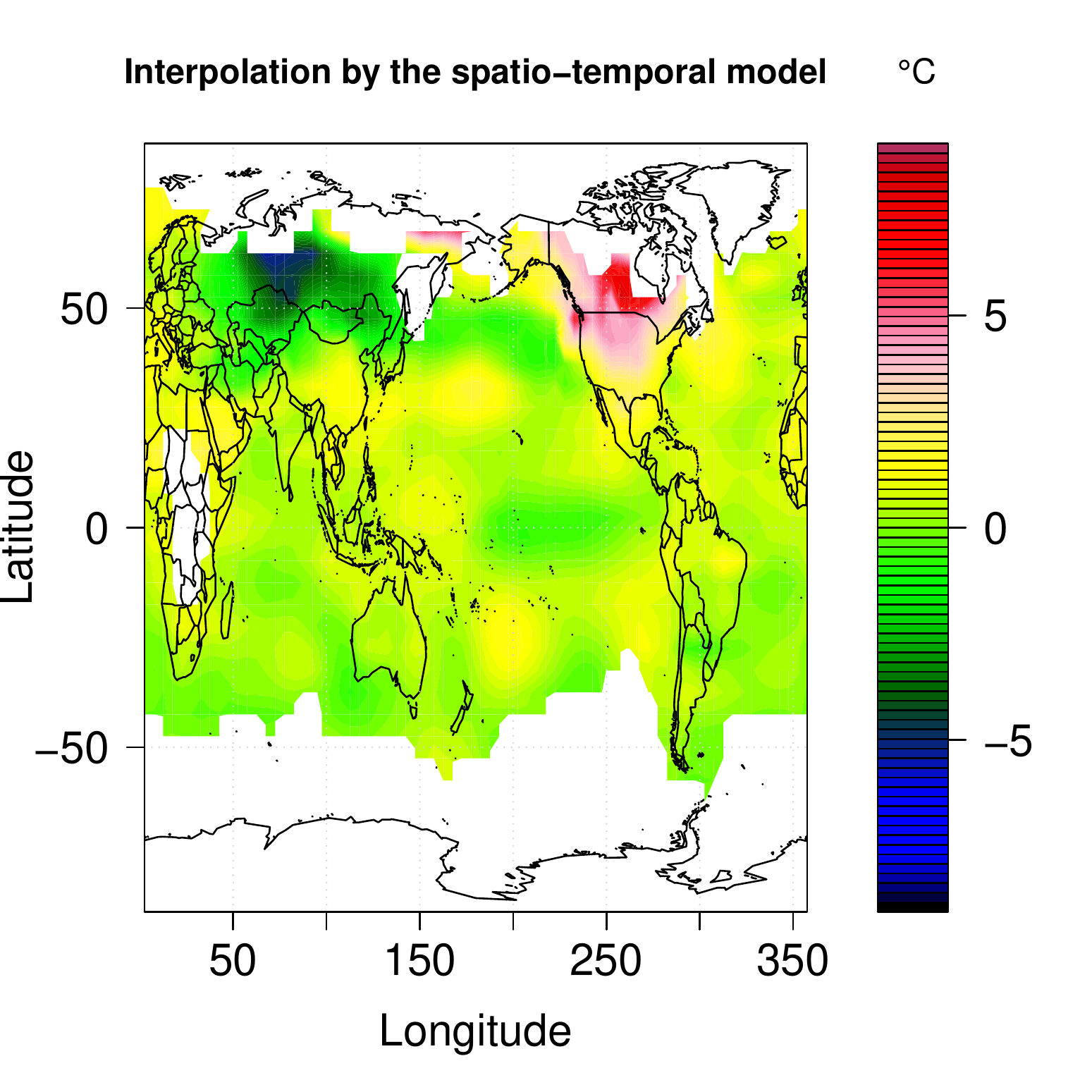} 
  \end{tabular}
  \vspace{-.1in}
   \caption{Interpolation of the temperature anomalies in November 2016. The real temperature anomalies in November 2016 is graphed in the middle panel. The interpolated temperature anomalies by the GPPCA and spatio-temporal model are graphed in the left and right panels, respectively. The number of training  and test observations are 439 and 1200, respectively. The out-of-sample RMSE of the GPPCA and spatio-temporal model is $0.314$ and $0.747$, respectively. }
\label{fig:temperature_2016_Nov_prection}
\end{figure}

The  spatial smoothing approach by GP and spatial regression by RF have smaller predictive errors than its temporal counterparts, indicating the spatial correlation may be larger than the temporal correlation in the data. Combining both the spatial and temporal information seems to be more accurate than using only the spatial or temporal information. However, the spatio-temporal model  is not as accurate as the GPPCA. We plot the interpolated temporal anomalies in November 2016 by the GPPCA (with the variance of the measurement error fixed to be 0.1) and the spatio-temporal model in the left and right panels in Figure   \ref{fig:temperature_2016_Nov_prection}, respectively. Compared with the observed temperature anomalies shown in the middle panel, the GPPCA interpolation is more precise than the spatial smoothing method at the locations where the temperature anomalies changes rapidly, e.g. the region between the U.S. and Canada, and the east region in Russia. We should acknowledge that the implemented spatio-temporal model is not the only choice. Other spatio-temporal models may be applicable, yet fitting these models may be more computationally expensive.

Note that the missing values are typically scattered in different rows and columns of the observation matrix in practice. One of the future directions is to extend the GPPCA to include the columns of the data matrix with missing values to improve the estimation of the factor loading matrix and the predictive distribution of the missing values, based on 
expectation-maximization algorithm, or the Markov chain Monte Carlo algorithm if one can specify the full posterior distributions of the factor loading matrix and the parameters. Besides, We should also emphasize that we do not utilize the spatial distance in the GPPCA. This makes the GPPCA suitable for other interpolation and matrix completion tasks when there is no distance information between the output variables. 








\begin{figure}[t]
\centering
  \begin{tabular}{cc}
      \includegraphics[width=.48\textwidth,height=.43\textwidth]{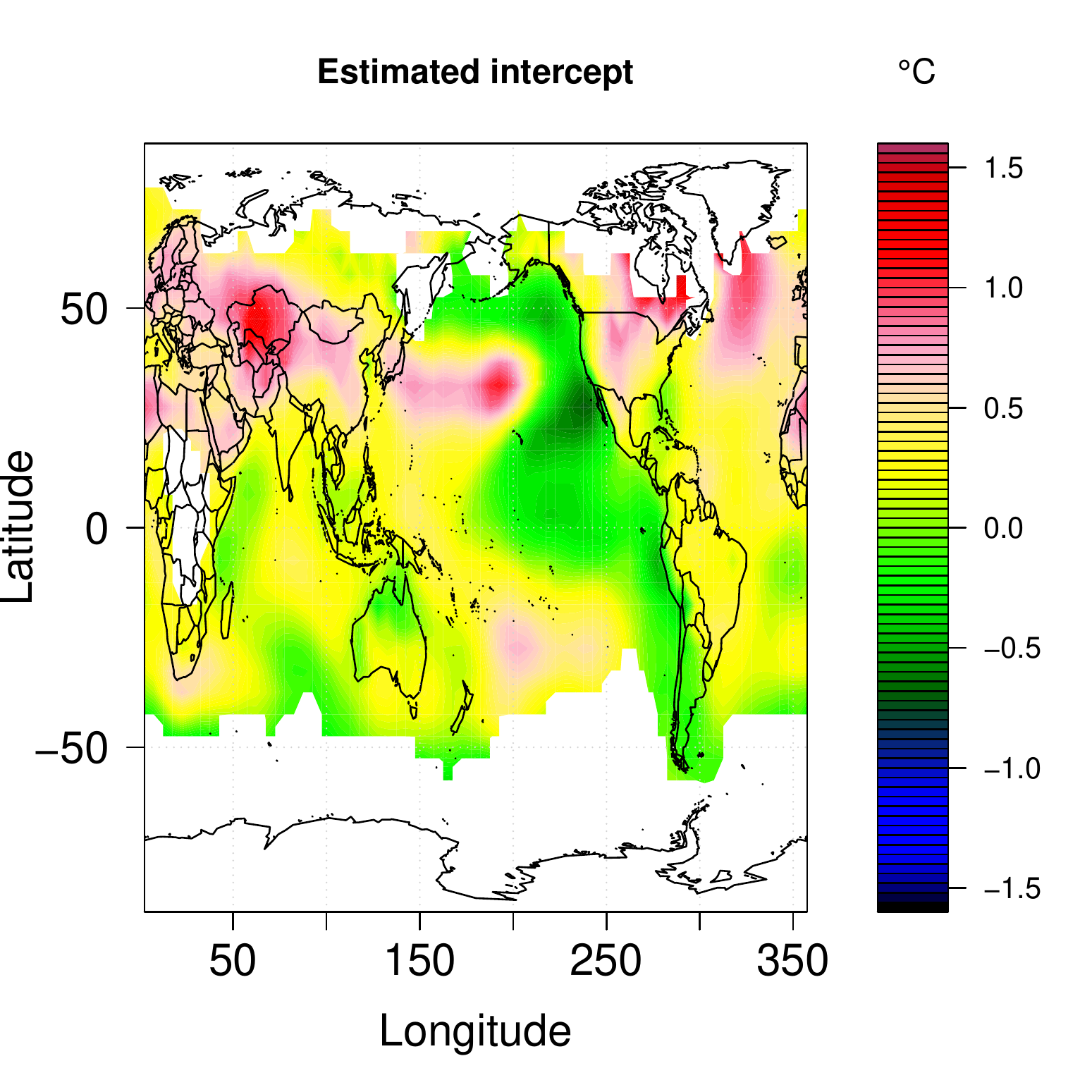} 
    \includegraphics[width=.48\textwidth,height=.43\textwidth]{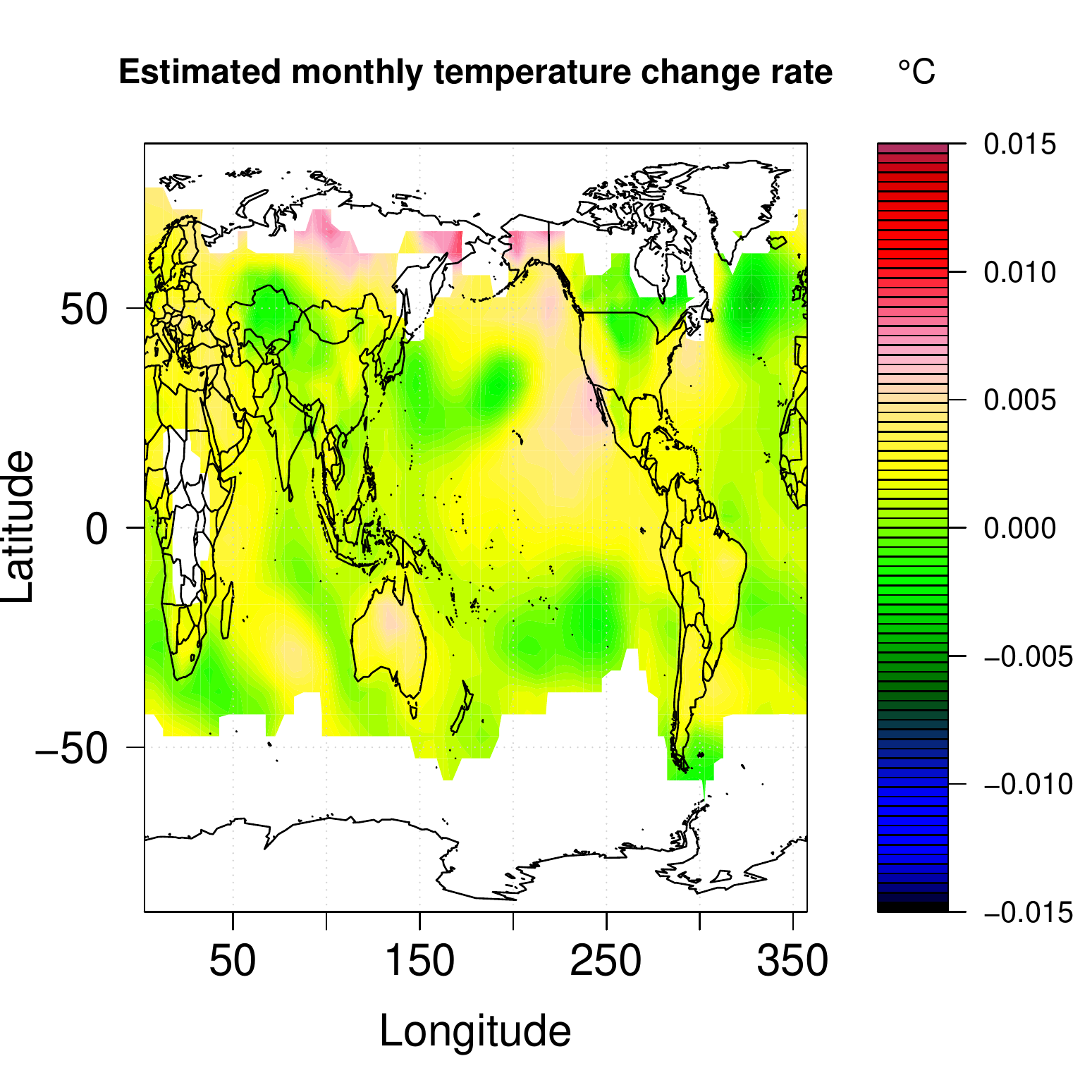} 

  \end{tabular}
  \vspace{-.1in}
   \caption{Estimated intercept and monthly change rate of the temperature anomalies by the GPPCA using the monthly temperature anomalies between January 1999 and December 2018.  }
\label{fig:temperature_April_theta_est}
\end{figure}

  The estimated trend parameters $\bm {\hat \Theta}$ by the GPPCA are shown in Figure \ref{fig:temperature_April_theta_est}. Based on the last twenty years' data, the average annual increase of the temperature is at the rate of around $0.02 ~^o C$. The areas close to the north pole seems to have the most rapid increase rate. Among the rest of the areas, the south west part and the north east part of the U.S. also seem to increase slightly faster than the other areas. Note we only use the observations from the past 20 years for demonstration purpose. A study based on a longer history of measurements may give a clearer picture of the change in global temperature. 
  






\section{Concluding remarks}
\label{sec:conclusion}
In this paper, we have introduced the GPPCA, as an extension of the PPCA for the latent factor model with the correlated factors. By allowing data to infer the covariance structure of the factors,  the estimation of the factor loading matrix and the predictive distribution of the output variables both become more accurate by the GPPCA, compared to the ones by the  PCA and other approaches. This work also highlights the scalable computation achieved by a closed-form expression of the inverse covariance matrix in the marginal likelihood. In addition, we extend our approach to include additional covariates in the mean function and we manage to marginalize out the regression parameters to obtain a closed-form expression of the marginal likelihood when estimating the  factor loading matrix. 



There are several future directions related to this work. First of all, the factor loading matrix, as well as other parameters in the kernel functions and the variance of the noise, is estimated by the maximum marginal likelihood estimator, where the uncertainty in the parameter estimation is not expressed in the predictive distribution of the output variables. A full Bayesian approach may provide a better way to quantify the uncertainty in the predictive distribution. Secondly, we assume the number of the latent factors is known in this work. A consistent way to identify the number of latent factors is often needed in practice. Thirdly, the convergence rate of the predictive distribution and the estimation of the subspace of the factor loading matrix of the GPPCA both need to be explored. The numerical results shown in this work seem to be encouraging towards this direction.  Furthermore, when the covariances of the factor processes are not the same, the numerical optimization algorithm that preserves the orthogonal constraints \citep{wen2013feasible} is implemented for the marginal maximum likelihood estimator of the factor loading matrix. The convergence of this algorithm is an interesting direction to explore. A fast algorithm for the optimization problem in Theorem \ref{thm:est_A_diff_cov} will also be crucial for some computationally intensive applications. Finally, here we use kernels to parameterize the covariance of the factor processes for demonstrative purposes. The GPPCA automatically apply to many other models of the latent factors, as long as the likelihood of a factor follows a multivariate normal distribution.  It is interesting to explore the GPPCA in other factor models and applications.

\acks{The authors thank the editor and three anonymous referees for their comments that substantially improve this article. Shen's research is partially supported by the Simons Foundation Award 512620 and the National Science Foundation (NSF DMS 1509023).}


\newpage

\appendix
\section*{Appendix A: Auxiliary facts}
\begin{enumerate}[1.]
\item Let $\mathbf A$ and $\mathbf B$ be matrices, 
\[ (\mathbf A \otimes \mathbf B)^T=  (\mathbf A^T  \otimes \mathbf B^T); \]
further assuming $\mathbf A$ and $\mathbf B$ are invertible, 
\[ (\mathbf A \otimes \mathbf B)^{-1}=  \mathbf A^{-1} \otimes  \mathbf B^{-1}. \]
\label{item:matrix_kronecker_trans_inv}
\item Let $\mathbf A$, $\mathbf B$, $\mathbf C$ and $\mathbf D$ be the matrices such that the products $\mathbf A\mathbf C$ and $\mathbf B\mathbf D$ are matrices,  
\[(\mathbf A \otimes \mathbf B)(\mathbf C\otimes \mathbf D)= (\mathbf A \mathbf C) \otimes  (\mathbf B \mathbf D).\]
\label{item:matrix_kronecker_otimes}
\item For matrices $\mathbf A$,  $\mathbf B$ and  $\mathbf C$, 
\begin{align*}
(\mathbf C^T \otimes \mathbf A) \mbox{vec}(\mathbf B)= \mbox{vec}(\mathbf A \mathbf B \mathbf C);
\end{align*}
further assuming $\mathbf A^T \mathbf B$ is a matrix, 
\begin{align*}
\tr(\mathbf A^T \mathbf B )= \mbox{vec}(\mathbf A)^T \mbox{vec}(\mathbf B).
\end{align*}
\label{item:vectorization}
\item For any invertible $n \times n$ matrix $\mathbf C$,
\[|\mathbf C+\mathbf A \mathbf B|=|\mathbf C||\mathbf I_n + \mathbf B \mathbf C^{-1} \mathbf A |. \]
\label{item:determinant}
\end{enumerate}
\section*{Appendix B: Proofs}
We first give some notations for the vectorization used in the proofs. 
Let $\mathbf A_v=[\mathbf I_n \otimes \mathbf a_1,..., \mathbf I_n \otimes \mathbf a_d]$ and $\mathbf Z_{vt}= \mbox{vec}(\mathbf Z^T)$. Let $\bm \Sigma_v$ be a $nd\times nd$ block diagonal matrix where the $l$th diagonal block is $\bm \Sigma_l$, for $l=1,...,d$.

\begin{proof}[Proof of Equation (\ref{equ:Y_v_cov})]
Vectorize the observations in model (\ref{equ:model_1}), one has 
\[\mathbf Y_v=  \mathbf A_v \mathbf Z_{vt} +\bm \epsilon_v \]
where $\mathbf Z_{vt}\sim \N(\mathbf 0, \, \bm \Sigma_v )$ and $\bm \epsilon_v\sim N(\mathbf 0, \sigma^2 \mathbf I_{nk})$. Using the fact \ref{item:matrix_kronecker_trans_inv} and fact \ref{item:matrix_kronecker_otimes},  $\mathbf A_v \mathbf Z_{vt} \sim \MN(\mathbf 0, \, \bm \Sigma_{A_v Z_{vt}} )$, where 
\begin{equation}
\bm \Sigma_{A_v Z_{vt}} = \mathbf A_v \bm \Sigma_v \mathbf A^T_v = [\bm \Sigma_1 \otimes \mathbf a_1,..., \bm \Sigma_d \otimes \mathbf a_d]  \mathbf A^T_v = \sum^{d}_{l=1} \bm \Sigma_l \otimes (\mathbf a_l \mathbf a^T_l)
  \label{equ:Sigma_AZ}
\end{equation}
 for $l=1,...,d$. Marginalizing out $\mathbf Z_{vt}$, one has Equation (\ref{equ:Y_v_cov}). 
\end{proof}

\begin{proof}[Proof of Lemma~\ref{lemma:Y_v_inv_cov}]
By  (\ref{equ:Y_v_cov}) and (\ref{equ:Sigma_AZ}), one has 
\[\mathbf Y_v \mid \mathbf A, \sigma^2_0, \bm \Sigma_{1},..., \bm \Sigma_{d} \sim \MN\left(\mathbf 0, \,  \mathbf A_v \bm \Sigma_v \mathbf A^T_v+ \sigma^2_0 \mathbf I_{nk} \right).\] 
The precision matrix is
\begin{align*}
&(\mathbf A_v \bm \Sigma_v \mathbf A^T_v+ \sigma^2_0 \mathbf I_{nk})^{-1} \\
=& \sigma^{-2}_0\mathbf I_{nk}-\mathbf A_v \frac{( \sigma^2_0 \bm \Sigma^{-1}_v+ \mathbf A^T_v \mathbf A_v )^{-1}}{\sigma^2_0} \mathbf A^T_v \\
=&\sigma^{-2}_0\mathbf I_{nk}-\mathbf A_v \frac{( \sigma^2_0 \bm \Sigma^{-1}_v+ \mathbf I_{nd} )^{-1}}{\sigma^2_0} \mathbf A^T_v \\
=&\sigma^{-2}_0 \left\{ \mathbf I_{nk}- [ (\sigma^2_0 \bm \Sigma^{-1}_1+ \mathbf I_{n})^{-1} \otimes \mathbf a_1,...,(\sigma^2_0 \bm \Sigma^{-1}_d+ \mathbf I_{n})^{-1} \otimes \mathbf a_d ] \mathbf A^T_v \right\}\\
=& \sigma^{-2}_0  \left( \mathbf I_{nk}- \sum^{d}_{l=1}  (\sigma^2_0 \bm \Sigma^{-1}_l+ \mathbf I_{n})^{-1} \otimes \mathbf a_l \mathbf a^T_l \right)
\end{align*}
where the first equality follows from the Woodbury identity; the second equality is by Assumption \ref{assumption:A}; the third equality is by fact \ref{item:matrix_kronecker_otimes}; and the four equality is by  fact \ref{item:matrix_kronecker_trans_inv} and fact \ref{item:matrix_kronecker_otimes}, from which the results follow immediately. 
\end{proof}

\begin{proof}[Proof of Theorem ~\ref{thm:est_A_shared_cov}]

When $\bm \Sigma_1=...=\bm \Sigma_d=\bm \Sigma$, by the fact \ref{item:vectorization}, the likelihood of $\mathbf A$ is 
\begin{align*}
L(\mathbf A \mid \mathbf Y,  \sigma^2_0,  \bm \Sigma)&\propto \exp\left(- \frac{\mathbf { Y}^T_v\left( \mathbf I_{nk}-  (\sigma^2_0 \bm \Sigma^{-1}+ \mathbf I_{n})^{-1} \otimes \sum^{d}_{l=1} \mathbf a_l \mathbf a^T_l \right) \mathbf { Y}_v}{2\sigma^{2}_0} \right)\\
&\propto \exp\left(- \frac{ \mathbf { Y}^T_v \mathbf { Y}_v - \mathbf { Y}^T_v \mbox{vec}( \mathbf A \mathbf A^T\mathbf { Y}\bm (\sigma^2_0 \bm \Sigma^{-1}+ \mathbf I_{n})^{-1} ) }{2\sigma^{2}_0}\right) \\
&\propto \etr\left(- \frac{ \mathbf { Y}^T \mathbf { Y} - \mathbf { Y}^T \mathbf A \mathbf A^T\mathbf { Y}\bm (\sigma^2_0 \bm \Sigma^{-1}+ \mathbf I_{n})^{-1}  }{2\sigma^{2}_0 } \right) \\
&\propto \etr\left(- \frac{ \mathbf { Y}^T \mathbf { Y} -  \mathbf A^T\mathbf { Y}\bm (\sigma^2_0 \bm \Sigma^{-1}+ \mathbf I_{n})^{-1} \mathbf { Y}^T \mathbf A }{2\sigma^{2}_0 } \right), 
\end{align*}
where $\etr(\cdot):=\exp(\tr(\cdot))$. 

Maximizing the likelihood as a function of $\mathbf A$ is equivalent to  the optimization problem: 
\begin{equation}
\mathbf {\hat A}= \argmax_{\mathbf A} \mbox{tr}(\mathbf A^T \mathbf G \mathbf A  ) \quad s.t. \quad \mathbf A^T \mathbf A=\mathbf I_d,
\label{equ:trace_opt}
\end{equation}
where $\mathbf G=  \mathbf Y  \left( \mathbf I_n +\sigma^2_0\bm \Sigma^{-1} \right)^{-1}  \mathbf Y^T $. This optimization in (\ref{equ:trace_opt}) is a trace optimization problem (\cite{kokiopoulou2011trace}). By the Courant-Fischer-Weyl min-max principal (\cite{saad1992numerical}), $\mbox{tr}(\mathbf A^T \mathbf G \mathbf A  ) $ is maximized when $\mathbf {\hat A}=\mathbf U \mathbf R$, with $\mathbf U$ being the orthonormal basis of the eigenspace  associated with the $d$ largest eigenvalue of $\mathbf G$ and $ \mathbf R$ is any arbitrary rotation matrix. In this case, $\mbox{tr}(\mathbf {\hat A}^T \mathbf G  \mathbf {\hat A} )= \mbox{tr} (\mathbf U \bm \Lambda \mathbf U^T)=\sum^{d}_{l=1} \lambda_l$, where $\bm \Lambda$ is a diagonal matrix of the $d$ largest eigenvalue $\lambda_l$  of $\mathbf G $, for $l=1,...,d$. 
\end{proof}

\begin{proof}[Proof of Theorem ~\ref{thm:est_A_diff_cov}]
Under Assumption \ref{assumption:A}, by fact \ref{item:vectorization}, the likelihood for $\mathbf A$ is
\begin{align*}
L(\mathbf A \mid \mathbf Y, \sigma^2_0, \bm \Sigma_1,..., \bm \Sigma_d)&\propto \exp\left(- \frac{\mathbf {\ Y}^T_v\left( \mathbf I_{nk}- \sum^{d}_{l=1}  (\sigma^2_0 \bm \Sigma^{-1}_l+ \mathbf I_{n})^{-1} \otimes \mathbf a_l \mathbf a^T_l  \right) \mathbf {\ Y}_v}{2\sigma^{-2}_0} \right)\\
&\propto\etr\left(-\frac{\mathbf {\ Y}^T \mathbf {\ Y}- \mathbf {\ Y}^T\sum^d_{l=1} \mathbf a_l \mathbf a^T_l   \mathbf {\ Y} (\sigma^2_0 \bm \Sigma^{-1}_l+ \mathbf I_{n})^{-1} }{2 \sigma^2_0}\right)\\
&\propto\etr\left(-\frac{\mathbf {\ Y}^T \mathbf {\ Y}- \sum^d_{l=1}  \mathbf a^T_l   \mathbf {\ Y} (\sigma^2_0 \bm \Sigma^{-1}_l+ \mathbf I_{n})^{-1} \mathbf {\ Y}^T \mathbf a_l }{2 \sigma^2_0}\right),
\end{align*}
from which the result follows. 

\end{proof}

\begin{proof}[Proof of Equation (\ref{equ:profile_lik})]


From the proof of Theorem ~\ref{thm:est_A_diff_cov}, one has
\begin{align}
L(\sigma^2_0, \mid \mathbf Y, \bm \Sigma_1,..., \bm \Sigma_d,\mathbf A)&\propto (\sigma^2_0)^{-nk/2} \etr\left(-\frac{\mathbf {\ Y}^T \mathbf {\ Y}- \sum^d_{l=1}  \mathbf a^T_l   \mathbf {\ Y} (\sigma^2_0 \bm \Sigma^{-1}_l+ \mathbf I_{n})^{-1} \mathbf {\ Y}^T \mathbf a_l }{2 \sigma^2_0}\right).
\label{equ:lik_sigma_2_0}
\end{align}
 Equation (\ref{equ:hat_sigma_2_0}) follows immediately by maximizing  (\ref{equ:lik_sigma_2_0}).

We now turn to show the profile likelihood in (\ref{equ:profile_lik}). Under Assumption \ref{assumption:A}
\begin{align}
&p(\mathbf Y \mid \bm \tau, \bm \gamma,  \mathbf { A}, \sigma^2_0  ) \nonumber \\
=&\int p(\mathbf Y \mid  \mathbf { A}, \sigma^2_0,\mathbf Z  ) p(\mathbf Z \mid \bm \tau, \bm \gamma) d\mathbf Z \nonumber\\
=& \int (2\pi \sigma^2_0)^{-\frac{nk}{2} } \etr\left(-\frac{ (\mathbf Y- \mathbf A \mathbf Z)^T (\mathbf Y- \mathbf A \mathbf Z)  }{\sigma^2_0} \right) (2\pi)^{-\frac{nd}{2} }    \prod^d_{l=1} |\bm \Sigma_l|^{-\frac{1}{2} } \exp\left(-\frac{1}{2} \sum^d_{l=1} \mathbf Z^T_l \bm \Sigma^{-1}_l  \mathbf Z_l\right)  d\mathbf Z \nonumber\\
=&  (2\pi \sigma^2_0)^{-\frac{nk}{2}  }   \prod^d_{l=1} \left|\bm \Sigma_l/\sigma^2_0+ \mathbf I_k \right|^{-1/2} \exp\left(-\frac{S^2}{2\sigma^2_0} \right) \label{equ:last_line_pf}
\end{align}
where $S^2=\tr(  \mathbf {Y}^T \mathbf {Y} )-\sum^d_{l=1} \mathbf { a}^T_l \mathbf {Y}  (  \tau^{-1}_l\mathbf R^{-1}_l  +\mathbf I_n)^{-1} \mathbf {Y}^T  \mathbf {a}_l$. Equation (\ref{equ:profile_lik}) follows by plugging $\mathbf {\hat A}$ and $\hat \sigma^2_0$ into (\ref{equ:last_line_pf}).  
\end{proof}

\begin{proof}[Proof of Theorem~\ref{thm:prediction}]
Denote the parameters $ \hat {\bm \theta}=({ \bm {\hat \gamma},  \mathbf { \hat A},  \bm {\hat  \sigma}^2, \hat \sigma^2_0})$. Denote $\bm \hat \Sigma$ as the estimated $\bm \Sigma_v$ by plugging the estimated parameters.  We first compute the posterior distribution of $(\mathbf Z_{vt} \mid \mathbf Y_v, \, \bm {\hat \theta})$. From Equation (\ref{equ:Y_v_cov}), 
\begin{align*}
p(\mathbf Z_{vt} \mid \mathbf Y_v, \bm {\hat \theta} )&\propto \exp\left( \frac{ (\mathbf Y_v -\mathbf {\hat A}_v \mathbf Z_{vt} )^T (\mathbf Y_v -\mathbf {\hat A}_v \mathbf Z_{vt})}{2\hat \sigma^2_0}\right) \exp\left( -\frac{1}{2} \mathbf Z^T_{vt} \bm {\hat \Sigma}^{-1}_v \mathbf Z_{vt} \right) \\
&\propto \exp\left\{ -\frac{1}{2}(\mathbf Z_{vt} -\mathbf {\hat Z}_{vt} )^T \left( \frac{\mathbf {\hat A}^T_v \mathbf {\hat A}_v}{\hat \sigma^2_0 }+\bm {\hat \Sigma}_v^{-1} \right) (\mathbf Z_{vt} -\mathbf {\hat Z}_{vt})\right\},
\end{align*}
where $\mathbf {\hat Z}_{vt} =  ({\mathbf {\hat A}^T_v \mathbf {\hat A}_v} + \hat \sigma^2_0 \bm {\hat \Sigma_v}^{-1})^{-1}  \mathbf {\hat A}^T_v \mathbf Y_v$ from which we have
\begin{equation}
\mathbf Z_{vt} \mid \mathbf Y_v, \bm {\hat \theta} \sim \mbox{MN}\left(\mathbf {\hat Z}_{vt} , \,  \left(\frac{\mathbf {\hat A}^T_v \mathbf {\hat A}_v}{\hat {\sigma}^2_0 }+\bm {\hat \Sigma_v}^{-1}\right)^{-1} \right).
\label{equ:Z_posterior} 
\end{equation}

Note $\mathbf {\hat A}^T_v \mathbf {\hat A}_v=\mathbf I_{nd} $. Using fact \ref{item:matrix_kronecker_otimes} and fact \ref{item:vectorization},  one has 
\begin{align}
\mathbf {\hat Z}_{vt}&= \left( {\begin{array}{*{20}{c}}
\left(\hat \sigma^2_0 \bm {\hat  \Sigma}^{-1}_1 +{\mathbf I_n}\right)^{-1}\otimes  \mathbf {\hat a}^T_1  \\
\vdots \\
\left(\hat \sigma^2_0 \bm {\hat  \Sigma}^{-1}_d +{\mathbf I_n}\right)^{-1} \otimes  \mathbf {\hat a}^T_d \\
 \end{array} } \right) \mbox{vec}({\mathbf Y})
 = \left( {\begin{array}{*{20}{c}}
\mbox{vec}\left(\mathbf {\hat a}^T_1 \mathbf Y \left(\hat \sigma^2_0 \bm {\hat  \Sigma}^{-1}_1 +{\mathbf I_n} \right)^{-1}\right)  \\
\vdots \\
\mbox{vec}\left(\mathbf {\hat a}^T_d \mathbf Y \left(\hat \sigma^2_0 \bm {\hat  \Sigma}^{-1}_d +{\mathbf I_n}\right)^{-1}\right)   \\
 \end{array} } \right) \nonumber\\
 &=\mbox{vec} \left( {\begin{array}{*{20}{c}}
\mathbf {\hat a}^T_1 \mathbf Y \left(\hat \sigma^2_0 \bm {\hat  \Sigma}^{-1}_1 +{\mathbf I_n} \right)^{-1}  \\
\vdots \\
\mathbf {\hat a}^T_d \mathbf Y \left(\hat \sigma^2_0 \bm {\hat  \Sigma}^{-1}_d +{\mathbf I_n}\right)^{-1}   \\
 \end{array} } \right)^T :=\mbox{vec}(\mathbf {\hat Z}^T). 
 \label{equ:Z_hat_posterior}
 \end{align}

Now we are ready to derive the predictive mean and predictive variance. First 
\begin{align*}
\E[\mathbf Y(\mathbf x^*) \mid \mathbf Y,  \bm {\hat \theta}]&=\E [\E[ \mathbf Y(\mathbf x^*)  \mid \mathbf Y, \mathbf Z(\mathbf x^*), \bm {\hat \theta} ]] = \E[\mathbf {\hat A} \mathbf Z(\mathbf x^*) \mid \mathbf Y, \bm {\hat \theta}]\\
&= \mathbf {\hat A} \E[ \E[\mathbf Z(\mathbf x^*) \mid \mathbf Y, \mathbf Z, \bm {\hat \theta}]]=\mathbf {\hat A} \mathbf {\hat Z}(\mathbf x^*)
\end{align*} 
with the $l$th term of $\mathbf {\hat Z}(\mathbf x^*)$ 
\begin{align*}
 {\hat Z}_l(\mathbf x^*)&= \bm {\hat \Sigma}_l(\mathbf x^*) \bm \Sigma^{-1}_l \E[ \mathbf Z^T_l \mid \mathbf Y, \bm {\hat \theta}] \\
 &=\bm {\hat \Sigma}^T_l(\mathbf x^*) \bm {\hat \Sigma}_l^{-1} ( \bm {\hat \Sigma}_l^{-1}+\hat \sigma^2_0 \mathbf I_n)^{-1} \mathbf Y^T\mathbf {\hat a}_l  \\
  &=\bm {\hat \Sigma}^T_l(\mathbf x^*)  ( \hat \sigma^2_0 \mathbf I_n+\bm {\hat \Sigma}_l)^{-1} \mathbf Y^T \mathbf {\hat a}_l.
\end{align*}
where the first equality is from the property of multivariate normal distribution and the second equality is from (\ref{equ:Z_hat_posterior}). 

Secondly, we have
\begin{align*}
&\V[\mathbf Y^* \mid \mathbf Y,   \bm {\hat \theta} ] \\
=&  \E[\V[\mathbf Y^* \mid \mathbf Y,   \bm {\hat \theta}, \mathbf Z(\mathbf x^*)]]+\V[\E[\mathbf Y^* \mid \mathbf Y,   \bm {\hat \theta}, \mathbf Z(\mathbf x^*)]] \\
=& \hat \sigma^2_0 \mathbf I_k +   \V[ \mathbf {\hat A} \mathbf Z(\mathbf x^*) \mid \mathbf Y,   \bm {\hat \theta}] \\
=&\hat \sigma^2_0 \mathbf I_k + \mathbf {\hat A} [   \E[\V[\mathbf Z(\mathbf x^*) \mid \mathbf Y,   \bm {\hat \theta}, \mathbf Z]]+\V[\E[\mathbf Z(\mathbf x^*) \mid \mathbf Y,   \bm {\hat \theta}, \mathbf Z]] ]\mathbf {\hat A}^T=\hat \sigma^2_0 \mathbf I_k + \hat \sigma^2_0 \mathbf {\hat A} \mathbf {\hat D}(\mathbf x^*)  \mathbf {\hat A}^T   \\
\end{align*} 
with $\mathbf {\hat D}(\mathbf x^*)= \frac{1}{\hat \sigma^2_0}(\E[\V[\mathbf Z(\mathbf x^*) \mid \mathbf Y,   \bm {\hat \theta}, \mathbf Z]]+\V[\E[\mathbf Z(\mathbf x^*) \mid \mathbf Y,   \bm {\hat \theta}, \mathbf Z]])$.

Note that $\E[\V[\mathbf Z(\mathbf x^*) \mid \mathbf Y,   \bm {\hat \theta}, \mathbf Z]]$ is $k\times k$ diagonal matrix where the $l$th diagonal term is $\sigma^2_l\hat K_l(\mathbf x^*, \mathbf x^*) - \bm {\hat \Sigma}^T_l(\mathbf x^*)\bm{\hat \Sigma}^{-1}_l  \bm {\hat \Sigma}_l(\mathbf x^*)$, and $\V[\E[\mathbf Z(\mathbf x^*) \mid \mathbf Y,   \bm {\hat \theta}, \mathbf Z]]$ is another  $k\times k$ diagonal matrix where the $i$th diagonal term is $\sigma^2_l \hat K_l(\mathbf x^*, \mathbf x^*) - \bm {\hat \Sigma}^T_l(\mathbf x^*)( \hat \sigma^2_0\mathbf I_n+\bm{\hat \Sigma}_l)^{-1}  \bm {\hat \Sigma}_l(\mathbf x^*)    $. Thus, by the Woodbury matrix identity, $\mathbf {\hat D}(\mathbf x^*)$ is a diagonal matrix where the $i$th term is $\hat \sigma^2_l K_l(\mathbf x^*, \mathbf x^*) - \bm {\hat \Sigma}^T_l(\mathbf x^*)( \hat \sigma^2_0\mathbf I_n+\bm{\hat \Sigma}_l)^{-1}  \bm {\hat \Sigma}_l(\mathbf x^*)$ for $l=1,...,d$.

%

\end{proof}

\begin{proof}[Proof of Lemma \ref{lemma:Y_mean_structure}]
  Denote $\tilde {\mathbf M}= \mathbf M/\sigma^2_0$. Using the prior $\pi(\mathbf B)\propto 1$, we first marginalizing out $\mathbf B$ and the marginal density becomes 
\begin{align*}
p( \mathbf Y \mid \mathbf Z, \mathbf A,  \sigma^2_0,  \bm \Sigma_1,...,\bm \Sigma_d)\propto (\sigma^2_0)^{-k(n-q)/2} \etr \left(- \frac{(\mathbf Y- \mathbf A \mathbf Z)\mathbf {\tilde M} ( \mathbf Y- \mathbf A \mathbf Z)^T }{2} \right).
\end{align*}
Denote $\mathbf Y_{vt}=vec(\mathbf Y^T)$. By  Fact \ref{item:vectorization}, we have 
\begin{align}
&p( \mathbf Y, \mathbf Z \mid \mathbf A,  \sigma^2_0,  \bm \Sigma_{1:d} ) \nonumber \\
\propto & (\sigma^2_0)^{-\frac{k(n-q)}{2}} \prod^d\limits_{l=1} | \bm \Sigma_l|^{-\frac{1}{2}} \etr \left(- \frac{(\mathbf Y- \mathbf A \mathbf Z)\mathbf {\tilde M} ( \mathbf Y- \mathbf A \mathbf Z)^T+\sum^d_{l=1}\mathbf Z^T_l \bm \Sigma^{-1}_l  \mathbf Z_l }{2} \right). \nonumber \\
\propto& (\sigma^2_0)^{-\frac{k(n-q)}{2}} \prod^d\limits_{l=1} | \bm \Sigma_l|^{-\frac{1}{2}}  \etr \left(-\frac{\mathbf Y \mathbf {\tilde M}\mathbf Y^T  }{2} \right) \nonumber \\ 
   & \quad \quad   \times  \exp\left\{-\frac{ \mathbf Z^T_{vt} (\mathbf I_d \otimes \mathbf {\tilde M})  \mathbf Z_{vt} -2\mathbf Z^T_{vt} (\mathbf A^T \otimes \mathbf {\tilde M} ) \mathbf Y_{vt} + \mathbf Z^T_{vt} \bm \Sigma^{-1}_v \mathbf Z_{vt}  }{2} \right\}  
   \label{equ:dist_Y_Z}
\end{align}
where $\mathbf Z_{vt}= \mbox{vec}(\mathbf Z^T)$ and $\bm \Sigma_v $ is an $nd \times nd$ block diagonal matrix, where the $l$th diagonal block is $\bm \Sigma_l$, $l=1,...,d$. Marginalizing out $\mathbf Z$, one has 

\begin{align*}
&p( \mathbf Y \mid \mathbf A,  \sigma^2_0,  \bm \Sigma_{1:d} )\\
   \propto &(\sigma^2_0)^{-\frac{k(n-q)}{2}} \prod^d\limits_{l=1} | \tilde {\mathbf M}\bm \Sigma_l+\mathbf I_n |^{-\frac{1}{2}} \etr \left(-\frac{\mathbf Y \mathbf {\tilde M}\mathbf Y^T  }{2} \right)  \\ 
& \quad \quad  \times  \exp\left\{ -\frac{1}{2} \mathbf Y^T_{vt} (\mathbf A^T \otimes \tilde {\mathbf M}  )^T ( \mathbf I_d \otimes \mathbf {\tilde M}+ \bm \Sigma^{-1}_v   )^{-1}(\mathbf A^T \otimes \tilde {\mathbf M})  \mathbf Y_{vt} \right\} \\
\propto&(\sigma^2_0)^{-\frac{k(n-q)}{2}} \prod^d\limits_{l=1} | \tilde {\mathbf M}\bm \Sigma_l+\mathbf I_n |^{-\frac{1}{2}} \etr \left(-\frac{\mathbf Y \mathbf {\tilde M}\mathbf Y^T  }{2} \right)  \\
&\quad \quad  \times  \exp\left\{ -\frac{1 }{2} \mathbf Y^T_{vt} \left( \sum^d_{l=1} (\mathbf a_l \otimes \tilde {\mathbf M}) (\tilde{\mathbf M}+ \bm \Sigma^{-1}_l )^{-1} (\mathbf a^T_l \otimes \tilde {\mathbf M})\right)  \mathbf Y_{vt} \right\} \\
\propto&(\sigma^2_0)^{-\frac{k(n-q)}{2}} \prod^d\limits_{l=1} | \tilde {\mathbf M}\bm \Sigma_l+\mathbf I_n |^{-\frac{1}{2}} \etr \left(-\frac{\mathbf Y \mathbf {\tilde M}\mathbf Y^T  }{2} \right) \\
&\quad \quad   \times  \exp\left\{ -\frac{1 }{2} \mathbf Y^T_{vt} \left( \sum^d_{l=1} (\mathbf a_l \mathbf a^T_l)  \otimes\tilde {\mathbf M}  (\tilde{\mathbf M}+ \bm \Sigma^{-1}_l )^{-1} \tilde {\mathbf M}\right) \mathbf Y_{vt} \right\} \\
\propto &(\sigma^2_0)^{- (\frac{k(n-q)}{2}) } \prod^d\limits_{l=1} |  {\mathbf M}\tau_l \mathbf K_l+\mathbf I_n |^{-\frac{1}{2}} \exp\left\{ -\frac{ \tr(\mathbf Y  {\mathbf M} \mathbf Y^T)-\sum^d_{l=1} \mathbf a^T_l \mathbf Y  {\mathbf M}  ({\mathbf M}+ \tau^{-1}_l \bm K_l^{-1} )^{-1}  {\mathbf M} \mathbf Y^T \mathbf a_l}{2\sigma^2_0}\right\}
\end{align*}
Note  for any $l=1,...,d$
\begin{align*}
|  \tau_l{\mathbf M} \mathbf K_l+\mathbf I_n |=& |  \tau_l \mathbf K_l +\mathbf I_n | |\mathbf I_n -(\tau_l \mathbf K_l +\mathbf I_n)^{-1} \tau_l \mathbf K_l \mathbf H(\mathbf H^T \mathbf H)^{-1}\mathbf H^T |\\
=& |  \tau_l \mathbf K_l +\mathbf I_n | |\mathbf H^T \mathbf H|^{-1} | \mathbf H^T \mathbf H-\mathbf H^{T} ( (\tau_l \mathbf K_l)^{-1} +\mathbf I_n)^{-1} \mathbf H|\\
=& |  \tau_l \mathbf K_l +\mathbf I_n |  |\mathbf H^T \mathbf H|^{-1} |\mathbf H^T (\tau_l \mathbf K_l +\mathbf I_n)^{-1} \mathbf H |,
\end{align*}
where the first equation is by the definition of $\mathbf M$; the second equation is based on Fact \ref{item:determinant};  the third equation is by the Woodbury matrix identity. 
Further maximizing over $\sigma^2_0$ and we have the result. 

\end{proof}

The following  lemma is needed to prove Theorem \ref{thm:prediction_with_mean}. 
\begin{lemma}
Let $\tilde {\mathbf M}= \frac{1}{\sigma^2_0} (\mathbf I_n- \mathbf H(\mathbf H^T \mathbf H)^{-1} \mathbf H^T )$, where $\mathbf H$ is a $n\times q$ matrix with $n>q$, and $\mathbf H^T \mathbf H$ is a $q \times q$ matrix with rank $q$. Further let $\tilde {\bm \Sigma}= \bm \Sigma +\sigma^2_0 \mathbf I_n$, where both $\bm \Sigma$ and $\tilde {\bm \Sigma}$ have full rank. One has 
\begin{equation}
(\mathbf H^T \mathbf H)^{-1} \mathbf H^T ( \mathbf I_n - \bm \Sigma(\mathbf {\tilde M} \bm \Sigma+ \mathbf I_n )^{-1}  \mathbf {\tilde M} )=  (\mathbf H^T \tilde {\bm \Sigma}^{-1} \mathbf H)^{-1} \mathbf H^T \tilde {\bm \Sigma}^{-1}
\label{equ:identity_useful1}
\end{equation} 
\label{lemma:identity_useful1}
\end{lemma}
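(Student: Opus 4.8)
The plan is to reduce \eqref{equ:identity_useful1} to a handful of elementary manipulations resting on a single factorization. Write $\mathbf P := \mathbf H(\mathbf H^T\mathbf H)^{-1}\mathbf H^T$, so $\mathbf M = \mathbf I_n - \mathbf P$ satisfies $\mathbf M = \mathbf M^2 = \mathbf M^T$ and $\mathbf M\mathbf H = \mathbf 0$. It is convenient (though not essential) first to rescale $\bm\Sigma = \sigma^2_0\bm\Sigma_\star$: this pulls a factor $\sigma^2_0$ out of $\tilde{\mathbf M}=\mathbf M/\sigma^2_0$ and of $\tilde{\bm\Sigma}=\sigma^2_0(\bm\Sigma_\star+\mathbf I_n)$, and a direct check shows both sides of \eqref{equ:identity_useful1} then reduce to the same expression with $\sigma^2_0=1$ and $\bm\Sigma_\star$ in place of $\bm\Sigma$. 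Hence I may assume $\sigma^2_0=1$, i.e.\ $\tilde{\mathbf M}=\mathbf M$ and $\tilde{\bm\Sigma}=\bm\Sigma+\mathbf I_n$, so that $\bm\Sigma\tilde{\bm\Sigma}^{-1}=\mathbf I_n-\tilde{\bm\Sigma}^{-1}$. Set $\mathbf W := \mathbf M\bm\Sigma + \mathbf I_n$ (invertible, since the statement refers to $(\tilde{\mathbf M}\bm\Sigma+\mathbf I_n)^{-1}$) and $\mathbf Q := \tilde{\bm\Sigma}^{-1}\mathbf H(\mathbf H^T\tilde{\bm\Sigma}^{-1}\mathbf H)^{-1}\mathbf H^T$.

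The crux is the identity $\mathbf W^{-1}\mathbf M = (\mathbf I_n - \mathbf Q)\tilde{\bm\Sigma}^{-1}$, which I would establish \emph{without} invoking the Woodbury formula, simply by left-multiplying by $\mathbf W$ and checking $\mathbf W(\mathbf I_n - \mathbf Q)\tilde{\bm\Sigma}^{-1} = \mathbf M$. Two short computations suffice. First, $\mathbf W\tilde{\bm\Sigma}^{-1} = \mathbf M\bm\Sigma\tilde{\bm\Sigma}^{-1} + \tilde{\bm\Sigma}^{-1} = \mathbf M(\mathbf I_n - \tilde{\bm\Sigma}^{-1}) + \tilde{\bm\Sigma}^{-1} = \mathbf M + \mathbf P\tilde{\bm\Sigma}^{-1}$, using $\mathbf I_n - \mathbf M = \mathbf P$. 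Second, $\mathbf W\tilde{\bm\Sigma}^{-1}\mathbf H = \mathbf M(\mathbf I_n - \tilde{\bm\Sigma}^{-1})\mathbf H + \tilde{\bm\Sigma}^{-1}\mathbf H = (\mathbf I_n - \mathbf M)\tilde{\bm\Sigma}^{-1}\mathbf H = \mathbf P\tilde{\bm\Sigma}^{-1}\mathbf H$, using $\mathbf M\mathbf H = \mathbf 0$; consequently $\mathbf W\mathbf Q = \mathbf P\tilde{\bm\Sigma}^{-1}\mathbf H(\mathbf H^T\tilde{\bm\Sigma}^{-1}\mathbf H)^{-1}\mathbf H^T = \mathbf H(\mathbf H^T\mathbf H)^{-1}\mathbf H^T = \mathbf P$. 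Subtracting these, $\mathbf W(\mathbf I_n - \mathbf Q)\tilde{\bm\Sigma}^{-1} = \mathbf M + \mathbf P\tilde{\bm\Sigma}^{-1} - \mathbf P\tilde{\bm\Sigma}^{-1} = \mathbf M$, as required.

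Given this, I would substitute $\bm\Sigma = \tilde{\bm\Sigma} - \mathbf I_n$ into $\mathbf I_n - \bm\Sigma\mathbf W^{-1}\mathbf M = \mathbf I_n - \bm\Sigma(\mathbf I_n - \mathbf Q)\tilde{\bm\Sigma}^{-1}$ and expand, using $\tilde{\bm\Sigma}\mathbf Q = \mathbf H(\mathbf H^T\tilde{\bm\Sigma}^{-1}\mathbf H)^{-1}\mathbf H^T$, to arrive at $\mathbf I_n - \bm\Sigma\mathbf W^{-1}\mathbf M = \mathbf H(\mathbf H^T\tilde{\bm\Sigma}^{-1}\mathbf H)^{-1}\mathbf H^T\tilde{\bm\Sigma}^{-1} + (\mathbf I_n - \mathbf Q)\tilde{\bm\Sigma}^{-1}$. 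Left-multiplying by $(\mathbf H^T\mathbf H)^{-1}\mathbf H^T$, the first term collapses to $(\mathbf H^T\tilde{\bm\Sigma}^{-1}\mathbf H)^{-1}\mathbf H^T\tilde{\bm\Sigma}^{-1}$, which is exactly the right-hand side of \eqref{equ:identity_useful1}, while the second term vanishes because $(\mathbf H^T\mathbf H)^{-1}\mathbf H^T\mathbf Q = (\mathbf H^T\mathbf H)^{-1}\mathbf H^T$ (from $\mathbf H^T\tilde{\bm\Sigma}^{-1}\mathbf H(\mathbf H^T\tilde{\bm\Sigma}^{-1}\mathbf H)^{-1} = \mathbf I_q$), so that $(\mathbf H^T\mathbf H)^{-1}\mathbf H^T(\mathbf I_n - \mathbf Q) = \mathbf 0$. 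This yields \eqref{equ:identity_useful1}.

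There is no conceptual obstacle here — the statement is a pure linear-algebra identity — so the main thing to watch is the bookkeeping: the products $\mathbf M\bm\Sigma$ and $\bm\Sigma\tilde{\bm\Sigma}^{-1}$ are not symmetric and must be kept in order, and one should avoid expanding $\mathbf Q$ prematurely, instead leaning on the clean facts $\mathbf W\mathbf Q = \mathbf P$, $\mathbf Q^2 = \mathbf Q$, and $\tilde{\bm\Sigma}\mathbf Q = \mathbf H(\mathbf H^T\tilde{\bm\Sigma}^{-1}\mathbf H)^{-1}\mathbf H^T$; carrying the $\sigma^2_0$ factors along explicitly (rather than normalizing them away at the start) is the other place sign or scale slips tend to creep in. A quick consistency check: applying both sides of \eqref{equ:identity_useful1} to $\mathbf H$, the left side gives $(\mathbf H^T\mathbf H)^{-1}\mathbf H^T\mathbf H = \mathbf I_q$ since $\mathbf M\mathbf H = \mathbf 0$, and the right side likewise gives $\mathbf I_q$, so at least the ``$\mathbf H$-component'' of the identity holds automatically.
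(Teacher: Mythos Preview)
Your proof is correct and takes a genuinely different route from the paper's. The paper starts from the right-hand side and grinds through a seven-line chain of equalities, applying the Woodbury identity repeatedly to expand $(\mathbf H^T\tilde{\bm\Sigma}^{-1}\mathbf H)^{-1}$ and $\tilde{\bm\Sigma}^{-1}$ until the left-hand side emerges. You instead normalize $\sigma^2_0=1$, introduce the two oblique projectors $\mathbf P$ and $\mathbf Q$, and isolate the single structural identity $\mathbf W^{-1}\mathbf M = (\mathbf I_n-\mathbf Q)\tilde{\bm\Sigma}^{-1}$, which you verify directly by left-multiplying by $\mathbf W$ rather than by any inversion formula. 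The paper's approach is mechanical and easy to follow line by line but opaque as to why the identity holds; your approach exposes the underlying projector geometry (the facts $\mathbf W\mathbf Q=\mathbf P$, $\mathbf H^T(\mathbf I_n-\mathbf Q)=\mathbf 0$, $\mathbf M\mathbf H=\mathbf 0$ do all the work) and avoids Woodbury entirely, at the cost of requiring the reader to track which projector is which. Both are of comparable length once written out; yours is arguably more reusable if similar identities are needed elsewhere.
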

\begin{proof}
Denote $\bm \Sigma_0= \frac{1}{\sigma^2_0}\bm \Sigma$.  We start from the right hand side:
\begin{align*}
 &(\mathbf H^T \tilde {\bm \Sigma}^{-1} \mathbf H)^{-1} \mathbf H^T \tilde {\bm \Sigma}^{-1} \\
=&  \left( {\mathbf H^T \mathbf H}-  \mathbf H^T (\bm  \Sigma^{-1}_0 +\mathbf I_n)^{-1}\mathbf H  \right)^{-1} \mathbf H^T ( \bm \Sigma_0 + \mathbf I_n)^{-1} \\
=& \left\{ (\mathbf H^T \mathbf H)^{-1}- (\mathbf H^T \mathbf H)^{-1} \mathbf H^T ( \mathbf H(\mathbf H^T \mathbf H)^{-1} \mathbf H^T -\bm  \Sigma^{-1}_0 -\mathbf I_n )^{-1} \mathbf H (\mathbf H^T \mathbf H)^{-1} \right\}\mathbf H^T ( \bm \Sigma_0 + \mathbf I_n)^{-1} \\
=& (\mathbf H^T \mathbf H)^{-1}\mathbf H^T ( \bm \Sigma_0 + \mathbf I_n)^{-1} +(\mathbf H^T \mathbf H)^{-1} \mathbf H^T (\mathbf M+\bm \Sigma^{-1}_0)^{-1}\mathbf H (\mathbf H^T \mathbf H)^{-1} \mathbf H^T  ( \bm \Sigma_0 + \mathbf I_n)^{-1} \\
=& (\mathbf H^T \mathbf H)^{-1}\mathbf H^T \left\{ \mathbf I_n -(\bm \Sigma^{-1}_0 + \mathbf I_n)^{-1}+ (\mathbf M+\bm \Sigma^{-1}_0)^{-1}\mathbf H (\mathbf H^T \mathbf H)^{-1} \mathbf H^T  ( \bm \Sigma_0 + \mathbf I_n)^{-1} \right\}\\
=& (\mathbf H^T \mathbf H)^{-1}\mathbf H^T \left\{ \mathbf I_n- (\mathbf M+\bm \Sigma^{-1}_0)^{-1}(  (\mathbf M+\bm \Sigma^{-1}_0)(\bm \Sigma^{-1}_0 + \mathbf I_n)^{-1} -\mathbf H (\mathbf H^T \mathbf H)^{-1} \mathbf H^T  ( \bm \Sigma_0 + \mathbf I_n)^{-1})  \right\} \\
=& (\mathbf H^T \mathbf H)^{-1}\mathbf H^T \left\{ \mathbf I_n- (\mathbf M+\bm \Sigma^{-1}_0)^{-1}( \mathbf I_n - \mathbf H (\mathbf H^T \mathbf H)^{-1}\mathbf H^T )  \right\} \\
=&(\mathbf H^T \mathbf H)^{-1} \mathbf H^T ( \mathbf I_n - \bm \Sigma(\mathbf {\tilde M} \bm \Sigma+ \mathbf I_n )^{-1}  \mathbf {\tilde M} ),
\end{align*}
where we repeatedly use the Woodbury matrix identity. 
\end{proof}

\begin{proof}[Proof of Theorem \ref{thm:prediction_with_mean}]
Denote $\bm {\hat \Theta}=(\mathbf {\hat A},  \bm {\hat \gamma}, \bm \hat {\sigma}^2, { \hat \sigma}^2_0)$. From Equation (\ref{equ:dist_Y_Z}) in the proof of Lemma \ref{lemma:Y_mean_structure}, one has 
\begin{equation}
\mathbf Z_{vt} \mid \mathbf Y, \bm {\hat \Theta}  \sim \MN(\mathbf {\hat Z}_{vt}, \, \bm {\hat  \Sigma}_{\mathbf Z_{vt}} ), 
\label{equ:Z_vt_with_mean}
\end{equation}
where $\mathbf {\hat Z}_{vt}=\mbox{vec}(\hat {\bm \Sigma}_1 (  \mathbf M\hat{\bm \Sigma}_1 + \hat \sigma^2_0 \mathbf I_n )^{-1} \mathbf M \mathbf Y^T \mathbf {\hat a}_1,...,\hat {\bm \Sigma}_d ( \mathbf M \hat{\bm \Sigma}_d + \hat \sigma^2_0 \mathbf I_n )^{-1} \mathbf M \mathbf Y^T \mathbf {\hat a}_d )$ and $\bm {\hat  \Sigma}_{\mathbf Z_{vt}}$ is a $dn\times dn$ block diagonal matrix where the $l$th $n\times n$ diagonal block is $\hat \sigma^2_0 \hat {\bm \Sigma}_l (  \mathbf M\hat{\bm \Sigma}_l + \hat \sigma^2_0 \mathbf I_n )^{-1}$. 

It is also easy to obtain 
\begin{equation}
\mathbf B \mid \mathbf Y, \mathbf Z ,\sigma^2_0 \sim N((\mathbf H^T \mathbf H)^{-1} \mathbf H^T(\mathbf Y^T- \mathbf Z^T \mathbf A^T), \sigma^2_0 \mathbf I_k \otimes (\mathbf H^T \mathbf H)^{-1} ).
\label{equ:B_Y_Z}  
\end{equation}

Denote $z(\mathbf x^*)= (z_1(\mathbf x^*),...,z_d(\mathbf x^*) )^T$ the factors at input $\mathbf x^*$. First  the mean
\begin{align*}
\bm {\hat \mu}_M^*(\mathbf x^*)&=\E[\mathbf Y(\mathbf x^*) \mid \mathbf Y,  \bm {\hat \Theta} ]\\
&=\E[\E[ \mathbf Y(\mathbf x^*) \mid \mathbf Y, \mathbf B, \mathbf z(\mathbf x^*), \bm {\hat \Theta} ]]\\
&=\E[  (\mathbf h(\mathbf x^*) \mathbf B)^T+ \mathbf {\hat A} \mathbf z(\mathbf x^*)  \mid  \mathbf Y,  \bm {\hat \Theta}] \\
&=\E[\E[  (\mathbf h(\mathbf x^*) \mathbf B)^T+ \mathbf {\hat A} \mathbf z(\mathbf x^*)  \mid  \mathbf Y,  \bm {\hat \Theta} ,\mathbf Z]] \\ 
&= \E[ (\mathbf Y -\mathbf {\hat A} \mathbf { Z} ) \mathbf H(\mathbf H^T \mathbf H)^{-1}\mathbf h^T(\mathbf x^*)+ \mathbf {\hat A} \mathbf {\tilde z}(\mathbf x^*)\mid  \mathbf Y,  \bm {\hat \Theta}]\\
\end{align*}
where $\mathbf  {\tilde z}(\mathbf x^*)$ is a $d$-dimensional vector where the each term is $ \hat {\bm \Sigma}^T_l(\mathbf x^*) \hat{\bm \Sigma}^{-1}_l\mathbf Z^T_l$ for $l=1,...,d$. From (\ref{equ:Z_vt_with_mean}), noting $\mathbf Z_{vt}=\mbox{vec}(\mathbf Z^T)$, one has $\E[\mathbf Z \mid \mathbf Y, \bm {\hat \Theta}]= (\mathbf {\hat Z}^T_{1,M},...,\mathbf {\hat Z}^T_{d,M})^T$, with $\mathbf {\hat Z}_{l,M}=\mathbf a^T_l \mathbf Y \mathbf M (  \hat{\bm \Sigma}_l \mathbf M+ \hat \sigma^2_0 \mathbf I_n )^{-1} \hat {\bm \Sigma}_l  $ is a $1\times n$ vector, from which we have proved that equation (\ref{equ:hat_mu}) holds. 

\begin{align*}
\bm {\hat \Sigma}^*_M(\mathbf x^*)=&\V[\mathbf Y(\mathbf x^*) \mid \mathbf Y,  \bm {\hat \Theta} ] \\
=& \V[\E[ \mathbf Y(\mathbf x^*) \mid \mathbf Y, \mathbf B, \mathbf z(\mathbf x^*), \bm {\hat \Theta}]]+\E[\V[\mathbf Y(\mathbf x^*) \mid \mathbf Y, \mathbf B, \mathbf z(\mathbf x^*), \bm {\hat \Theta} ]]\\
=&\V[(\mathbf h(\mathbf x^*) \mathbf B)^T+ \mathbf {\hat A} \mathbf z(\mathbf x^*) \mid  \mathbf Y]+\sigma^2_0 \mathbf I_k \\
=&\V[ \E[(\mathbf h(\mathbf x^*) \mathbf B)^T+ \mathbf {\hat A} \mathbf z(\mathbf x^*) \mid  \mathbf Y, \mathbf Z ]]+ \E[ \V[(\mathbf h(\mathbf x^*) \mathbf B)^T+ \mathbf {\hat A} \mathbf z(\mathbf x^*) \mid  \mathbf Y, \mathbf Z ]]+ \sigma^2_0 \mathbf I_k \\
=& \V[(\mathbf Y -\mathbf {\hat A} \mathbf { Z} ) \mathbf H(\mathbf H^T \mathbf H)^{-1}\mathbf h^T(\mathbf x^*)+ \mathbf {\hat A} \mathbf {\tilde z}(\mathbf x^*) \mid  \mathbf Y]+\mathbf {\hat A}\V[ \mathbf z(\mathbf x^*)\mid \mathbf Y, \mathbf Z ]\mathbf {\hat A}^T  \\
& \quad  +\sigma^2_0 \mathbf I_k\otimes (1+\mathbf h^T(\mathbf x^*) (\mathbf H^T \mathbf H )^{-1} \mathbf h(\mathbf x^*)) \\
=& \mathbf {\hat A} \mathbf D_M(\mathbf x^*) \mathbf {\hat A} +  \sigma^2_0 \mathbf I_k\otimes (1+\mathbf h^T(\mathbf x^*) (\mathbf H^T \mathbf H )^{-1} \mathbf h(\mathbf x^*))
\end{align*}
where $\mathbf D_M(\mathbf x^*)$ is a diagonal matrix where the $l$th diagonal term is 
\begin{align}
D_{l,M}&=( \bm {\hat \Sigma}^T_l(\mathbf x^*) \bm {\hat \Sigma}^{-1}_l- \mathbf h(\mathbf x^*) (\mathbf H^T \mathbf H )^{-1} \mathbf H^T )  (  \mathbf { \tilde M} +\hat{\bm \Sigma}^{-1}_l )^{-1}(\bm {\hat \Sigma}^T_l(\mathbf x^*) \bm {\hat \Sigma}^{-1}_l- \mathbf h(\mathbf x^*) (\mathbf H^T \mathbf H )^{-1} \mathbf H^T )^T \nonumber \\
    &\quad \quad + ( \sigma^2_l\hat K_l(\mathbf x^*, \mathbf x^*)- \bm {\hat \Sigma}^T_l(\mathbf x^*) \bm {\hat \Sigma}^{-1}_l  \bm {\hat \Sigma}_l(\mathbf x^*)  )
    \end{align}
    
    We write $D_{l,M}+\sigma^2_0\mathbf h(\mathbf x^*)(\mathbf H^T \mathbf H)^{-1}\mathbf h^T(\mathbf x^*)$ as the following three terms. First, one has
    \begin{align}
    &\mathbf h(\mathbf x^*) (\mathbf H^T \mathbf H)^{-1} \mathbf H^T (\mathbf { \tilde M} +\hat{\bm \Sigma}^{-1}_l )^{-1}  \mathbf H ( \mathbf H^T \mathbf H)^{-1}\mathbf h^T(\mathbf x^*) +\sigma^2_0\mathbf h(\mathbf x^*)(\mathbf H^T \mathbf H)^{-1}\mathbf h^T(\mathbf x^*) \nonumber \\
    =& \sigma^2_0  \mathbf h(\mathbf x^*) \left\{(\mathbf H^T \mathbf H)^{-1}-  (\mathbf H^T \mathbf H)^{-1} \mathbf H^T \left(  \mathbf H(\mathbf H^T \mathbf H)^{-1} \mathbf H^T- \mathbf I_n - \hat \sigma^2_0 \hat{\bm \Sigma}^{-1}_l \right)^{-1}  \mathbf H ( \mathbf H^T \mathbf H)^{-1} \right\} \mathbf h^T(\mathbf x^*)  \nonumber  \\
    =& \sigma^2_0  \mathbf h(\mathbf x^*)  \left\{ \mathbf H^T \mathbf H- \mathbf H^T\left(\mathbf I_n +\hat \sigma^2_0 \hat{\bm \Sigma}^{-1}_l \right)^{-1} \mathbf H \right\}^{-1} \mathbf h^T(\mathbf x^*) \nonumber \\
    =&   \mathbf h(\mathbf x^*)  \left\{ \mathbf H^T \left( {\hat{\bm \Sigma}_l}+ {\hat \sigma^2_0} \mathbf I_n \right)^{-1}\mathbf H \right\}^{-1} \mathbf h^T(\mathbf x^*),  
    \label{equ:term1}
    \end{align}
    where the third and fourth equality is based on the Woodbury matrix identity. 
    
    Note
      \begin{align}
     &(\mathbf { \tilde M} +\hat{\bm \Sigma}^{-1}_l )^{-1}= \left(\frac{\mathbf I_n}{\hat \sigma^2_0} +\hat{\bm \Sigma}^{-1}_l -\frac{\mathbf H (\mathbf H^T \mathbf H )^{-1} \mathbf H^T}{\hat \sigma^2_0} \right)^{-1}\nonumber\\
      =& \left(\frac{\mathbf I_n}{\hat \sigma^2_0} +\hat{\bm \Sigma}^{-1}_l\right)^{-1} - \left(\frac{\mathbf I_n}{\hat \sigma^2_0} +\hat{\bm \Sigma}^{-1}_l\right)^{-1} \mathbf H \left\{ {\hat \sigma^2_0} \mathbf H^T \mathbf H - \mathbf H^T \left( \frac{\mathbf I_n}{\hat \sigma^2_0} + \hat{\bm \Sigma}^{-1}_l \right)^{-1} \mathbf H \right\}^{-1} \mathbf H^T   \left(\frac{\mathbf I_n}{\hat \sigma^2_0} +\hat{\bm \Sigma}^{-1}_l\right)^{-1} \nonumber \\
      =&\left(\frac{\mathbf I_n}{\hat \sigma^2_0} +\hat{\bm \Sigma}^{-1}_l\right)^{-1}  -\left({\mathbf I_n} + {\hat \sigma^2_0}\hat{\bm \Sigma}^{-1}_l\right)^{-1} \mathbf H \left\{\mathbf H^T \left(\hat{\bm \Sigma}_l+ {\hat \sigma^2_0}{\mathbf I_n} \right)^{-1}  \mathbf H \right\}^{-1} \mathbf H^T \left({\mathbf I_n} + {\hat \sigma^2_0}\hat{\bm \Sigma}^{-1}_l\right)^{-1}, \nonumber
        \end{align}        
      by Woodbury matrix identity,  one has 
              \begin{align}
              &( \bm {\hat \Sigma}^T_l(\mathbf x^*) \bm {\hat \Sigma}^{-1}_l  ) (\mathbf { \tilde M} +\hat{\bm \Sigma}^{-1}_l )^{-1}    ( \bm {\hat \Sigma}^T_l(\mathbf x^*) \bm {\hat \Sigma}^{-1}_l  )^T -\bm {\hat \Sigma}^T_l(\mathbf x^*) \bm {\hat \Sigma}^{-1}_l  \bm {\hat \Sigma}_l(\mathbf x^*) \nonumber \\
              =&- \bm {\hat \Sigma}^T_l(\mathbf x^*)     \bm {\tilde \Sigma}^{-1}_l   \bm {\hat \Sigma}_l(\mathbf x^*)  - \bm {\hat \Sigma}^T_l(\mathbf x^*)  \bm {\tilde \Sigma}^{-1}_l \mathbf H  ( \mathbf H^T \bm {\tilde \Sigma}^{-1}_l \mathbf H)^{-1}  \mathbf H^T \bm {\tilde \Sigma}^{-1}_l   \bm {\hat \Sigma}_l(\mathbf x^*).
                  \label{equ:term2}
              \end{align}
              
            Third, one has
            \begin{align}
            & \mathbf h(\mathbf x^*) (\mathbf H^T \mathbf H)^{-1} \mathbf H^T  (\mathbf { \tilde M} +\hat{\bm \Sigma}^{-1}_l )^{-1} \bm {\hat \Sigma}^{-1}_l  \bm {\hat \Sigma}_l(\mathbf x^*)  \nonumber \\ 
            =&\mathbf h(\mathbf x^*) (\mathbf H^T \mathbf H)^{-1} \mathbf H^T \left\{\mathbf I_n-\bm {\hat \Sigma}_l ( \mathbf { \tilde M} \bm {\hat \Sigma}_l+ \mathbf I_n )^{-1} \mathbf { \tilde M}  \right\}\bm {\hat \Sigma}_l(\mathbf x^*)   \nonumber \\ 
            =& \mathbf h(\mathbf x^*) (\mathbf H^T \tilde {\bm \Sigma}^{-1}_l \mathbf H)^{-1} \mathbf H^T \tilde {\bm \Sigma}^{-1}_l \bm {\hat \Sigma}_l(\mathbf x^*).
            \label{equ:term3}
             \end{align}
where the first equation is from the Woodbury matrix identity and the second equation is from Lemma \ref{lemma:identity_useful1}. 

From equation (\ref{equ:term1}), (\ref{equ:term2}) and (\ref{equ:term3}), we have shown that equation (\ref{equ:hat_Sigma}) holds.

        \end{proof}

 \section*{Appendix C: Simulated examples when models are misspecified}
We discuss two numerical examples where the latent factor model is misspecified. First, we let the Assumption \ref{assumption:A} be violated. In both examples, we assume that each entry of the factor loading matrix is sampled independently from a uniform distribution, hence not constrained in the Stiefel manifold. The second misspecification comes from the misuse of the kernel function in the factor processes. In reality, the smoothness of the true process may be unknown, therefore the use of any particular type of kernels may lead to an under-smoothing or over-smoothing scenario. Moreover, the factor may be an unknown deterministric function, rather than a sample from a Gaussian process. All these possible misspecifications will be discussed using the following Examples \ref{eg:misspeicifed_model_1} and  \ref{eg:misspeicifed_model_2}.

\begin{figure}[t]
\centering
  \begin{tabular}{cc}
    \includegraphics[width=.5\textwidth,height=.4\textwidth]{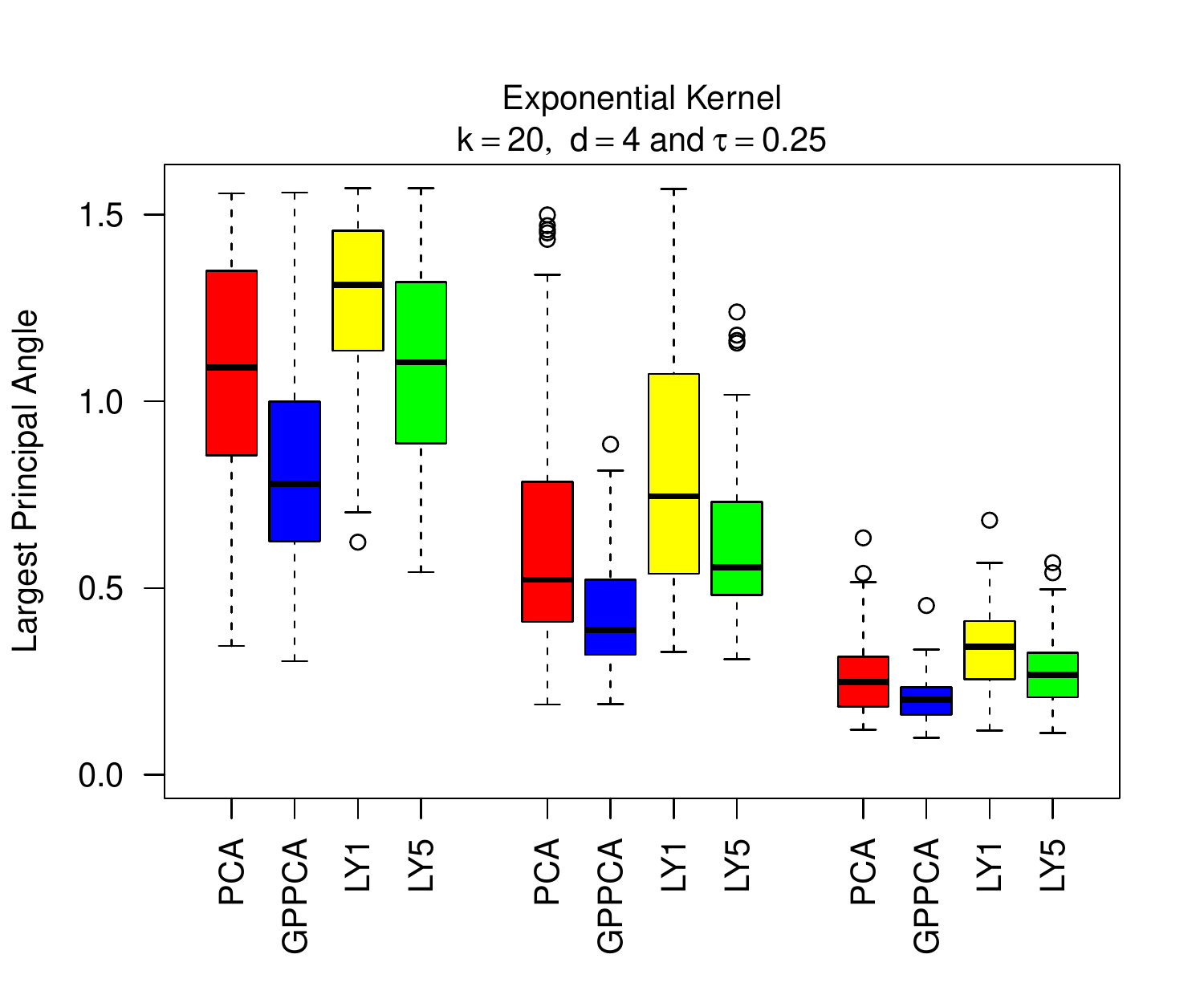}
        \includegraphics[width=.5\textwidth,height=.4\textwidth]{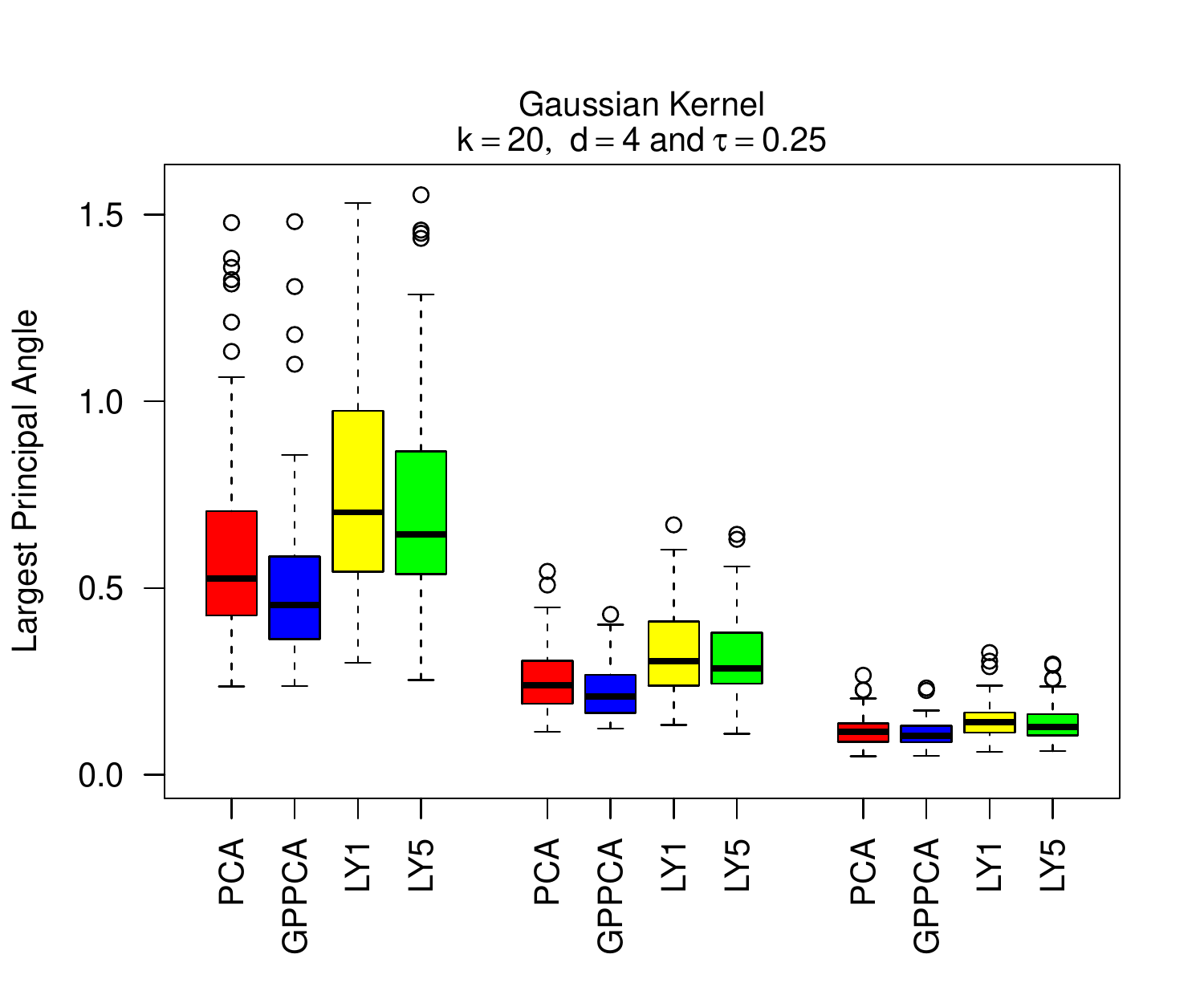}
  \end{tabular}
  \vspace{-.1in}
   \caption{The largest principal angle between the estimated subspace of four approaches and the true subspace for Example \ref{eg:misspeicifed_model_1}. The number of observations are assumed to be $n=100$, $n=200$ and $n=400$ for left 4 boxplots, middle 4 boxplots and right 4 boxplots in both panels, respectively. The kernel in simulating the data is assumed to be the exponential kernel in the left panel, whereas the kernel is assumed to be the Gaussian kernel in the right panel.}
\label{fig:simulation_misspecified_1_angles}
\end{figure}

\begin{example}[Unconstrained factor loadings and misspecified kernel functions]
 The data are sampled from model (\ref{equ:model_1}) with $\bm \Sigma_1=...=\bm \Sigma_d=\bm \Sigma$ and $x_i=i$ for $1\leq i\leq n$. Each entry of the factor loading matrix is assumed to be uniformly sampled from $[0,1]$ independently (without the orthogonal constraints in (\ref{equ:A})). The exponential kernel and the Guassian kernel are assumed in generating the data with different combinations of  $\sigma^2_0$ and $n$, while in the GPPCA, we still use the Mat{\'e}rn kernel function in (\ref{equ:matern_5_2}) for the estimation. We assume $k=20$, $d=4$, $\gamma=100$ and $\sigma^2=1$ in sampling the data. We repeat $N=100$ times for each scenario.  All the kernel parameters and the noise variance are treated as unknown and estimated from the data.
\label{eg:misspeicifed_model_1}
\end{example}

The largest principal angles between   $\mathcal M({\mathbf A})$ and $\mathcal M(\hat {\mathbf A})$ of the four approaches for Example \ref{eg:misspeicifed_model_1} are plotted in Figure  \ref{fig:simulation_misspecified_1_angles}. Even though the factor loading matrix is not constrained on the Stiefel manifold and the kernels are misspecified in GPPCA, GPPCA still has a better performance than other approaches in all scenarios. The PCA is an extreme case of the GPPCA where the range parameter of the kernel is estimated to be zero, meaning that the covariance of the factor process is an identity matrix. 

Another interesting finding is that all methods seem to perform better when the Gaussian kernel is used in simulating the data, even if the SNR of the simulation using a Gaussian kernel is smaller. This is because the variation of the factors is much larger when the Gaussian kernel is used, which makes the effect of the noise relatively small. In both cases, the GPPCA seems to be efficient in estimating the subspace of the factor loading matrix. 

Furthermore, since only the linear subspace of the factor loading matrix  is identifiable, rather than the factor loading matrix, the estimation of the factor loadings without the orthogonal constraints is also accurate by the GPPCA. Note the interpretation of the estimated variance parameter in the kernel by the GPPCA changes, because each column of $\mathbf A$ is not orthonomal in generating the data. 


\begin{table}[t]
\begin{center}
\begin{tabular}{lcccccc}
  \hline

         &  \multicolumn{3}{c}{exponential kernel and  $\tau=4$} & \multicolumn{3}{c}{Gaussian kernel and  $\tau=1/4$}\\
 &  $n=100$ & $n=200$ &  $n=400$ & $n=100$ & $n=200$ &  $n=400$  \\
  \hline
  PCA            &$7.4\times 10^{-2}$& $6.1\times 10^{-2}$ &$5.4\times 10^{-2}$& $1.1 \times 10^{0}$ &$8.9\times 10^{-1}$ &$8.4\times 10^{-1}$ \\
  GPPCA          &${\bf 3.1\times 10^{-2}}$ & $\bf 2.6\times 10^{-2}$ &$\bf 2.4\times 10^{-2}$ & $\bf 7.2\times 10^{-1}$ &$\bf 6.6\times 10^{-1}$ &$\bf 6.2\times 10^{-1}$  \\
    LY1 & $1.5\times 10^{-1}$ & $8.2 \times 10^{-1}$ &$5.7\times 10^{-2}$ &$1.3 \times 10^{0}$ &$1.0\times 10^{0}$ &$8.6\times 10^{-1}$ \\
  LY5 &  $1.3\times 10^{-1}$ & $7.3\times 10^{-1}$ & $5.6\times 10^{-2}$ & $1.3\times 10^{0}$ &$1.0 \times 10^{0}$ &$8.6\times 10^{-1}$ \\
  \hline

\end{tabular}
\end{center}
   \caption{AvgMSE for Example \ref{eg:misspeicifed_model_1}.}

   \label{tab:misspeicifed_model_1}
\end{table}

The AvgMSE of the four approaches for Example \ref{eg:misspeicifed_model_1} is shown in Table \ref{tab:misspeicifed_model_1}. The estimation of the GPPCA is more accurate than the other approaches. Because of the larger variation in the factor processes with the Gaussian kernel, the corresponding variation in the mean of the output is also larger than the one when the exponential kernel is used. Consequently, all approaches have larger estimated errors for the case with the Gaussian kernel.

We show an example when the factor is generated from a deterministic function.

\begin{example}[Unconstrained factor loadings and deterministic factors]
The data are sampled from model (\ref{equ:model_1}) with each latent factor being a deterministic function
\[Z_l(x_i)=cos(0.05\pi \theta_l x_i)\]
where $\theta_l \overset{i.i.d.}{\sim} \mbox{unif}(0,1)$ for $l=1,...,d$, with $x_i=i$ for $1\leq i\leq n$, $\sigma^2_0=0.25$, $k=20$ and $d=4$. Four cases are tested with the sample size $n=100$, $n=200$, $n=400$ and $n=800$. 
\label{eg:misspeicifed_model_2}
\end{example}

\begin{figure}[t]
\centering
  \begin{tabular}{c}
    \includegraphics[width=1\textwidth,height=.4\textwidth]{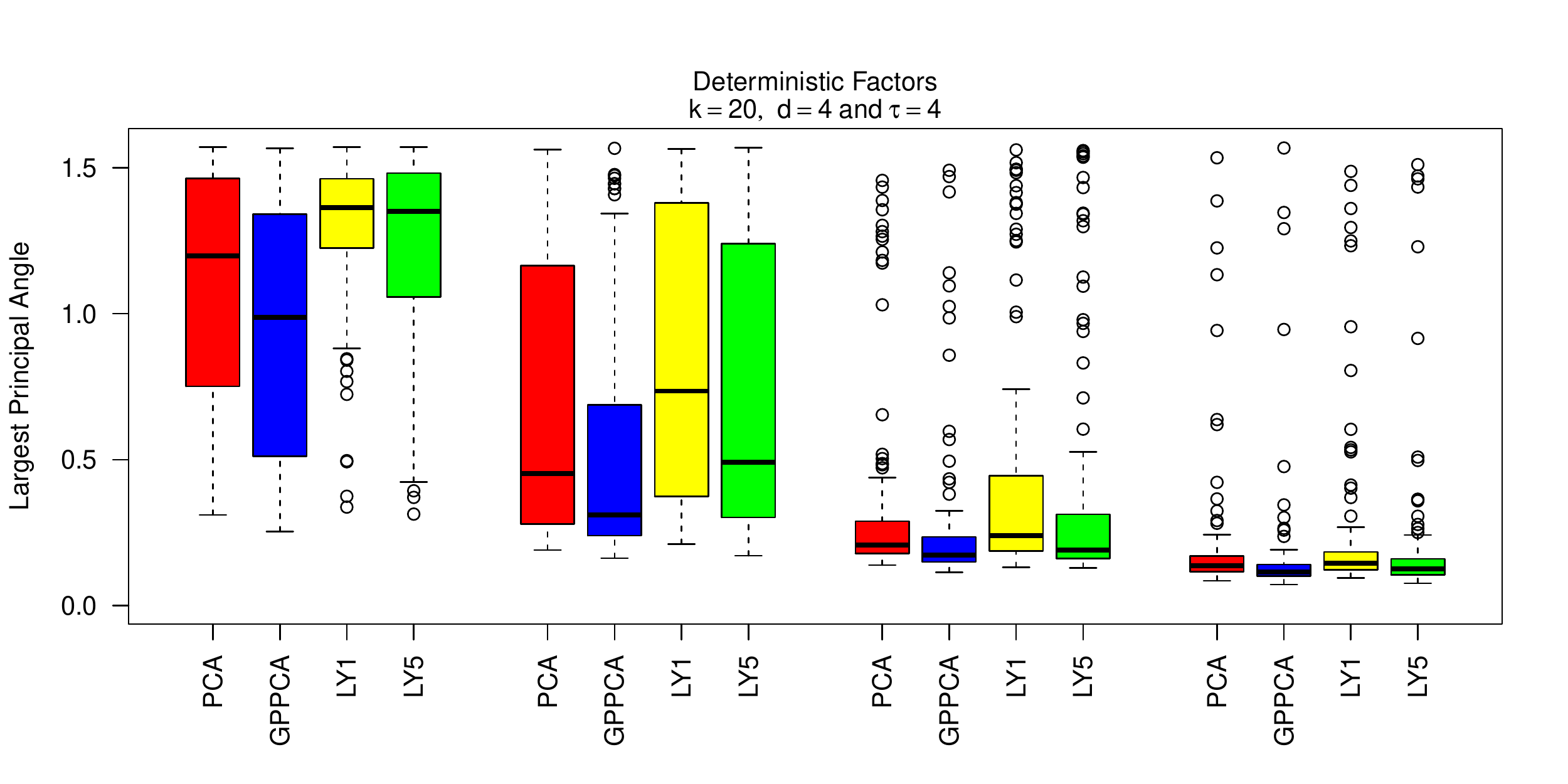} 
  \end{tabular}
  \vspace{-.1in}
   \caption{The largest principal angle between the estimated subspace of the loading matrix and the true subspace for Example \ref{eg:misspeicifed_model_2}. From the left to the right, the number of observations is assumed to be $n=100$, $n=200$, $n=400$ and  $n=800$ for each 4 boxplots, respectively.  }
\label{fig:simulation_misspecified_deterministic_angles}
\end{figure}

\begin{table}[t]
\begin{center}
\begin{tabular}{lcccccc}
  \hline

   &  $n=100$ & $n=200$ &  $n=400$ & $n=800$   \\
  \hline
  PCA            &$7.0\times 10^{-2}$& $6.0\times 10^{-2}$ &$5.4\times 10^{-2}$& $5.2\times 10^{-2}$  \\
  GPPCA          &$\bf 1.4\times 10^{-2}$ & $\bf 9.2\times 10^{-3}$ &$\bf 6.7\times 10^{-3}$ & $\bf 5.5\times 10^{-3}$ \\
    LY1 & $9.8\times 10^{-1}$ & $7.6 \times 10^{-1}$ &$6.3\times 10^{-2}$ &$5.7\times 10^{-2}$ \\
  LY5 &  $9.3\times 10^{-2}$ & $7.3\times 10^{-2}$ & $6.2\times 10^{-2}$&$5.6\times 10^{-2}$\\
  Ind GP &  $2.0\times 10^{-2}$ & $1.9\times 10^{-2}$ & $1.7\times 10^{-2}$&$1.7\times 10^{-2}$\\
  PP GP &  $2.0\times 10^{-2}$ & $1.9\times 10^{-2}$ & $1.8\times 10^{-2}$&$1.8\times 10^{-2}$\\

  \hline
\end{tabular}
\end{center}
   \caption{AvgMSE for Example \ref{eg:misspeicifed_model_2}.}

   \label{tab:misspeicifed_model_2}
\end{table}

For the GPPCA, we assume the covariance is shared for each factor and the Mat{\'e}rn kernel in (\ref{equ:matern_5_2}) is used for  Example \ref{eg:misspeicifed_model_2}. The largest principal angle between $\mathcal M(\mathbf A)$ and $\mathcal M(\mathbf {\hat A})$ of the four approaches is given in Figure \ref{fig:simulation_misspecified_deterministic_angles}.  When the number of observations increases, all four methods estimate $\mathcal M(\mathbf A)$  more accurately, even though the factors are no longer sampled from Gaussian processes. Note the  reproducing kernel Hilbert space attached to the Gaussian process with the Mat{\'e}rn kernel contains those functions in the Sobolev space that are squared integrable up to the order $2$ \citep{gu2018theoretical}, while the deterministic function to generate the factors in Example \ref{eg:misspeicifed_model_2} is infinitely integrable. The GPPCA is the most precise in estimating $\mathcal M(\mathbf A)$ among the four approaches in this scenario. 



The AvgMSE of the different approaches in estimating the mean of the output of the Example \ref{eg:misspeicifed_model_2} is given in Table \ref{tab:misspeicifed_model_2}.  We also include two more approaches, namely the independent Gaussian processes (Ind GP) and the parallel partial Gaussian processes (PP GP). The Ind GP approach treats each output variable independently and the mean of the output is estimated by the predictive mean in the Gaussian process regression \citep{rasmussen2006gaussian}. The PP GP approach also models each output variable independently by a Gaussian process, whereas the covariance fucntion is shared for $k$ independent Gaussian processes and estimated based on all data \citep{Gu2016PPGaSP}.

As shown in Table \ref{tab:misspeicifed_model_2}, the estimation by the GPPCA is the most accurate among six approaches. The estimation by the Ind GP and PP GP perform similarly and they seem to perform better than the estimation by the PCA, LY1 and LY5. One interesting finding in Table 4 is that the  AvgMSE by the GPPCA seems to decrease faster than those of the Ind GP and PP GP, when the sample size increases. This numerical result may shed some lights on the convergence rate of the GPPCA in the nonparametric regression problem. 

 \section*{Appendix D:  Model fitting for the gridded temperature data}
 
 For the GPPCA, we consider the model \eqref{equ:model_with_mean}, where the input $x$ is an integer time point ranging from 1 to 240. The mean function is assumed as $\mathbf h(x)=(1, x)$ to model the trend of the temperature anomalies over time. For the case with estimated variance, the parameters are estimated by maximizing the marginal likelihood in (\ref{equ:marginal_with_mean}) using the matrix of temperature anomalies $\mathbf Y$ with $k=1639$ rows and $n=220$ columns. The marginal likelihood with known variance is derived by plugging the variance value, instead of integrating it out with a prior. We use Equation (\ref{equ:cond_pred_with_mean}) to compute the predictive distribution by the GPPCA, where $ \mathbf Y_1(x^*)$ is a $439\times 20$ matrix of the temperature anomalies at the $439$ spatial locations (which has the entire observations over the whole 240 months). The  $1200\times 20$ matrix  $\mathbf Y_2(x^*)$ is the held-out temperature anomalies for testing.
 
 
 For the PPCA, we first subtract the mean of each location of the $1639\times 220$ output matrix of normalized temperature anomalies. We then estimate the factor loading matrix by Equation (7) in \cite{tipping1999probabilistic}. The predictive distribution of the test output by the PPCA was obtained in a similar fashion as the Equation \eqref{equ:cond_pred_with_mean} in the GPPCA and the empirical mean for the test output was added back for comparison. The PPCA does not incorporate the temporal correlation and linear trend in the model.
 
The temporal model is constructed by a GaSP separately for each test location.  The Mat{\'e}rn kernel in (\ref{equ:matern_5_2}) and the linear trend $\mathbf h(x)=(1, x)$ are assumed for the temporal model. The spatial model uses a GaSP with a constant mean separately for each test month.  The product kernel in  (\ref{equ:K_l}) is assumed for the two-dimensional input (latitude and longitude) and the Mat{\'e}rn kernel in (\ref{equ:matern_5_2}) is used for each subkernel. The range and the nugget parameters in the temporal model and spatial model are estimated using the ${\tt RobustGaSP}$ ${\sf R}$ package  (\cite{gu2018robustgasp}), and the predictions are also obtained by this package.  
 
 
 The temporal regression by the random forest are trained separately for each location. For each test month, the $439$ observations of that month are used as the responses and  the  $439 \times 220$ output on the other months for the same locations are used as the covariates. The regression parameters of this temporal regression capture the temporal dependence of the output between the test month and the training months. The $1200\times 200$ matrix of the temperature anomalies at the test locations and observed time points are used as the test input. The spatial regression by the random forest uses $220$ observations of a test location as responses and the $220\times 439$ matrix of the temperature anomalies of the observed locations are used as the input. The $20\times 439$ matrix of the temperature anomalies at the observed locations and test time points are used as the test input. The ${\tt randomForest}$  ${\sf R}$ package \citep{liaw2002classification} is used for training models and compute predictions.

 The spatio-temporal model assumes a 3 dimensional product kernel in (\ref{equ:K_l})  for both time points and locations, and the Mat{\'e}rn kernel in (\ref{equ:matern_5_2}) is used as the subkernel for each input variables. Note that if we use the whole training output, the computational order of inverting the covariance matrix is $O(N^3)$, where $N=369360$ is the total number of inputs, which is  computationally challenging. When the output can be written as an $n_1\times n_2$ matrix, the likelihood corresponds to a matrix normal distribution, where two kernel functions model the correlation between rows and between columns of the output. The computational order of the matrix normal distribution is the maximum of $O(n_1^3)$ and $O(n_2^3)$. We choose the $439\times 240$ output of temperature anomalies at the locations with the whole observations to estimate the parameters. The constant mean is assumed for each location. The MLE is used for estimating the range parameters in kernel, nugget, mean and variance parameters. After plugging in the parameters, the predictive distribution of the test data is used for predictions. Though only $439\times 240$ observations are used for estimating the parameters due to the computational conveinience, all $369360$ training output is used for computing the predictive distribution of the test output.

 
 



\vskip 0.2in
\bibliography{References_2018}

\begin{thebibliography}{44}
\providecommand{\natexlab}[1]{#1}
\providecommand{\url}[1]{\texttt{#1}}
\expandafter\ifx\csname urlstyle\endcsname\relax
  \providecommand{\doi}[1]{doi: #1}\else
  \providecommand{\doi}{doi: \begingroup \urlstyle{rm}\Url}\fi

\bibitem[Absil et~al.(2006)Absil, Edelman, and Koev]{absil2006largest}
P.-A. Absil, Alan Edelman, and Plamen Koev.
\newblock On the largest principal angle between random subspaces.
\newblock \emph{Linear Algebra and its applications}, 414\penalty0
  (1):\penalty0 288--294, 2006.

\bibitem[Alvarez et~al.(2012)Alvarez, Rosasco, and
  Lawrence]{alvarez2011kernels}
Mauricio~A Alvarez, Lorenzo Rosasco, and Neil~D. Lawrence.
\newblock Kernels for vector-valued functions: A review.
\newblock \emph{Foundations and Trends{\textregistered} in Machine Learning},
  4\penalty0 (3):\penalty0 195--266, 2012.

\bibitem[Bai(2003)]{bai2003inferential}
Jushan Bai.
\newblock Inferential theory for factor models of large dimensions.
\newblock \emph{Econometrica}, 71\penalty0 (1):\penalty0 135--171, 2003.

\bibitem[Bai and Ng(2002)]{bai2002determining}
Jushan Bai and Serena Ng.
\newblock Determining the number of factors in approximate factor models.
\newblock \emph{Econometrica}, 70\penalty0 (1):\penalty0 191--221, 2002.

\bibitem[Bayarri et~al.(2009)Bayarri, Berger, Calder, Dalbey, Lunagomez, Patra,
  Pitman, Spiller, and Wolpert]{Bayarri09}
M.~J. Bayarri, James~O. Berger, Eliza~S. Calder, Keith Dalbey, Simon Lunagomez,
  Abani~K. Patra, E.~Bruce Pitman, Elaine~T. Spiller, and Robert~L. Wolpert.
\newblock Using statistical and computer models to quantify volcanic hazards.
\newblock \emph{Technometrics}, 51:\penalty0 402--413, 2009.

\bibitem[Berger et~al.(2001)Berger, De~Oliveira, and
  Sans{\'o}]{berger2001objective}
James~O Berger, Victor De~Oliveira, and Bruno Sans{\'o}.
\newblock Objective bayesian analysis of spatially correlated data.
\newblock \emph{Journal of the American Statistical Association}, 96\penalty0
  (456):\penalty0 1361--1374, 2001.

\bibitem[Berger et~al.(2009)Berger, Bernardo, and Sun]{berger2009formal}
James~O Berger, Jos{\'e}~M Bernardo, and Dongchu Sun.
\newblock The formal definition of reference priors.
\newblock \emph{The Annals of Statistics}, 37\penalty0 (2):\penalty0 905--938,
  2009.

\bibitem[Bj{\"o}rck and Golub(1973)]{bjorck1973numerical}
A.~Bj{\"o}rck and Gene~H Golub.
\newblock Numerical methods for computing angles between linear subspaces.
\newblock \emph{Mathematics of computation}, 27\penalty0 (123):\penalty0
  579--594, 1973.

\bibitem[Breiman(2001)]{breiman2001random}
Leo Breiman.
\newblock {R}andom forests.
\newblock \emph{Machine learning}, 45\penalty0 (1):\penalty0 5--32, 2001.

\bibitem[Conti and O'Hagan(2010)]{conti2010bayesian}
Stefano Conti and Anthony O'Hagan.
\newblock Bayesian emulation of complex multi-output and dynamic computer
  models.
\newblock \emph{Journal of statistical planning and inference}, 140\penalty0
  (3):\penalty0 640--651, 2010.

\bibitem[Farah et~al.(2014)Farah, Birrell, Conti, and
  Angelis]{farah2014bayesian}
Marian Farah, Paul Birrell, Stefano Conti, and Daniela~De Angelis.
\newblock Bayesian emulation and calibration of a dynamic epidemic model for
  {A/H1N1} influenza.
\newblock \emph{Journal of the American Statistical Association}, 109\penalty0
  (508):\penalty0 1398--1411, 2014.

\bibitem[Fricker et~al.(2013)Fricker, Oakley, and
  Urban]{fricker2013multivariate}
Thomas~E Fricker, Jeremy~E Oakley, and Nathan~M Urban.
\newblock Multivariate {G}aussian process emulators with nonseparable
  covariance structures.
\newblock \emph{Technometrics}, 55\penalty0 (1):\penalty0 47--56, 2013.

\bibitem[Gelfand et~al.(2004)Gelfand, Schmidt, Banerjee, and
  Sirmans]{gelfand2004nonstationary}
Alan~E Gelfand, Alexandra~M Schmidt, Sudipto Banerjee, and C.~F. Sirmans.
\newblock Nonstationary multivariate process modeling through spatially varying
  coregionalization.
\newblock \emph{Test}, 13\penalty0 (2):\penalty0 263--312, 2004.

\bibitem[Gelfand et~al.(2010)Gelfand, Diggle, Guttorp, and
  Fuentes]{gelfand2010handbook}
Alan~E Gelfand, Peter Diggle, Peter Guttorp, and Montserrat Fuentes.
\newblock \emph{Handbook of spatial statistics}.
\newblock CRC Press, 2010.

\bibitem[Gu(2019)]{gu2018jointly}
Mengyang Gu.
\newblock Jointly robust prior for {G}aussian stochastic process in emulation,
  calibration and variable selection.
\newblock \emph{Bayesian Analysis}, 14\penalty0 (3):\penalty0 857--885, 2019.

\bibitem[Gu and Berger(2016)]{Gu2016PPGaSP}
Mengyang Gu and James~O Berger.
\newblock Parallel partial {G}aussian process emulation for computer models
  with massive output.
\newblock \emph{Annals of Applied Statistics}, 10\penalty0 (3):\penalty0
  1317--1347, 2016.

\bibitem[Gu et~al.(2018{\natexlab{a}})Gu, Wang, and Berger]{Gu2018robustness}
Mengyang Gu, Xiaojing Wang, and James~O Berger.
\newblock Robust {G}aussian stochastic process emulation.
\newblock \emph{Annals of Statistics}, 46\penalty0 (6A):\penalty0 3038--3066,
  2018{\natexlab{a}}.

\bibitem[Gu et~al.(2018{\natexlab{b}})Gu, Xie, and Wang]{gu2018theoretical}
Mengyang Gu, Fangzheng Xie, and Long Wang.
\newblock A theoretical framework of the scaled {G}aussian stochastic process
  in prediction and calibration.
\newblock \emph{arXiv preprint arXiv:1807.03829}, 2018{\natexlab{b}}.

\bibitem[Gu et~al.(2019)Gu, Palomo, and Berger]{gu2018robustgasp}
Mengyang Gu, Jes{\'u}s Palomo, and James~O Berger.
\newblock Robust{G}a{SP}: Robust {G}aussian stochastic process emulation in
  {R}.
\newblock \emph{The R Journal}, 11\penalty0 (1), June 2019.

\bibitem[Hartikainen and Sarkka(2010)]{hartikainen2010kalman}
Jouni Hartikainen and Simo Sarkka.
\newblock Kalman filtering and smoothing solutions to temporal gaussian process
  regression models.
\newblock In \emph{Machine Learning for Signal Processing (MLSP), 2010 IEEE
  International Workshop for Signal Processing}, pages 379--384. IEEE, 2010.

\bibitem[Higdon et~al.(2008)Higdon, Gattiker, Williams, and
  Rightley]{higdon2008computer}
Dave Higdon, James Gattiker, Brian Williams, and Maria Rightley.
\newblock Computer model calibration using high-dimensional output.
\newblock \emph{Journal of the American Statistical Association}, 103\penalty0
  (482):\penalty0 570--583, 2008.

\bibitem[Hoff(2013)]{hoff2013bayesian}
Peter~D Hoff.
\newblock Bayesian analysis of matrix data with rstiefel.
\newblock \emph{arXiv preprint arXiv:1304.3673}, 2013.

\bibitem[Hoffmann(2007)]{hoffmann2007kernel}
Heiko Hoffmann.
\newblock Kernel {PCA} for novelty detection.
\newblock \emph{Pattern recognition}, 40\penalty0 (3):\penalty0 863--874, 2007.

\bibitem[Jolliffe(2011)]{jolliffe2011principal}
Ian Jolliffe.
\newblock \emph{Principal component analysis}.
\newblock Springer, 2011.

\bibitem[Kokiopoulou et~al.(2011)Kokiopoulou, Chen, and
  Saad]{kokiopoulou2011trace}
Effrosini Kokiopoulou, Jie Chen, and Yousef Saad.
\newblock Trace optimization and eigenproblems in dimension reduction methods.
\newblock \emph{Numerical Linear Algebra with Applications}, 18\penalty0
  (3):\penalty0 565--602, 2011.

\bibitem[Lam and Yao(2012)]{lam2012factor}
Clifford Lam and Qiwei Yao.
\newblock Factor modeling for high-dimensional time series: inference for the
  number of factors.
\newblock \emph{The Annals of Statistics}, 40\penalty0 (2):\penalty0 694--726,
  2012.

\bibitem[Lam et~al.(2011)Lam, Yao, and Bathia]{lam2011estimation}
Clifford Lam, Qiwei Yao, and Neil Bathia.
\newblock Estimation of latent factors for high-dimensional time series.
\newblock \emph{Biometrika}, 98\penalty0 (4):\penalty0 901--918, 2011.

\bibitem[Liaw and Wiener(2002)]{liaw2002classification}
Andy Liaw and Matthew Wiener.
\newblock Classification and regression by randomforest.
\newblock \emph{R news}, 2\penalty0 (3):\penalty0 18--22, 2002.

\bibitem[Liu and West(2009)]{liu2009dynamic}
Fei Liu and Mike West.
\newblock A dynamic modelling strategy for {B}ayesian computer model emulation.
\newblock \emph{Bayesian Analysis}, 4\penalty0 (2):\penalty0 393--411, 2009.

\bibitem[Mika et~al.(1999)Mika, Sch{\"o}lkopf, Smola, M{\"u}ller, Scholz, and
  R{\"a}tsch]{mika1999kernel}
Sebastian Mika, Bernhard Sch{\"o}lkopf, Alex~J Smola, Klaus-Robert M{\"u}ller,
  Matthias Scholz, and Gunnar R{\"a}tsch.
\newblock Kernel {PCA} and de-noising in feature spaces.
\newblock In \emph{Advances in neural information processing systems}, pages
  536--542, 1999.

\bibitem[Nocedal(1980)]{nocedal1980updating}
Jorge Nocedal.
\newblock Updating quasi-newton matrices with limited storage.
\newblock \emph{Mathematics of computation}, 35\penalty0 (151):\penalty0
  773--782, 1980.

\bibitem[Oakley(1999)]{oakley1999bayesian}
Jeremy Oakley.
\newblock \emph{Bayesian uncertainty analysis for complex computer codes}.
\newblock PhD thesis, University of Sheffield, 1999.

\bibitem[Overstall and Woods(2016)]{overstall2016multivariate}
Antony~M Overstall and David~C Woods.
\newblock Multivariate emulation of computer simulators: model selection and
  diagnostics with application to a humanitarian relief model.
\newblock \emph{Journal of the Royal Statistical Society: Series C (Applied
  Statistics)}, 65\penalty0 (4):\penalty0 483--505, 2016.

\bibitem[Paulo et~al.(2012)Paulo, Garc{\'\i}a-Donato, and
  Palomo]{paulo2012calibration}
Rui Paulo, Gonzalo Garc{\'\i}a-Donato, and Jes{\'u}s Palomo.
\newblock Calibration of computer models with multivariate output.
\newblock \emph{Computational Statistics and Data Analysis}, 56\penalty0
  (12):\penalty0 3959--3974, 2012.

\bibitem[Rasmussen(2006)]{rasmussen2006gaussian}
Carl~Edward Rasmussen.
\newblock \emph{Gaussian {P}rocesses for {M}achine {L}earning}.
\newblock MIT Press, 2006.

\bibitem[Saad(1992)]{saad1992numerical}
Youcef Saad.
\newblock \emph{Numerical {M}ethods for {L}arge {E}igenvalue {P}roblems}.
\newblock Manchester University Press, 1992.

\bibitem[Sacks et~al.(1989)Sacks, Welch, Mitchell, and Wynn]{sacks1989design}
Jerome Sacks, William~J Welch, Toby~J Mitchell, and Henry~P Wynn.
\newblock Design and analysis of computer experiments.
\newblock \emph{Statistical Science}, 4\penalty0 (4):\penalty0 409--423, 1989.

\bibitem[Sch{\"o}lkopf et~al.(1998)Sch{\"o}lkopf, Smola, and
  M{\"u}ller]{scholkopf1998nonlinear}
Bernhard Sch{\"o}lkopf, Alexander Smola, and Klaus-Robert M{\"u}ller.
\newblock Nonlinear component analysis as a kernel eigenvalue problem.
\newblock \emph{Neural Computation}, 10\penalty0 (5):\penalty0 1299--1319,
  1998.

\bibitem[Seeger et~al.(2005)Seeger, Teh, and Jordan]{seeger2005semiparametric}
Matthias Seeger, Yee-Whye Teh, and Michael Jordan.
\newblock Semiparametric latent factor models.
\newblock Technical report, 2005.

\bibitem[Shen(2017)]{SShen2017}
Samuel~S.P. Shen.
\newblock \emph{R programming for climate data analysis and visualization:
  computing and plotting for NOAA data applications}.
\newblock San Diego State University, San Diego, USA., 2017.

\bibitem[Taylor and Lane(2004)]{taylor2004development}
B~Taylor and A~Lane.
\newblock Development of a novel family of military campaign simulation models.
\newblock \emph{Journal of the Operational Research Society}, 55\penalty0
  (4):\penalty0 333--339, 2004.

\bibitem[Tipping and Bishop(1999)]{tipping1999probabilistic}
Michael~E Tipping and Christopher~M Bishop.
\newblock Probabilistic principal component analysis.
\newblock \emph{Journal of the Royal Statistical Society: Series B (Statistical
  Methodology)}, 61\penalty0 (3):\penalty0 611--622, 1999.

\bibitem[Wen and Yin(2013)]{wen2013feasible}
Zaiwen Wen and Wotao Yin.
\newblock A feasible method for optimization with orthogonality constraints.
\newblock \emph{Mathematical Programming}, 142\penalty0 (1-2):\penalty0
  397--434, 2013.

\bibitem[West(2003)]{west2003bayes7}
M.~West.
\newblock Bayesian factor regression models in the \lq\lq large p, small n"
  paradigm.
\newblock In J.~M. Bernardo, M.~J. Bayarri, J.~O. Berger, A.~P. David,
  D.~Heckerman, A.~F.~M. Smith, and M.~West, editors, \emph{Bayesian Statistics
  7}, pages 723--732. Oxford University Press, 2003.
\newblock URL \url{http://ftp.isds.duke.edu/WorkingPapers/02-12.html}.

\end{thebibliography}

\end{document}